\documentclass[journal,transmag]{IEEEtran}
%


\usepackage{amsmath,amsthm,amssymb,amscd,graphicx,latexsym,amsfonts,mathrsfs}
\usepackage{graphicx}
\usepackage{color}

\usepackage{enumerate}
\usepackage{dsfont}
\usepackage{chemarrow,stackrel}
\usepackage[numbers,sort&compress]{natbib}

\newtheorem{theorem}{Theorem}
\newtheorem{lemma}{Lemma}
\newtheorem{definition}{Definition}
\newtheorem{assumption}{Assumption}
\newtheorem{remark}{Remark}



\def\EE{{\mathbb E}}
\def\NN{{\mathbb N}}
\def\PP{{\mathbb P}}
\def\RR{{\mathbb R}}

\def\d{{\mathrm d}}
\def\dd{{\mathscr D}}
\def\ff{{\mathcal F}}
\def\lf{{\mathcal L}}

\def\lagr{{\mathcal L}}  
\def\ll{{\mathscr L}}

\def\tll{{\tilde{\mathscr L}}}
\def\mm{ { \mathcal M} }

\def\al{\alpha}
\def\bt{\beta}

\def\gm{\gamma}

\def\lmd{\lambda}

\def\bA{\bar{A}}

\def\bal{\bar{\alpha}}

\def\bB{\bar{B}}

\def\bc{\bar{c}}

\def\barf{\bar{F}}
\def\barf{\bar{f}}
\def\barg{\bar{g}}
\def\bH{\bar{H}}

\def\bL{\bar{L}}

\def\bn{\bar{n}}

\def\bQ{ \bar{Q} }
\def\barr{\bar{r}}

\def\bart{\bar{t}}
\def\btau{\bar{\tau}}

\def\bart{\bar{t}}
\def\bu{\bar{u}}

\def\bY{\bar{Y}}

\def\hB{\hat{B}}

\def\hK{\hat{K}}

\def\hm{\hat{m}}
\def\hatq{\widehat{q}}

\def\htau{\widehat{\tau}}

\def\oW{\overline{W}}
\def\oDt{\overline{\Delta t}}

\def\uDt{\underline{\Delta t}}
%
\def\tWa{\tilde{W}^{\mu}}
\def\btauK{\overline{\tau}_K}   

\def\uDt{\underline{\Delta t}}

\def\tA{\tilde{A}}
\def\tal{\tilde{\alpha}}

\def\tbt{\tilde{\beta}}

\def\tc{\tilde{c}}

\def\tP{\tilde{P}}

\def\tQ{\tilde{Q}}

\def\tT{\widetilde{T}}
\def\ttau{\widetilde{\tau}}

\def\tU{\tilde{U}}

\def\tV{\tilde{V}}
\def\tv{\widetilde{v}}

\def\tW{\tilde{W}}

\hyphenation{op-tical net-works semi-conduc-tor}

\begin{document}
%
\title{Stabilization  of cyber-physical  systems: a foundational theory of  computer-mediated
control systems}

\vspace{-1cm}

\author{
\\
\\
\\
\IEEEauthorblockN{Lirong Huang}
\IEEEauthorblockA{Guangzhou 510320, Guangdong, China}
\thanks{Manuscript received Month Day, 2022. 
Corresponding author: L. Huang (email:lrhuang@aliyun.com).}}

%



\IEEEtitleabstractindextext{%
\begin{abstract}

This paper presents the cyber-physcial model of a computer-mediated control system that is a seamless, fully synergistic
integration of the physical system and the cyber system, which provides a systematic framework for synthesis of cyber-physical systems (CPSs). 
In our proposed framework, we establish a Lyapunov stabilty theory for   synthesis of CPSs and apply it to sampled-data control systems, which  are typically  synonymous with computer-mediated control systems. By our CPS approach, we not only develop stability criteria for sampled-data control systems but also reveal the equivalence and inherent relationship between the  two main design methods (viz. controller emulation and discrete-time approximation) in the literature. As application of our established theory, we study feedback stabilization of linear sampled-data stochastic systems and propose a control design method. Illustrative examples show that our proposed method has improved the existing results. Our established   theory   of synthetic CPSs   lays  a theoretic foundation for computer-mediated control systems and provokes  many open and interesting problems for future work.
\end{abstract}

\begin{IEEEkeywords}
cyber-physical systems; exponential stability;  feedback stabilization;    Lyapunov method; sampled-data control; stochastic impulsive differential equations.
\end{IEEEkeywords}}

\maketitle

\IEEEdisplaynontitleabstractindextext

%
\IEEEpeerreviewmaketitle

\section{Introduction}        \label{sec;introduction}

Feedback mechanisms were discovered and exploited at all levels in nature, which are crucial to homeostasis and life \cite{astrom2014,wiener1961}. As a technology, feedback control can be found in many examples from ancient times. In the modern era, it was fundamental to the industrial evolution that James Watt successfully adapted the centrifugal governor for  the steam engine and, in the later designs, the governor
became an integral part of all steam engines.  Theorectic research on the mechanical systems of governors started with the classical paper of Maxwell that placed stability at the core of his analysis of feedback mechanisms \cite{maxwell1868}. 
Stability analysis and feedback stabilization of dynamical systems are at the core of systems and control theory \cite{astrom1997,astrom2014,boyd1994,huang2010PhD,khalil2002,khas2012,liu1988,liu1992,mao2007book,stein2003,yang2001}. As is well known, the Lyapunov method is an efficient and powerful tool for stability analysis and synthesis of dynamical systems. The investigation of Lyapunov method has been so extensive and intensive  that the Lyapunov-based results can be found in an enormous literature. Lyapunov-type theorems have been developed for stability analysis and   application to feedback stabilization of myriad  systems such as  discrete-time systems \cite{huang2015},   large-scale systems \cite{liu1992},   time-delay systems \cite{fridman2008}, stochastic systems \cite{huang2009a} and a variety of stochastic hybrid systems \cite{teel2014}. As a matter of fact,  Lyapunov-type stability theory finds an extremely wide range of applications including those in numerical analysis \cite{huang_partI} and system identification \cite{huang2012}.

Practically all control systems that are implemented today are based on computer control, which contain both continuous-time signals and sampled, or discrete-time, signals. Such systems have traditionally been called sampled-data systems and have motivated the study of sampled-data control systems \cite{astrom1997,nesic2001}.  
There is a wealth of  impressive results on sampled-data control systems along two main approaches, see, e.g.,  \cite{astrom1997,fridman2004,fridman2010,nagh2008,nesic2001,nesic2004,nesic2006,nghiem2012,oishil2010,seuret2012} and the references therein. The first starts with a designed continuous controller and focuses on discretizing the controller on a sampler and zero-order-hold (ZOH) device, which employs the strategy of  controller emulation and is called the process-oriented view.  The second disccretizes a continuous plant given implementation-dependent sampling times and designs a controller  for the discretized plant, which utilizes some approximate discrete-time model  for controller design and is called the computer-oriented view.  There is  another approach based on the hybrid/impulsive modelling of sampled-data systems which considers the sampled state a pure jump process, see Remark \ref{remark-hybrid} below as well as   \cite{fridman2004,nagh2008,rios2016}. 
Over the recent years, sampled-data control of stochastic systems has also been studied \cite{mao2013,mao2014,you2015} since stochastic modelling has come to play an important role in engineering and science \cite{huang2010PhD,huang2016,mao2007book,scotton2013,teel2014}. 

A new and general class of stochastic impulsive differential equations (SiDEs) is formulated to serve as a canonic form of cyber-physical systems (CPSs) and a foundational theory of   the CPSs is constructed  in \cite{huang_partI}. 
The canonic form of CPSs is composed of  physical and cyber subsystems and it is distinct from the  impulsive systems in the literature \cite{huang202x,samoilenko1995,teel2014,yang2001}, which has been highlighted in \cite{huang_partI}.  In this paper, we study feedback stabilization of the CPSs, that is, synthesis of CPSs for stability of the controlled CPSs.  the results  in \cite{huang_partI} do not apply to such synthesized systems. For this purpose, we construct a general class of SiDEs  for  synthesis of CPSs so that  the states of the physical and the cyber subsystems can both be utilized in a feedback mechanism to control the underlying physical processes.  As a theoretic foundation, we  develop a Lyapunov stability theory for  the synthetic CPSs. Our proposed CPS  theory   has a  very wide range of applications including sampled-data control systems.   Sampled-data control systems  have   an  exemplary structure of CPSs  \cite[Figure 1]{lee2010} and can typically be expressed in our canonic form of synthetic  CPSs. Applying the Lyapunov stability theory, we study stability of sampled-data control systems and address the key questions in the two main approaches, respectively.  By our CPS approach, we not only develop  stability criteria for sampled-data control systems but also disclose the equivalence and intrinsic relationship between the two main design methods in the literature. As application of our established theory, we study feedback stabilization of linear sampled-data stochastic systems and present a control design method. Illustrative examples are given to verify that our proposed method has improved the exsting results significantly. Our proposed canonic form and theory of synthetic CPSs  construct a foundational theory of computer-mediated control systems. In this paper, we initiate a system science for CPSs that arouses many interesting and challenging problems of computer-mediated control systems.

\section{A general class of SiDEs for synthesis of CPSs}  \label{sec:generalSiDEs}

This   paper, unless otherwise specified,   employs the
following notation.  Denote by  $( \Omega , \mathcal{F}, \{ \mathcal{F}_t
\}_{t \ge 0}, \PP ) $ a complete probability space with a
filtration $ \{ \mathcal{F}_t \}_{t \ge 0} $ satisfying the usual
conditions \cite{mao2007book}
and by ${\mathbb{E}}[\cdot]$   the expectation
operator with respect to the probability measure. Let $B(t)= \begin{bmatrix}
B_1(t) & \cdots & B_m(t) \end{bmatrix}^T$ be an $m$-dimensional Brownian motion
defined on the probability space. If $x, y$ are real numbers, then
$x \vee y$ (resp.  $x \wedge y$)  denotes the maximum (resp.   minimum) of $x$ and $y$.  Denote by $A^T$ the transpose of  a vector or a matrix $A$. If $P$ is a square matrix, $P>0$ (resp. $P<0$) means that P is a symmetric positive (resp. negative) definite matrix of appropriate dimensions 
while $ P \ge 0$ (resp. $P \le 0$) is a symmetric positive (resp. negative) semidefinite matrix. Let  $\lambda_M( \cdot) $ and $\lambda_m( \cdot)$ be  a matrix's  eigenvalues   with the
maximum  and   the minimum real parts, respectively, and $| \cdot |$  the
Euclidean norm of a vector and the trace (or Frobenius) norm of a matrix. 
Denote by $I_n$ the $n \times n$ identity matrix and   
 by $0_{n \times m}$ the $n \times m$ the zero matrix, or, simply, by $0$  the zero matrix of appropriate  dimensions.  
Let   $C^{2, 1} (\RR^n \times \RR_+; \RR_+)$ be the family of   all nonnegative  functions $V(x, t)$ on $\RR^n \times \RR_+$ that are continuously twice differentiable in $x$ and once in $t$, and $C^2 (\RR^n ; \RR_+)$  the special class of $C^{2, 1} (\RR^n \times \RR_+; \RR_+)$ that is independent of $t$.  
Denote by $C  ([a,b); \RR^n)$ the space of all right continuous $\RR^n$-valued functions $\varphi$ defined on $[a, b)$ with a norm $|| \varphi || = \sup_{a \le \theta < b} | \varphi (\theta) |< \infty$ ,  by $\lf_{\ff_t}^p ( [a, b); \RR^n)$ with $p>0$ the family of all $\ff_t$-measurable $C  ([a,b); \RR^n)$-valued random variables $\varphi $ such that $\sup_{a \le t < b} \EE | \varphi (t) |^p < \infty$   
and by $\mm^p ([a, b]; \RR^n)$   the family of $\RR^n$-valued adapted process $\{ \varphi (t): a \le t \le b \}$ such that $\EE \int_a^b |\varphi (t)|^p \d t < \infty$. 
 Let $\NN  $ be the set of all natural numers and $\Xi^m_\NN $ be the set of all independent and identically distributed sequences $\{ \xi (k) \}_{k \in \NN} $   with $\xi (k)= \begin{bmatrix}
\xi_1(k) & \cdots & \xi_m(k) \end{bmatrix}^T$ and $ \xi_j(k) $ obeying standard Gaussian distribution for $j =1, 2, \cdots, m$.  Sequence 
$\{t_k\}_{k \in \NN}$ with $ t_1 > t_0 :=0$ is  strictly increasing   and satisfies $0<\underline{\Delta t} :=\inf_{k \in \NN} \{ t_k - t_{k-1} \} \le \overline{\Delta t} :=\sup_{k \in \NN} \{ t_k -t_{k-1} \} < \infty$  and hence $t_k \to \infty$ as $k \to \infty$.  Let   $t_\ast = \sup \{ t_k: t \ge t_k, k \ge 0 \}$ for all  $ t \ge 0$ and   $\varphi_t = \{ \varphi (\theta): t_\ast   \le \theta \le t \}$  for all $ \varphi \in C  ( [ t_{k-1}, t_k ); \RR^n)$ and $t \in [t_{k-1}, t_k)$. 

Let us consider  the following
  stochastic impulsive system described by SiDEs  
\begin{subequations}  \label{SiDE-xy}
\begin{align}
& \mathrm{d}  x(t) = f(x(t), y(t), t)\mathrm{d}t + g(x(t), y(t), t )\mathrm{d} B(t)   \label{SDE_x} \\
& \hspace{5cm} \;\; {}  t \in [0, \infty) \nonumber   \\
& \d y(t)  = \tilde{f} (x(t), y(t), t) \d t + \tilde{g}( x(t), y(t), t) \d B (t)   \label{SDE_y}  \\
& \hspace{4.8cm} {} t \in [ 0, \infty) \setminus \{t_k\}_{k \in \NN}  \nonumber   \\
&\tilde{\Delta} (x_{t_k^-}, y_{t_k^-},  k ) := y(t_k) - y(t_k^-)  \nonumber  \\
 & \; \; {}       = \tilde{h}_f (x_{t_k^-}, y_{t_k^-}, k)     +  \bar{h}_g( x_{t_k^-}, y_{t_k^-},    k ) \bar{ \xi } (k )   \quad  k \in \NN    \label{Impulse_y} 
\end{align}
\end{subequations}
 with initial values   $x(0) =x_0 \in \RR^n $ 
and $ y(0)  = y_0 \in \RR^q$, 
where  measurement noise  $\bar{\xi} \in \Xi^n_\NN $ with $ \bar{\xi} (k)$ being independent of $\{ x(t), y(t), B(t):  0 \le t < t_k  \}$ for all $k \in \NN$;  $ f : \RR^n \times \RR^q \times \RR_+ \to \RR^n $, $g : \RR^n \times \RR^q \times \RR_+ \to \RR^{n \times m} $,  
$\tilde{f} : \RR^n \times \RR^q \times \RR_+ \to \RR^q $, $\tilde{g} : \RR^n \times \RR^q \times \RR_+ \to \RR^{q \times q} $, 
$\tilde{h}_f : C  ([t_{k-1}, t_k); \RR^n) \times C  ([t_{k-1}, t_k); \RR^q)  \times \NN \to \RR^q $  and $\bar{h}_g :  C  ([t_{k-1}, t_k); \RR^n) \times C  ([t_{k-1}, t_k); \RR^q)  \times \NN \to \RR^{q \times n} $ are measurable functions that obey $f (0, 0, t) =0 ,    g(0, 0,  t)=0,    
\tilde{f} (0, 0, t) =0,      \tilde{g}(0,0, t)=0,    
 \tilde{h}_f(0,0, k) =0,          \bar{h}_g (0,0, k) =0  $ 
for all $t \in \RR_+$ and $k \in \NN$ and they satisfy the local Lipschitz  condition and the linear growth condition  specified as Assumption \ref{localLipschitz} and Assumption \ref{lineargrowth}, respectively.
\begin{assumption}   \label{localLipschitz}
  For every integer $\bn \ge 1$, there is a constant $ L_{\bn}> 0$  such that 
\begin{multline}   
    |f(x, y, t) - f(\bar{x}, \bar{y}, t) |^2  \vee |g(x, y, t)- g(\bar{x}, \bar{y}, t)|^2        \\
 \vee  | \tilde{f}(x, y, t) - \tilde{f}(\bar{x}, \bar{y}, t) |^2 \vee | \tilde{g}(x, y, t) - \tilde{g}(\bar{x}, \bar{y}, t) |^2         \\          
           \le L_{\bn}  ( |x- \bar{x}| \vee | y - \bar{y} | )^2  \label{localLipschitz-x}
\end{multline}
for all $(x, y, \bar{x}, \bar{y} ) \in \RR^n \times \RR^q \times \RR^n \times \RR^q$ with $|x| \vee |y| \vee |\bar{x} | \vee | \bar{y}| \le \bn$ and  $t \in \RR_+$;   and  there is   a constant $\tilde{L}_{\bn} >0$ such that
\begin{multline} 
  | \tilde{h}_f(x_{t_k^-}, y_{t_k^-}, k) - \tilde{h}_f(\tilde{x}_{t_k^-}, \tilde{y}_{t_k^-}, k) |^2          \\
   {}     \vee | \bar{h}_g (x_{t_k^-}, y_{t_k^-}, k) - \bar{h}_g ( \tilde{x}_{t_k^-}, \tilde{y}_{t_k^-}, k) |^2       \\
   \le \tilde{L}_{\bn} (   || x_{t_k^-} - \tilde{x}_{t_k^-} || \vee  || y_{t_k^-} - \tilde{y}_{t_k^-} ||  )^2  \label{localLipschitz-y}
\end{multline}
for all those  $(x_{t_k^-}, y_{t_k^-}, \tilde{x}_{t_k^-}, \tilde{y}_{t_k^-} )  \in C  ([t_{k-1}, t_k); \RR^n) \times C  ([t_{k-1}, t_k); \RR^q)   \times C  ([t_{k-1}, t_k); \RR^n) \times C  ([t_{k-1}, t_k); \RR^q) $ with $|| x_{t_k^-}|| \vee || y_{t_k^-}|| \vee || \tilde{x}_{t_k^-}|| \vee || \tilde{y}_{t_k^-} || \le \bn$  and $k\in \NN$.
\end{assumption}
\begin{assumption}  \label{lineargrowth}
  There is a  constant $L>0$   such that 
\begin{multline}   
   |f(x, y, t)  |^2 \vee |g(x, y, t)|^2 \vee  | \tilde{f}(x, y, t)   |^2 \vee | \tilde{g}(x, y, t)   |^2 \\
        \le L  ( |x|   \vee  | y   | )^2   \label{lineargrowth-x}
\end{multline}
for all $(x, y ) \in \RR^n \times \RR^q$ and $t \in \RR_+$;  and  there is a   constant  $\tilde{L}  >0$   such that
\begin{multline} 
      | \tilde{h}_f(x_{t_k^-}, y_{t_k^-}, k)   |^2      
        \vee | \bar{h}_g (x_{t_k^-}, y_{t_k^-}, k)   |^2        \\
    \le \tilde{L}   (   || x_{t_k^-}  ||  \vee  || y_{t_k^-}  ||  )^2 
 \label{lineargrowth-y}
\end{multline}
for  all  $(x_{t_k^-}, y_{t_k^-})  \in C  ([t_{k-1}, t_k); \RR^n) \times C  ([t_{k-1}, t_k); \RR^q)$  and $k\in \NN$.
\end{assumption}

SiDE (\ref{SiDE-xy}) is construced to serves as the canonic form  for synthesis of CPSs  in which both $x(t)$ and $y(t)$ can be utilized in some feedback mechanism to steer the   physical subsystem. 
Actually,  CPS \cite[Eq.(2.1)]{huang_partI}   is a particular case of SiDE (\ref{SiDE-xy})   
as the impluses on    subsystem $x(t)$ and the simulation sequence are omitted for the sake of simplicity. 
The canonic form  (\ref{SiDE-xy})  of synthetic CPSs   exploits our knowledge of both the physical and the cyber sides to control the underlying physical  processes. It has a  wide range of applications, which, for example,  can represent the CPS dynamics for not only  feedback stabilization of sampled-data systems but also observer-based control of dynamical systems with impulse effects  such as a robot model in \cite{grizzle2007}. The former is studied  in this paper and the latter  among future work.

Clearly, the trivial  solution is an equilibrium of  system (\ref{SiDE-xy}).  
For a function $V \in C^{2,1} (\RR^n \times \RR^q \times \RR_+; \RR_+)$,  the infinitesimal generator $\ll V: \RR^n \times \RR^q  \times \RR_+ \to \RR$ associated with system (\ref{SDE_x}) is defined as
\begin{multline}     
  \ll V (x, y, t) = V_t (x,  t) + V_x (x,  t) f(x, y, t)     \\
 + \frac{1}{2}   {\rm trace} \left[ g^T (x, y, t) V_{xx} (x,  t) g(x, y, t) \right], 
      \label{LV}
\end{multline}
where 
$ V_t (x,  t) =  \frac{ \partial V(x,  t)} { \partial t}$,   
    $ V_{xx} (x, t) =  \begin{bmatrix}  \frac{ \partial^2 V(x,  t)}{ \partial x_i \, \partial x_j}   \end{bmatrix}_{n \times n}$, 
 $V_x (x,  t) = \begin{bmatrix}  \frac{ \partial V(x,  t)}{ \partial x_1} & \cdots & \frac{ \partial V(x,   t)}{ \partial x_n}   \end{bmatrix}$.
%
Similarly, for a function $\tilde{V} \in C^{2,1} (\RR^q \times \RR_+; \RR_+)$, one can define generator  $\tll \tilde{V}:  \RR^n \times \RR^q \times \RR_+ \to \RR$ associated with system (\ref{SDE_y}) as
\begin{multline}    
  \tll \tilde{V} (x, y, t) = \tilde{V}_t (y, t) + \tilde{V}_y (y, t) \tilde{f}(x, y, t)    \\
 + \frac{1}{2} {\rm trace} \left[ \tilde{g}^T (x, y,  t) \tilde{V}_{yy} (y, t) \tilde{g}(x, y, t) \right]. 
     \label{tLtV}
\end{multline}

Let $z(t) = [ x^T(t) \;  y^T(t) ]^T \in \RR^{n+q}$,  $C = [  I_n  \; 0_{n \times q}  ]$ and 
$ D =  [ 0_{ q \times n}  \;  I_{q}  ]$, then $x(t) = C z(t)$ and $y(t) = D z(t)$ for all $t \ge 0$.
SiDE (\ref{SiDE-xy})  can be written in a compact form  
\begin{subequations}    \label{Compact-z}
\begin{align}
  &  \d z(t) = F ( z(t), t) \d t + G (z(t), t) \d B(t), \quad   t \neq t_k     \label{SDE_z} \\
  &  \Delta_z (z_{t_k^-},     \xi (k-1), k  ) :=  z (t_k) - z (t_k^-)    \nonumber  \\
&     \;\; {} = H_F (z_{t_k^-},   k)   + \bH_G  (z_{t_k^-},    k ) \bar{ \xi } (k ) , \qquad \quad  k \in \NN \label{Impulse_z}
\end{align}
\end{subequations}
 with initial data $z(0) =z_0 =[ x_0^T  \; y_0^T ]^T$, where functions $F: \RR^{n+q} \times \RR_+ \to \RR^{n+q}$,  $G: \RR^{n+q} \times \RR_+ \to \RR^{(n+q) \times m}$, $H_F: C  ([t_{k-1}, t_k);\RR^{n+q})   \times \NN \to \RR^{n+q}$ and $\bH_G:C  ([t_{k-1}, t_k);\RR^{n+q})   \times \NN \to \RR^{(n+q) \times n}$ are given as
\begin{eqnarray*}
&&   F(z, t) = \begin{bmatrix}  f\left( C z, Dz, t \right)  \\
                      \tilde{f} \left( Cz  , D z, t \right) \end{bmatrix},  \;\;
 G(z, t) =\begin{bmatrix}  g\left( C z, Dz, t \right)     \\
                      \tilde{g} \left( Cz, D z , t  \right) \end{bmatrix},      \\
&&   H_F (z_{t_k^-},    k) = \begin{bmatrix}      0_{n \times 1}   \\
                      \tilde{h}_f \left( Cz_{t_k^-}, Dz_{t_k^-}, k \right)    \end{bmatrix},    \\
&&   \bH_G (z_{t_k^-},    k) = \begin{bmatrix}      0_{n \times n}   \\
                      \bar{h}_g \left( Cz_{t_k^-}, Dz_{t_k^-}, k \right)    \end{bmatrix}.
\end{eqnarray*}

Let us fix, for simplicity only, any $z(0) =z_0 = [ x_0^T  \;  y_0^T ]^T \in \RR^{n+q}  $. 
Obviously,  these functions obey $F(0, t)=0 $, $G(0, t)=0$, $H_F (0, k) =0$ and $\bH_G(0, k)=0$  for all $t \in \RR_+$ and $k \in \NN$. And they satisfy the local Lipschitz  condition and the linear growth condition, that is,  there is a constant $L_{z, \bn} >0$ for every integer $\bn \ge 1$ such that
\begin{multline} \label{localLipschitz-z}
     | F(z, t) - F(\tilde{z}, t) |^2 \vee | G(z, t) - G(\tilde{z}, t) |^2                \le L_{z, \bn}    |z- \tilde{z}|^2       \\
   | H_F(z_{t_k^-}, k) - H_F(\tilde{z}_{t_k^-}, k) |^2    \vee   | \bH_G (z_{t_k^-}, k) - \bH_G (\tilde{z}_{t_k^-}, k) |^2    \\
  \le L_{z, \bn}   ||z_{t_k^-}- \tilde{z}_{t_k^-} || ^2   
\end{multline}
 for all  $(z, \tilde{z}, z_{t_k^-}, \tilde{z}_{t_k^-}) \in \RR^{n+q} \times  \RR^{n+q} \times  C  ([t_{k-1}, t_k); \RR^{n+q}) \times C  ( [t_{k-1}, t_k); \RR^{n+q)} )$ with $|z| \vee | \tilde{z} | \vee || z_{t_k^-}|| \vee || \tilde{z}_{t_k^-} || \le \bn$,  $t \in \RR_+$ and $k \in \NN$; there is a  constant $L_z > 0$   such that
\begin{eqnarray} \label{lineargrowth-z} 
&&  | F(z, t)   |^2 \vee | G(z, t)  |^2     
 \le     L_z     |z |^2    \nonumber  \\   
&&  | H_F(z_{t_k^-}, k)   |^2   \vee  | \bH_G (z_{t_k^-}, k)   |^2  
  \le     L_z      ||z_{t_k^-}  || ^2   
\end{eqnarray}
 for all  $(z,   z_{t_k^-}) \in \RR^{n+q}     \times  C  ([t_{k-1}, t_k); \RR^{n+q})  $,  $t \in \RR_+$ and $k \in \NN$.
They are exactly the compact forms of  Assumption \ref{localLipschitz} and Assumption \ref{lineargrowth}, respectively. 
With Assumptions \ref{localLipschitz}-\ref{lineargrowth},  we have the existence and uniqueness of solutions to SiDE  (\ref{Compact-z}).
%
\begin{lemma}  \label{existence_n_uniqueness}
     Under Assumptions   \ref{localLipschitz}-\ref{lineargrowth}, there exists a unique (right-continuous) solution   to SiDE (\ref{Compact-z}),   denoted by 
$
z(t) = [ x(t)^T  \; y(t)^T ]^T = z(t; z_0) = [ x(t; x_0, y_0)^T  \; y(t; x_0, y_0)^T ]^T
$,
  and the solution belongs to  $\mm^2 ([0, T]; \RR^{n+q})$ for all $T \ge t \ge 0$, where $x(t) $ and $y(t)$ are continuous and    right-continuous processes, respectively.
\end{lemma}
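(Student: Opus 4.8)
The plan is to build the solution \emph{interval by interval} between consecutive impulse times and then paste the pieces together, using the fact (built into the notation) that $0<\uDt\le\overline{\Delta t}<\infty$, so $t_k\to\infty$, $[0,\infty)=\bigcup_{k\ge1}[t_{k-1},t_k)$, and only finitely many $t_k$ lie in any bounded interval. On a single window $[t_{k-1},t_k)$ the jump equation \eqref{Impulse_z} is inactive and \eqref{Compact-z} is just the It\^o SDE $\d z=F(z,t)\,\d t+G(z,t)\,\d B(t)$. Given an initial condition at $t_{k-1}$ that is measurable with respect to the history up to $t_{k-1}$ and square-integrable, the local Lipschitz and linear growth bounds \eqref{localLipschitz-z}--\eqref{lineargrowth-z} are exactly the hypotheses of the classical existence--uniqueness theorem for It\^o equations: truncate $F,G$ outside the ball of radius $\bn$ to obtain globally Lipschitz coefficients, solve the truncated equation by Picard iteration, introduce the exit times $\tau_\bn=\inf\{t\ge t_{k-1}:|z(t)|\ge\bn\}$, and invoke the linear growth bound with Gronwall's inequality to show $\tau_\bn\uparrow t_k$ almost surely, i.e.\ no explosion before $t_k$. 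This yields a unique continuous adapted solution on $[t_{k-1},t_k)$, and the same Gronwall/Doob/Burkholder--Davis--Gundy estimates give $\EE\sup_{t_{k-1}\le\theta<t_k}|z(\theta)|^2<\infty$; in particular the left limit $z(t_k^-)$ exists a.s.\ and the path segment $z_{t_k^-}$ is a well-defined $C([t_{k-1},t_k);\RR^{n+q})$-valued random variable with $\EE\| z_{t_k^-}\|^2<\infty$.

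Next I would activate the impulse by setting $z(t_k):=z(t_k^-)+H_F(z_{t_k^-},k)+\bH_G(z_{t_k^-},k)\,\bar\xi(k)$. Here $z_{t_k^-}$ is measurable with respect to the path history strictly before $t_k$, $\bar\xi(k)\in\Xi^n_\NN$ is independent of $\{x(t),y(t),B(t):0\le t<t_k\}$ with finite second moment, and $H_F,\bH_G$ are measurable with the linear growth \eqref{lineargrowth-z}, so $z(t_k)$ is a well-defined $\ff_{t_k}$-measurable vector and $\EE|z(t_k)|^2\le c\bigl(\EE|z(t_k^-)|^2+\EE\| z_{t_k^-}\|^2\bigr)<\infty$ for a constant $c$. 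Uniqueness on $[t_{k-1},t_k]$ then follows from uniqueness of the SDE solution together with the fact that the jump is a fixed measurable functional of the (already unique) left path. Starting from the deterministic, hence square-integrable, datum $z(t_0)=z_0$ and iterating this construction over $k=1,2,\dots$ produces a process on $[0,\infty)$ that is adapted, solves \eqref{Compact-z}, is continuous on each $[t_{k-1},t_k)$ and right-continuous at every $t_k$ (the SDE on $[t_k,t_{k+1})$ restarts continuously from the post-jump value), and is unique: any other solution agrees with it on $[0,t_1)$ by SDE uniqueness, hence at $t_1$, hence on $[t_1,t_2)$, and so on by induction.

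For a fixed horizon $T$ there are only finitely many impulse times in $[0,T]$ because $\uDt>0$, so chaining the estimates $\EE\sup_{t_{k-1}\le\theta<t_k}|z(\theta)|^2<\infty$ and $\EE|z(t_k)|^2<\infty$ over those finitely many intervals gives $\EE\int_0^T|z(t)|^2\,\d t<\infty$, i.e.\ $z\in\mm^2([0,T];\RR^{n+q})$. The stated decomposition $z=[x^T\;y^T]^T$ then follows from the block structure of the coefficients: since the first $n$ rows of both $H_F$ and $\bH_G$ vanish, the impulse leaves $x(t)=Cz(t)$ unchanged, so $x(\cdot)$ is continuous on $[0,\infty)$, while $y(t)=Dz(t)$ inherits the jumps and is merely right-continuous.

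The one step that is not routine is the path-dependence of the impulse: the jump in \eqref{Impulse_z} is driven by the \emph{entire} trajectory segment $z_{t_k^-}$ over the preceding inter-impulse window, not merely by the single left limit $z(t_k^-)$. The argument must therefore verify that this segment is a genuine random element of $C([t_{k-1},t_k);\RR^{n+q})$, that it is measurable with respect to the history before $t_k$ (so the independence of $\bar\xi(k)$ is usable), and, above all, that it carries the finite second moment $\EE\| z_{t_k^-}\|^2<\infty$ that propagates square-integrability across the jump and keeps the induction going. All three are delivered by the uniform supremum moment estimate for the SDE solution on $[t_{k-1},t_k)$, which is precisely where the linear growth condition \eqref{lineargrowth-z} is indispensable; the truncation/no-explosion step, the Picard-uniqueness step, and the finite-horizon pasting are then standard.
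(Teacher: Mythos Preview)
Your proposal is correct and follows essentially the same route as the paper's proof: solve the It\^o SDE on each inter-impulse window $[t_{k-1},t_k)$ via the classical existence--uniqueness theorem under local Lipschitz and linear growth, apply the jump as a measurable functional of the (already unique) path segment to obtain a square-integrable $\ff_{t_k}$-measurable restart value, and iterate by induction using $\uDt>0$ to reach any finite horizon. Your treatment is in fact more explicit than the paper's---you spell out the truncation/no-explosion mechanism and the supremum moment bound needed to handle the path-dependent impulse, and your block-structure argument for the continuity of $x(\cdot)$ is more direct than the paper's citation---but the underlying strategy is the same.
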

The proof of Lemma \ref{existence_n_uniqueness} is relegated to Appendix. 
Now that we have   the existence and uniqueness of solutions to SiDE  (\ref{Compact-z}), or say, SiDE  (\ref{SiDE-xy}), we shall further study the stability of the unique solution of the SiDE. Let us introduce the definitions of exponential stability for  SiDE  (\ref{Compact-z}).

\begin{definition}   \label{pthM_ExpStab}
 \cite[Definition 4.1, p127]{mao2007book} The system (\ref{Compact-z}) is said to be $p$th ($p>0$) moment exponentially stable if there is a pair of positive constants $K$ and $c$ such that 
$ \EE |z(t)|^p \le K |z_0|^p e^{-c t }$ for all $t \ge 0 $,
 which implies
$ \limsup_{t \to \infty} \frac{1}{t}  \ln (\EE |z(t)|^p) \le -c  < 0   $
for all $z_0 \in \RR^{n+q}$.
\end{definition}
\begin{definition}   \label{AS_ExpStab}
 \cite[Definition 3.1, p119]{mao2007book} The system (\ref{Compact-z}) is said to be almost surely exponentially stable if 
$   \limsup_{t \to \infty} \frac{1}{t}  \ln |z(t)|  < 0  $
   for all $z_0 \in \RR^{n+q}$.
\end{definition}


\section{Lyapunov stability  of  synthetic CPSs}   \label{sec:stabilitySiDEs}

In this section, we  establish by the Lyapunov method a stability theory for the general class of SiDEs.  For simplicity, the compact form (\ref{Compact-z}) of CPS (\ref{SiDE-xy}) is used to study the existence and uniqueness of solutions to the SiDE. Here we  exploit the structure and study stability of the synthetic CPS  (\ref{SiDE-xy}).

\begin{theorem}   \label{Theorem_ExpStab}
Suppose that  Assumptions    \ref{localLipschitz}-\ref{lineargrowth}  hold and there is a pair of candidate Lyapunov functions $V \in C^{2, 1} (\RR^n \times \RR_+; \RR_+)$ and $\tilde{V} \in C^{2, 1}  (\RR^q \times \RR_+; \RR_+)$ for subsystems (\ref{SDE_x}) and  (\ref{SDE_y},\ref{Impulse_y}), respectively,  such that

{\rm (i)} for all $(x, y, t) \in \RR^n \times \RR^q \times \RR_+$ and some positive constants $ c_2 \ge  c_1>0$,  $\tilde{c}_2 \ge \tilde{c}_1>0$ and $p >0$,
\begin{subequations}      \label{LyapunovFunctions}
\begin{align}
 & c_1 |x|^p \le V(x, t) \le c_2 |x|^p,  \label{LyapunovFunctions_x} \\
&  \tilde{c}_1 |y|^p \le \tilde{V}(y, t) \le \tilde{c}_2 |y|^p ;  \label{LyapunovFunctions_y}
\end{align}
\end{subequations}
%

{\rm (ii)}  for all $(x, y ) \in \RR^n \times \RR^q $, some positive constants $  \al_1, \tilde{\al}_2$ and nonnegative $\al_2,   \tilde{\al}_1$,
\begin{subequations}        \label{LV-EV-xy}
\begin{align}
 &   \ll V(x, y, t) \le - \al_1 V(x, t) + \al_2  \tilde{V} (y, t),    \;\;  t \ge 0
  \label{LV_1}  \\ 
 &   \tll \tilde{V} (x, y, t)  \le \tilde{\al}_1 V(x, t)  + \tilde{\al}_2 \tilde{V} (y, t)    , \;\;  t \neq  t_k ;
   \label{tLtV_1}  
\end{align}
\end{subequations}
%

{\rm (iii)} at $t = t_k$ for  each $k \in \NN$,
\begin{eqnarray}
\lefteqn{   \EE   \tilde{V}   ( \phi (t_k^-)  + \tilde{\Delta} (\varphi_{ t_k^-}, \phi_{t_k^-}, k ), t_k )   }      \nonumber   \\
 &&         \le \tilde{\bt}_1  \sup_{t_{k-1} \le s < t_k} \EE     V ( \varphi (s), s    )    \nonumber   \\
 &&      {}  + \tilde{\bt}_2 \sup_{t_{k-1} \le s < t_k} \EE    \tilde{V}  ( \phi (s ), s    ) + \tilde{\bt}_3   \EE  \tilde{V}   ( \phi (t_k^-)  , t_k^- ) \;\; \label{EtV_tk-1}
\end{eqnarray}
      for all  $( \varphi_{ t_k^-}, \phi_{t_k^-} ) \in \lf_{\ff_t}^p ( [t_{k-1}, t_k); \RR^n) \times \lf_{\ff_t}^p ( [t_{k-1}, t_k); \RR^q) $, where   $\tilde{\bt}_1$,   $ \tilde{\bt}_2 $ and $ \tilde{\bt}_3$ are nonnegative constants such that 
\begin{equation}   \label{iii-constants}
 0 <  {\al_1}^{-1} \al_2  \tilde{\bt}_1 + \tilde{\bt}_2 + \tilde{\bt}_3 < 1 . 
\end{equation}
SiDE (\ref{Compact-z}), namley, CPS (\ref{SiDE-xy}) is $p$th moment exponentially stable provided that the impulse time sequence $\{ t_k \}_{k \in \NN}$ satisfies
\begin{equation}  \label{ImpulseInterval}
  0 < \underline{\Delta t} \le  \overline{\Delta t} <  \htau (\hatq) := \frac{- \ln \hatq } { ( {\al_1 \hatq } )^{-1}  \al_2    \tilde{\al}_1 + \tilde{\al}_2}
\end{equation}
for some $\hatq \in (  {\al_1}^{-1} \al_2    \tilde{\bt}_1 + \tilde{\bt}_2 + \tilde{\bt}_3, 1)$. 
  
\end{theorem}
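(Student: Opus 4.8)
\emph{Proof strategy.} The plan is to track the two nonnegative scalar functions $u(t):=\EE V(x(t),t)$ and $v(t):=\EE \tilde{V}(y(t),t)$, show each tends to $0$ exponentially, and then convert this through the two–sided bounds (\ref{LyapunovFunctions_x})--(\ref{LyapunovFunctions_y}) into a bound on $\EE|z(t)|^p$. Since the impulses act only on the cyber state $y$, the physical state $x(\cdot)$ is continuous on all of $[0,\infty)$, so It\^o's formula for $V(x(t),t)$ carries no jump contributions; truncating the resulting local martingale at $\btau_\bn:=\inf\{t\ge0:|z(t)|\ge\bn\}$, using Assumption~\ref{lineargrowth} and the $\mm^2$--regularity supplied by Lemma~\ref{existence_n_uniqueness} to pass to $\bn\to\infty$, and invoking (\ref{LV_1}) with the comparison lemma, one gets the \emph{global} bound
$$u(t)\le e^{-\al_1 t}u(0)+\al_2\int_0^t e^{-\al_1(t-r)}v(r)\,\d r,\qquad t\ge0.$$
On each inter–impulse interval $[t_{k-1},t_k)$ the process $y(\cdot)$ is continuous, and the same device applied to $\tilde{V}(y(t),t)$ with (\ref{tLtV_1}) gives $v(t)\le e^{\tal_2(t-s)}v(s)+\tal_1\int_s^t e^{\tal_2(t-r)}u(r)\,\d r$ for $t_{k-1}\le s\le t<t_k$, while (\ref{EtV_tk-1}) reads $v(t_k)\le\tbt_1\sup_{[t_{k-1},t_k)}u+\tbt_2\sup_{[t_{k-1},t_k)}v+\tbt_3\,v(t_k^-)$.

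Next I would insert an exponential weight and reduce everything to a scalar recursion. Fix $\gm\in(0,\al_1)$, to be chosen small, and put $h(t):=e^{\gm t}v(t)$, $H(t):=\sup_{0\le s\le t}h(s)$ and $\bar H_k:=\sup_{0\le s<t_k}h(s)$. Multiplying the global $u$–bound by $e^{\gm t}$ and estimating the convolution gives the \emph{decoupling inequality} $e^{\gm r}u(r)\le u(0)+\al_2(\al_1-\gm)^{-1}H(r)$, which replaces the physical subsystem by a forcing term governed by $\tilde{V}$; this is precisely why the relevant quantity is the combination $q_0:=\al_1^{-1}\al_2\tbt_1+\tbt_2+\tbt_3$ of (\ref{iii-constants}) and not $\tbt_1+\tbt_2+\tbt_3$. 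Feeding the decoupling inequality into the inter–impulse $v$–estimate and into (\ref{EtV_tk-1}) and simplifying, I expect a recursion
$$\bar H_k\le\max\!\big(\bar H_{k-1},\ a_\gm\bar H_{k-1}+b_\gm(u(0)+v(0))\big),$$
where $a_\gm\to e^{\tal_2\oDt}q_0\big/\big(1-\al_1^{-1}\al_2\tal_1\oDt\,e^{\tal_2\oDt}\big)$ as $\gm\downarrow0$, and where $\bar H_1$ and $b_\gm$ are bounded by multiples of $u(0)+v(0)$.

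The heart of the proof is then to verify $a_\gm<1$. For $\gm=0$ this is equivalent to $e^{\tal_2\oDt}\big(q_0+\al_1^{-1}\al_2\tal_1\oDt\big)<1$, which follows from the hypotheses: $q_0<\hatq$ gives $q_0+\al_1^{-1}\al_2\tal_1\oDt<\hatq\big(1+(\al_1\hatq)^{-1}\al_2\tal_1\oDt\big)\le\hatq\,e^{(\al_1\hatq)^{-1}\al_2\tal_1\oDt}$ (using $1+x\le e^x$), hence $e^{\tal_2\oDt}\big(q_0+\al_1^{-1}\al_2\tal_1\oDt\big)<\hatq\,e^{(\tal_2+(\al_1\hatq)^{-1}\al_2\tal_1)\oDt}<1$ by the dwell–time condition (\ref{ImpulseInterval}), i.e. $\oDt<\htau(\hatq)$. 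A continuity argument then gives $a_\gm<1$ (and positivity of the denominator above) for all sufficiently small $\gm$. With $a_\gm<1$ the recursion forces $\sup_k\bar H_k<\infty$, hence $\sup_{t\ge0}H(t)<\infty$, so $v(t)\le P e^{-\gm t}$ with $P$ a multiple of $u(0)+v(0)$; plugging this back into the global $u$–bound gives $u(t)\le P'e^{-\gm t}$. Finally, (\ref{LyapunovFunctions_x})--(\ref{LyapunovFunctions_y}) yield $c_1\EE|x(t)|^p\le u(t)$, $\tilde{c}_1\EE|y(t)|^p\le v(t)$, $u(0)\le c_2|x_0|^p$, $v(0)\le\tilde{c}_2|y_0|^p$, and with $|z|^p\le2^{p/2}(|x|^p+|y|^p)$ one obtains $\EE|z(t)|^p\le K|z_0|^p e^{-\gm t}$ for every $t\ge0$ (each $t$ lies in some $[t_{k-1},t_k)$ since $t_k\to\infty$), which is the assertion in the sense of Definition~\ref{pthM_ExpStab}.

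The step I expect to be the main obstacle is the passage from the dwell–time bound (\ref{ImpulseInterval}) to the honest contraction $a_\gm<1$: it requires (a) performing the decoupling cleanly so that \emph{exactly one} factor $\al_1^{-1}\al_2$ multiplies $\tbt_1$ --- a careless estimate of $\sup_{[t_{k-1},t_k)}u$ would instead demand the stronger $2\al_1^{-1}\al_2\tbt_1+\tbt_2+\tbt_3<1$ --- and (b) absorbing the $O(\oDt)$ correction produced by the coupling coefficient $\tal_1$, which is exactly where the slack $q_0<\hatq$ and the exponential shape of $\htau(\hatq)$ are consumed (through $x\le e^x-1$) and where the perturbation in $\gm$ is needed. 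The remaining ingredient --- the truncation/limit justification of the two It\^o estimates --- is routine given Assumptions~\ref{localLipschitz}--\ref{lineargrowth} and Lemma~\ref{existence_n_uniqueness}. The degenerate cases $\al_2=0$ (the physical subsystem is already stable by itself) and $\tal_1=0$ (the cyber subsystem is not fed by $x$) decouple the two subsystems and are easier.
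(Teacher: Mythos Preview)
Your outline is sound and the key computation---showing that $e^{\tal_2\oDt}\big(q_0+\al_1^{-1}\al_2\tal_1\oDt\big)<1$ follows from $q_0<\hatq$ and $\oDt<\htau(\hatq)$ via $1+x\le e^x$---is exactly the algebraic crux and is correct. The circularity (bounding $h(t)$ on $[t_{k-1},t_k)$ in terms of $H(t)$, which already contains $\sup_{[t_{k-1},t]}h$) is indeed resolved by the $(1-\al_1^{-1}\al_2\tal_1\oDt\,e^{\tal_2\oDt})^{-1}$ factor you carry, and the positivity of that denominator is a consequence of the same inequality, as you note.

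The paper takes a rather different path. Instead of tracking the cumulative weighted supremum $\bar H_k$ and closing a scalar recursion, it works with the pointwise maximum $\oW(t)=e^{\mu t}U(t)\vee\frac{\al_2}{\al_1\al}e^{\mu t}W(t)$ and proves $\oW(t)<K$ by contradiction: it introduces the stopping time $\btauK=\inf\{t:\oW(t)\ge K\}$, partitions time into regimes $T_\pm$ according to the sign of $\tW-\tU$, and rules out each possible location of $\btauK$ (at an impulse time, between impulses, in $T_+$ or $T_-$) case by case. This is more in the spirit of ISS/comparison arguments and needs the auxiliary function $\tv(t)=\tW(t)-\hatq\,\tU(t)$ to close the between-impulse case.

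What each buys: your approach is cleaner and more transparent---the Gronwall/convolution route isolates the contraction factor $a_\gm$ in one line and makes the role of the dwell-time bound (\ref{ImpulseInterval}) explicit; it also avoids the case analysis entirely. The paper's approach, on the other hand, ties the result to the ISS viewpoint (its Step~1) and arguably yields slightly sharper constants through the extra parameter $\al\in(0,1)$; it also extends more naturally when one replaces the sup-based jump condition~(\ref{EtV_tk-1}) by trajectory-level comparison hypotheses. Either way, the ``decoupling'' step---that only one copy of $\al_1^{-1}\al_2$ should multiply $\tbt_1$---is the same subtlety in both proofs, and you have identified it correctly.
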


{\begin{proof} According to Lemma \ref{existence_n_uniqueness}, that    Assumptions    \ref{localLipschitz}-\ref{lineargrowth}   hold  implies there exists a unique solution $z(t; z_0)$ to SiDE (\ref{Compact-z}) and the solution $z(t; z_0)$ belongs to  $\mm^2 ([0, T]; \RR^{n+q})$ for all $T \ge t \ge 0$.  By Lemma \ref{existence_n_uniqueness},  $x(t; x_0)$ is continuous on $[0, \infty)$ and $y(t; y_0)$ is  right-continuous on $[0, \infty)$ which could only jump at $ \{ t_k \}_{ k \in \NN } $.  Some ideas and techniques in this proof are derived from our results  \cite{huang202x,huang_partI}  as well as     \cite[Theorem 3.1 and Remark 3.1]{huang2009} on $p$th moment input-to-state stability (ISS)  of stochastic systems, see also   \cite{fridman2008,khalil2002}.

For notation, 
let $U(t) = \EE V(x(t), t ) $, $ W(t) = \EE \tilde{V}( y(t), t) $ for all $t \ge 0$ and, hence, $ || U_t || = \sup_{  t_\ast  \le \theta  \le t  } \EE V(x(\theta), \theta ) $, $ || W_t || = \sup_{  t_\ast  \le \theta  \le t  } \EE \tilde{V} (y(\theta), \theta ) $. So $U(t)$ is continous on $[0, \infty)$ and $W(t) $ is right-continuous on $  [0, \infty)$ and could only jump at     $\{ t_k \}_{k \in \NN}$;  $ || U_0  || = U(0)=  V(x_0, 0) $, $ || W_0  || = W(0)=  \tilde{V} (y_0, 0) $,  $ || U_{t_k}  || =U(t_k)= \EE V(x(t_k), t_k) $, $ || W_{t_k}  || =W(t_k)= \EE \tilde{V} ( y(t_k), t_k) $  for all $k \in \NN$ and $|| U_t || \ge  U(t) = \EE V(x(t), t )$, $|| W_t || \ge  W(t) = \EE  \tilde{V} ( y(t), t )$ for all $ t \ge 0$;   $ || U_t ||$ and $ || W_t ||$ are   continuous and nondecreasing on $[ t_{k-1}, t_k) $ and, hence, both they are right-continuous on $  [0, \infty)$ and could only jump at $\{ t_k \}_{k \in \NN}$.

%
The proof is so technical that we devide it into five steps, in which we will: 1) show the ISS of $x(t)$ with $y(t)$ as input; 2) combine the candidate Lyapunov functions $V(t)$ and $\tilde{V}( t) $ for the exponential stability of both $x(t)$ and $y(t)$;  3) construct a function that breaks the time interval into a disjoint union of subsets on which the system has  different  properties;  4) prove  the exponential stability of both $x(t)$ and $y(t)$; and 5) show the exponential stability of $z(t)$. 

{\it Step 1:}  
By  the It{\^o} formula and condition (\ref{LV_1}), 
\begin{multline*}
     U ( t )  = U (\bar{t} ) + \int_{\bar{t}}^t \EE \ll V(x(s), y(s), s) \d s  \\
   \le  U (\bar{t} ) + \int_{\bar{t}}^t \big[ - \al_1 U(s) + \al_2 W(s)  \big] \d s   \quad \forall \, t \ge \bar{t} \ge0
\end{multline*}
 and hence the upper right Dini derivative 
\begin{equation}    
  \dd^+ U(t)  = \EE \ll V(x(t), y(t), t)  \le - \al_1 U(t) + \al_2 W(t)       \label{dUdt-1}
\end{equation}
for all $t \ge 0$, which implies
\begin{equation}
 \dd^+ (t)   \le - (1 - \al) \al_1 U(t)      \;\;  {\rm if} \; \;   U(t) \ge \frac{ \al_2}{ \al_1 \al } \sup_{0 \le s \le t}   W (s)               \label{dUdt-2}
\end{equation}
  where $\al $ can be any positive   on $(0 ,   1)$. By \cite[Lemma 1]{fridman2008} and \cite[Theorem 4.18, p172]{khalil2002}, inequalities (\ref{LyapunovFunctions_x}) and (\ref{dUdt-2}) imply
\begin{equation}
  U(t) \le \Big( U(0 ) e^{- (1 - \al) \al_1  t } \Big)    \vee    \Big(  \frac{ \al_2}{ \al_1 \al}  \sup_{0 \le s \le t}   W (s)  \Big)  \label{U-bounded-t}
\end{equation}
 for all $t \ge 0$. 
If $\al_2 =0$,  $U(t)$ is  exponentially stable; otherwise (viz. $\al_2 >0$), $U(t)$ is ISS with  $W(t)$ as input, which means that  $x(t)$ is $p$th moment ISS with  $y(t)$ as input \cite{huang2009}. Specifically,  there is $t^U \ge 0$ (dependent on $U(0)$ and  $[( 1 - \al )  \al_1 ]^{-1} \al_2 \sup_{0 \le s \le t}   W (s)$,  see  \cite{fridman2008, khalil2002}) such that 
\begin{equation*}     
\begin{array}{ll}
    U(t) \le U(0 ) e^{-(1 - \al)  \al_1 t }, & \forall \, 0 \le t \le t^U     \\
   U(t)  \le   ( \al_1 \al )^{-1} \al_2   \sup_{0 \le s \le t^U} W (s) , &   \forall \, t \ge  t^U.  
\end{array}
\end{equation*}  
Moreover, $U(t)$ is (expoentially) stable if  $W(t)$ (exponentially) converges to zero as $ t \to \infty$, or say, if $y(t)$ is $p$th moent exponentially stable, so is $x(t)$  \cite[Theorem 3.1 and Remark 3.1]{huang2009}. Note that, if  $\al_2 =0$ and, hence,  (\ref{U-bounded-t}) implies that $U(t)$   is exponentially stable, Theorem \ref{Theorem_ExpStab} can be proved in a way similar to the proof of  \cite[Theorem 3.1]{huang_partI}. 
 It is easy to observe that \cite[Theorem 3.1]{huang_partI} is a specific case of Theorem \ref{Theorem_ExpStab}  with $\al_2 =0$. 
So  this proof  focuses on the case   $\al_2 >0$  in which $U(t)$ is ISS with  $W(t)$ as input.

{\it Step 2:}  
By conditions (\ref{iii-constants}) and (\ref{ImpulseInterval}), there exists a number $\hatq  \in ( {\al_1}^{-1} \al_2    \tilde{\bt}_1 + \tilde{\bt}_2 + \tilde{\bt}_3, 1 )$ for
 \begin{equation*}  
     [  ({\al_1 \hatq } )^{-1}  \al_2    \tilde{\al}_1+ \tilde{\al}_2 ] \, \overline{\Delta t}  < -\ln (\hatq) < -\ln ( {\al_1}^{-1} \al_2    \tilde{\bt}_1 + \tilde{\bt}_2 + \tilde{\bt}_3 ) .
\end{equation*}
This implies that one can find a pair of positive numbers
 $ \al \in (0,  1)$  sufficiently  close to $1$  for
\begin{equation}  \label{def-al}
       \big( \frac { \al_2    \tilde{\al}_1 } { \al_1  \al  \hatq }+ \tilde{\al}_2 \big)  \overline{\Delta t}   < -\ln (\hatq) < -\ln  \big(  \frac{ \al_2    \tilde{\bt}_1}{  \al   \al_1 }     + \tilde{\bt}_2 + \tilde{\bt}_3   \big )
\end{equation}
and  then    $\mu \in (0, ( 1 - \al) \al_1 \uDt / \oDt )$   
sufficiently small  for
\begin{multline}    \label{def-mu}
  \big( \frac { \al_2    \tilde{\al}_1 } { \al_1  \al \hatq }   +  \tilde{\al}_2    + \mu \big)  \oDt  < -\ln (\hatq)     \\
 < -\ln \Big(  \big( \frac{ \al_2    \tilde{\bt}_1}{  \al    \al_1 } + \tilde{\bt}_2 \big) e^{\mu \oDt } +  \tilde{\bt}_3  \Big). 
\end{multline}

Given   $\mu \in (0, (1 - \al) \al_1 \uDt / \oDt )$    by    (\ref{def-mu}),  
  let 
\begin{gather} 
   \tU (t) = e^{ \mu t} U(t)       \;\; {\rm and } \;\; 
   \tWa (t) =  e^{ \mu t}  W(t)   \label{def-tU-tW} 
\end{gather}
for all $ t \ge 0$. 
By    the It{\^o}   formula as well as (\ref{dUdt-1}) and  (\ref{tLtV_1}), 
\begin{eqnarray}
  \lefteqn{  \tU ( t )  =  \tU ( \bar{t}) +    \int_{\bar{t}}^t   e^{ \mu s} \big[ \mu U (s) +  \dd^+ U(s) \big]   \d s  }   \nonumber   \\
&&   \le   \tU ( \bar{t}) +    \int_{\bar{t}}^t  e^{ \mu s}  \big[ ( \mu - \al_1 ) U ( s) +  \al_2   W(s)    \big] \d s   \nonumber  \\
&&     =   \tU ( \bar{t}) +    \int_{\bar{t}}^t    \big[  - ( \al_1 - \mu )  \tU ( s) +   \al_2  \tWa (s)    \big] \d s \quad \label{EtU_t}
\end{eqnarray}
for all $t \ge \bar{t}  \ge 0$ and
\begin{eqnarray}
  \lefteqn{  \tWa ( t )= \tWa ( \tilde{t}) +     \int_{\tilde{t}}^t  e^{ \mu s} \big[ \mu W(s) +  \EE \tll \tilde{V}(x(s), y(s), s)  \big]  \d s   } \nonumber   \\
 &&  \le   \tWa ( \tilde{t}) +    \int_{\tilde{t}}^t  e^{ \mu s}  \big[ \tilde{\al}_1 U ( s) +    ( \tilde{\al}_2 + \mu )  W(s)    \big] \d s  \nonumber   \\
 && =  \tWa ( \tilde{t}) +    \int_{\tilde{t}}^t    \big[   \tal_1    \tU ( s) +  ( \tilde{\al}_2 + \mu)  \tWa (s)    \big] \d s  \hspace{1.36cm} \label{EtWa_t}
\end{eqnarray}
for all  $t_{k-1} \le \tilde{t} \le t < t_k$ and $k \in \NN$.  
For convenience, let
\begin{equation}   \label{def-tWa}
 \tW (t)=  \frac{ \al_2} { \al_1   \al} \tWa (t) = \frac{ \al_2} { \al_1   \al} e^{\mu t} W(t)
\end{equation} 
for all $t \ge 0$, where  $\al \in (0, 1) $ is given by (\ref{def-al}).

Let us define 
\begin{equation}  \label{def-oW}
\oW (t) = \tU(t) \, \vee  \tW (t) \quad \forall \, t \in [ 0, \infty).
\end{equation}
 Due to the continuity of $U(t)$ and the right-continuity of $W(t)$, $\oW ( t ) $ is right-continuous on $  [0, \infty)$ and could only   jump at the impulse instants $\{ t_k \}_{k \in \NN}$. 
Clearly, $   \oW(t) \ge \tU(t)$ and $   \oW(t) \ge \frac{ \al_2} { \al_1   \al}  \tWa (t) $ for all $ t \ge 0$. Recall that  $\al_2 >0$. 
So both $U(t)$ and $W(t)$ will be  exponentially stable if there is a positive constant $K$  such that
\begin{equation}    \label{oW-K}
    \oW (t)   < K  
\end{equation}
for all $t \ge t_0=0$. For instance,  
let 
\begin{equation} \label{def-K}
 K=    \frac{ \al_1 + \al_2} {   \al_1 \al    \hatq }      
 \big[  U(t_0) + W(t_0)  \big]  > 0
\end{equation}
 and hence  $ \oW (t_0) \le    U(t_0) +  \frac{  \al_2} {   \al_1  \al      }W(t_0)  < \hatq  K $.  

{\it Step 3:}  
Define function $\bar{v}: \RR_+ \to \RR$ by 
\begin{eqnarray}   \label{def-barv}
  \bar{v}(t) =      \tW (t)-    \tU (t)   \quad \forall \, t \in [0, \infty)
\end{eqnarray}
 with initial value 
$\bar{v}(0) = \frac{ \al_2   } {\al_1  \al  }  W(0)- U(0)$, where  $\al \in (0, 1)$ is given by (\ref{def-al}) and functions $\tU (t)$ and $\tW (t)$   by (\ref{def-tU-tW}) and (\ref{def-tWa}), respectively.
Since  $\tU(t)$ is continuous on $[0, \infty)$ and $\tW (t) $ is right-continuous on $  [0, \infty)$ and could only jump at  $\{ t_k \}_{k \in \NN}$, $ \bar{v}(t)$ is right-continuous on  $  [0, \infty)$ and could only  jump at the impulse instants $\{ t_k \}_{k \in \NN}$. Given any $t \ge 0$, either $\bar{v}(t) \ge 0$ or $ \bar{v}(t) < 0$. So the interval $[0, \infty)$ is broken into a disjoint union of subsets $T_+ \cup T_-$, where
\begin{equation}   \label{def-T+T-}
  T_+ = \{ t \ge 0:  \bar{v}(t) > 0 \} , \;\;   T_- = \{ t \ge 0:  \bar{v}(t) \le 0 \}.
\end{equation}

From (\ref{def-oW}), (\ref{def-barv}) and (\ref{def-T+T-}), 
\begin{equation}  \label{oW_T+T-}
 \oW (t) = \left\{ \begin{array}{cc}   \tW (t), & t \in T_+ \\  \tU(t), & t \in T_-  \end{array} \right.
\end{equation} 
and,  by (\ref{EtU_t})  and (\ref{def-T+T-}), 
\begin{equation}   \label{DtU_T-}
\dd^+ \tU (t) \le -  c \, \tU(t)  \quad  \forall \, t \in T_-   
\end{equation}
 where $c \in (0,   (1-\al ) \al_1 - \mu)$ is some postive number, e.g., $c =  [ (1-\al ) \al_1 - \mu  ]/2  $. That is, $\dd^+ \tU(t)$    is negative definite (with respect to $x$) and is strictly decreasing on the set $T_-$ if $T_- \neq \emptyset$.
 It is observed that $T_+ = \emptyset$ and, therefore,  $T_- = [0, \infty) $ if  $\al_2 =0$  
.  In fact,   $T_+ = \emptyset$, namely, $T_- = [0, \infty)$ implies that  $\dd^+ \tU (t) \le -  c\, \tU(t) $ for all $t \ge 0$ and hence   $U(t)$ is exponentially stable. 
In this case,  due to  $\tW (t) \le \tU(t)$  on $T_- = [0, \infty) $, both  $U(t)$ and  $\tW (t)$ are exponentially stable. 
Let us  consider the other case, namely, $T_+ \neq  \emptyset$.

Given any $t \in T_+$, due to the right-continuity of $\bar{v} (t)$ on $  [0, \infty)$, there exists an interval $[ \tau^+_1 (t), \tau^+_2 (t) )$ with $\tau^+_1 (t) < \tau^+_2 (t)$   such that $ (\tau^+_1 (t), \tau^+_2(t) ) \subset T_+$, where  
\begin{equation}  \label{def-tau_1-tau_2}
\begin{array} {c}
   \tau^+_1 (t) =  \inf \{ \, \bar{\tau}  \le t: \;  \bar{v}(\tau) >  0, \, \forall \tau \in [ \, \bar{\tau},t \, ] \},  \\
    \tau^+_2 ( t ) =      \sup \{  \, \bar{\tau} > t :  \; \bar{v}(\tau) > 0, \, \forall \tau \in [ \, t, \bar{\tau}) \}.
\end{array}
\end{equation}
%
Similarly, given any $\bart \in T_-$,  there is  an ordered pair   $ \tau^-_1 ( \bart ) \le \tau^-_2 ( \bart ) $ such that $ [ \tau^-_1 ( \bart ),  \tau^-_2 ( \bart ) ) \subset T_-$,  where
\begin{equation}  \label{def-tau-1-tau-2}
\begin{array} {c}
   \tau^-_1 ( \bart ) =  \inf \{ \, \tilde{\tau}  \le t: \;  \bar{v}(\tau) \le  0, \, \forall \tau \in [ \, \tilde{\tau},t \, ] \},  \\
    \tau^-_2 ( \bart ) =      \sup \{  \, \bar{\tau} \ge  t :  \; \bar{v}(\tau) \le  0, \, \forall \tau \in [ \, t, \bar{\tau}) \},
\end{array}
\end{equation}
and $[ \tau^-_1 ( \bart ),  \tau^-_2 ( \bart ) ) = \emptyset$   if $\tau^-_1 ( \bart ) =  \tau^-_2 ( \bart ) =\bart $. 

For convenience, we also write $\tau^+_1 =  \tau^+_1 (t ) $, $\tau^+_2 =  \tau^+_2 (t ) $, $  \tau^-_1 = \tau^-_1 (\bart ) $ and  $ \tau^-_2= \tau^-_2 (\bart ) $ where there is no ambiguity.

{\it Step 4:}  Let   us show   (\ref{oW-K})   for all $t \ge t_0 =0$. Define
\begin{equation}  
 \btauK = \inf \{ t \ge t_0 :   \oW (t) \ge K   \} , \label{def-htau} 
\end{equation}
By choice (\ref{def-K}),  $     \btauK > t_0 =0  $. 
If $\btauK > t_k$ for all $k \in \NN$, then (\ref{oW-K}) holds for all $t \ge 0$ because $\uDt = \inf_{k \in \NN} \{ t_k - t_{k-1} \} >0$ and $t_k \to \infty$ as $k \to \infty$.  
Otherwise, there is some   $k \in \NN$ such that $t_k = \inf \{ t_j: t_j \ge \btauK, j \in \NN \} $.  This means that either $ \btauK = t_k$ or $ t_{k-1} < \btauK < t_k$.  If $ \btauK = t_k$, then  (\ref{oW-K}) holds for all $t \in [0, t_k)$. Particularly,
\begin{equation}   \label{oW-upperbound-tk-}
   \tW (t_k^-) \le  || \tU_{t_k^-} ||  \vee || \tW_{t_k^-} ||  = || \oW_{t_k^-} || < K .
\end{equation}
Moreover,  either  $  \btauK = t_k \in T_+$ or $  \btauK = t_k \in T_-$ when $ \btauK = t_k$.
 If $  \btauK = t_k \in T_+$, then $\oW(t_k) = \tW ( t_k)  \ge K   $.  By condition (iii) with (\ref{def-mu}) and 
(\ref{oW-upperbound-tk-}), at each $t_k \le \btauK$, 
\begin{eqnarray}  \label{tW-htau=tk}
  \lefteqn{  \tW ( t_k)  =   \frac{ \al_2    } { \al_1  \al} e^{\mu t_k} W(   t_k )  }  \nonumber   \\
&&  \le     \frac{ \al_2    } { \al_1   \al} e^{\mu t_k}  \big( \tbt_1 || U_{t_k ^-} || + \tbt_2 || W_{t_k ^-} || + \tbt_3 W(t_k ^-)  \big)   \nonumber   \\
&&  \le     \big( \frac{ \al_2   \tbt_1  } { \al_1   \al}   || \tU_{t_k ^-} || + \tbt_2 || \tW_{t_k ^-} || \big) + \tbt_3 \tW(t_k ^-)  \nonumber   \\
&& \le  \Big[ \big( \frac{ \al_2   \tbt_1  } { \al_1   \al}    + \tbt_2 \big)  e^{\mu \oDt }  + \tbt_3 \Big]  || \oW_{t_k ^-} ||  \nonumber   \\
&&  <  \Big[ \big( \frac{ \al_2   \tbt_1  } { \al_1   \al}    + \tbt_2 \big)  e^{\mu \oDt }  + \tbt_3 \Big]   K    
< \hatq K  <   K ,
\end{eqnarray}
 which is a contradiction. So  $   t_k \notin T_+$ if $\btauK = t_k$.

If $  \btauK = t_k \in T_-$, then there are two possible cases:   $t_k^- \in T_-, \btauK = t_k \in T_-$ and    $t_k^- \in T_+, \btauK = t_k \in T_-$.

    Recall that $U(t) $ and hence $\tU (t)$ are   continuous on $[t_0, \infty)$. If $t_k^- \in T_-$, $t_k \in T_-$, then, by (\ref{def-tau-1-tau-2}), there is $\tau^-_1 = \tau^-_1 (t_k) < t_k$ such that $[\tau^-_1, t_k ] \subset T_-$.  By  (\ref{DtU_T-}), $ \tU (\tau^-_1) \ge \tU (t_k) e^{ c (t_k - \tau^-_1)} $. 
This  with $  \btauK = t_k $ produces
\begin{equation*}
 \tU (\tau^-_1) \ge \tU(t_k) e^{ c (t_k - \tau^-_1)} \ge   K  e^{ c (t_k - \tau^-_1)} >K.
\end{equation*}
But $  \btauK = t_k > \tau^-_1$ also means that $\tU (\tau^-_1) < K  $, which is a contradiction. Therefore, $t_k^- \notin T_-$ if $\btauK = t_k \in T_-$.

    If $t_k^- \in T_+$, $ \btauK = t_k \in T_-$, then, due to the fact that $\tU (t)$ is continuous $[t_0, \infty)$,
\begin{equation}
   \oW (t_k^-) =  \tW (t_k^-) > \tU(t_k^-) = \tU(t_k) \ge K .       \label{htau_tW_tk-}
\end{equation} 
 Recall that $\tW (t) $ and  $\oW (t) $ are continuous on $(t_{k-1}, t_k) $;   that $t_k^- \in T_+$ implies that,   by (\ref{def-tau_1-tau_2}),  there is $ \tau^+_1< t_k$ such that $(\tau^+_1, t_k ) \in T_+$. By  (\ref{htau_tW_tk-}),  there is   $ \tau \in (\tau^+_1, t_k ) $ so close to $t_k$ that $ \oW (\tau) =  \tW (\tau) >   U(t_k) \ge K $.  But this is in contradiction with $ \btauK = t_k > \tau $. Hence $t_k^- \notin T_+$ if $\btauK = t_k \in T_-$.

So $\btauK = t_k$ cannot be true. Let us proceed to check whether $ t_{k-1} < \btauK < t_k$ could be true or not. Recall that both $\tU(t)$ and $\tW(t)$ are continuous on $( t_{k-1}, t_k)$, which means that both $\oW (t)$ and $\bar{v} (t) $ are continuous on $( t_{k-1}, t_k)$. If $ t_{k-1} < \btauK < t_k$, then there are two cases:  { \bf c1) } $\bar{v} (\btauK) <0$, namely, $\oW (\btauK) = \tU (\btauK) \ge K$  and {\bf c2) } $\bar{v} (\btauK ) \ge 0$, namely, $\oW (\btauK) = \tW (\btauK) \ge K$ including the special case  $\oW (\btauK) = \tW (\btauK) = \tU(\btauK)  \ge K$ in which $\bar{v} (\btauK) =0$. 

\begin{itemize}

\item[ {\bf c1)} ]  Due to the continuity of  $\bar{v} (t) $ on  $( t_{k-1}, t_k)$ as well as (\ref{def-tau-1-tau-2}),   that    $\bar{v} (\btauK) <0$  implies that $\btauK \in T_-$ with $  \tau^-_1 (\btauK) < \btauK < \tau^-_2 (\btauK) $ and hence, by $ t_{k-1} < \btauK < t_k$,  there is $\tau = t_{k-1} \wedge \tau^-_1 (\btauK) < \btauK $ such that $[\tau, \btauK] \subset T_-$ and therefore (\ref{DtU_T-}) holds on  $[\tau, \btauK] $. But this yields
\begin{equation*}
  \tU (\tau) \ge \tU (\btauK) e^{ c (\btauK - \tau)} >  \tU (\btauK) \ge K,
\end{equation*} 
while $\btauK > \tau$ gives $ \tU (\tau) <K$.  The contradiction implies that $\bar{v} (\btauK) <0$ or say $\oW (\btauK) = \tU (\btauK) \ge K > \tW (\btauK)$ cannot be true with $t_{k-1} < \btauK < t_k$.

\item[ {\bf c2) } ]  Notice that $\tW( t_{k-1} ) < \hatq K $ due to  (\ref{tW-htau=tk}).  
Define
\begin{equation}    \label{def-tv}
  \tv (t) = \tW (t) - \hatq \tU(t)  \quad \forall \, t \in [0, \infty)
\end{equation}
with $\hatq \in (0, 1)$   given by (\ref{ImpulseInterval}). 
Similarly, $\tv(t)$ is continuous on $(t_{k-1}, t_k) $ for all $k \in \NN$ and the interval $[0, \infty)$ is broken into a disjoint union of subsets $\tT_+ \cup \tT_-$, where
$  \tT_+ = \{ t \ge 0:  \tv (t) > 0 \}$ and $   \tT_- = \{ t \ge 0:  \tv (t) \le 0 \} $.

From (\ref{def-barv}), (\ref{def-T+T-}) and (\ref{def-tv}),  
 it is   observed that   $T_+  \subset \tT_+$, $\tT_-  \subset T_-$ and, therefore,   (\ref{DtU_T-}) holds on $\tT_-  \subset T_-$. 
Notice that  $ t_{k-1} < \btauK < t_k$ and  $\bar{v} (\btauK ) \ge 0$ (namely, $\oW (\btauK) = \tW (\btauK) \ge K$) imply that $ \tv (\btauK) = \tW (\btauK) - \hatq \tU (\btauK) > \bar{v} (\btauK) = \tW (\btauK) - \tU (\btauK)  \ge 0$ and, hence, $\btauK \in T_+ \subset \tT_+$. 
As in (\ref{def-tau_1-tau_2}),  
 there is an ordered pair $ \ttau^+_1 = \ttau^+_1 (\btauK)   <  \ttau^+_2 =  \ttau^+_2( \btauK)$ such that $\btauK \in ( \ttau^+_1,  \ttau^+_2) \subset \tT_+$.  There are also two cases: i)  $  \ttau^+_1 \le t_{k-1}$ and  ii)  $  \ttau^+_1 > t_{k-1}$.

\begin{itemize}
 \item[ i) ]  That $  \ttau^+_1 \le t_{k-1}$ means  $[ t_{k-1}, t_k \wedge \ttau^+_2 ) \subset \tT_+$. Recall that, by  (\ref{tW-htau=tk}) , $ \tW( t_{k-1} ) < \hatq K $.

\item[ ii)] That $  \ttau^+_1 > t_{k-1}$ implies  $\tv (\ttau^+_1) =0$ due to the continuity of $\tv (t)$ on $( t_{k-1}, t_k) $. Therefore, $\tW(\ttau^+_1 ) = \hatq \tU (\ttau^+_1) < \hatq K$ since $\tU (t) < K $ for all $t < \btauK$. 
\end{itemize}

Let $\ttau = t_{k-1}   \vee    \ttau^+_1$,  then $\tW(\ttau ) < \hatq K$ and $\tU (t) \le \tW (t) / \hatq$ on $[ \ttau, t_k \wedge \ttau^+_2 ) \subset \tT_+$
It immediately follows from (\ref{EtWa_t}) and (\ref{def-mu}) as well as the Gronwall inequality that
\begin{eqnarray*}
 \lefteqn{  \tW (t)  \le  \tW (\ttau ) + \int_{\ttau }^t \big[ \frac{ \al_2 \tal_1 }{ \al_1 \al} \tU (s) + ( \tal_2 + \mu ) \tW (s) \big] \d s } \\
&&       \le  \tW (\ttau ) + \int_{ \ttau }^t  \big( \frac{ \al_2 \tal_1 }{ \al_1 \al  \hatq}   +   \tal_2 + \mu   \big)  \tW (s)   \d s   \\
&&    \le \tW (\ttau ) e^{ \big( \frac{ \al_2 \tal_1 }{ \al_1 \al  \hatq}   +   \tal_2 + \mu   \big)  ( t- \ttau ) }   \\
&&    <  \hatq K  e^{ \big( \frac{ \al_2 \tal_1 }{ \al_1 \al  \hatq}   +   \tal_2 + \mu   \big)  ( t_k - t_{k-1} ) } \\
&&   \le \hatq K  e^{ \big( \frac{ \al_2 \tal_1 }{ \al_1 \al  \hatq}   +   \tal_2 + \mu   \big)  \oDt }  < K    \hspace{3cm}
\end{eqnarray*}
for all $ t \in ( \ttau, t_k \wedge \ttau^+_2 )$, which is in contradiction with   $\bar{v} (\btauK) \ge 0$ for $t_{k-1} < \btauK < t_k $.

\end{itemize}
 
Therefore, neither $\btauK = t_k$ nor $t_{k-1} < \btauK < t_k $ could be true for any $k \in \NN$. 
So $\btauK > t_k $ for all $k \in \NN$ and, hence, (\ref{oW-K}) holds for all $t \ge 0$. By condition (i),
 this  implies  that
\begin{equation}      \label{ExpStab_x-y}
   \EE |x (t) |^p  \le \frac{c_2} {c_1} K e^{- \mu t} \;\; {\rm and} \; \; \EE |y (t) |^p  \le \frac{\al_1 \al \tc_2} {\al_2 \tc_1} K e^{- \mu t}
\end{equation}
for all $t \ge 0$, where $\mu >0$ and $K>0$ are given by (\ref{def-mu}) and (\ref{def-K}), respectively.

{\it Step 5:} We have shown by (\ref{ExpStab_x-y})  the $p$th moment exponential stability of  $x(t)$ and that of $y(t)$.
Note that 
 $ z (t) = [ x^T(t) \;  y^T(t) ]^T $ 
and, hence,
\begin{equation*}    \label{z_Euclidean}
    | x(t) |^2 \vee | y (t) |^2 \le | z(t) |^2  = |x(t) |^2 + |y(t)|^2     
\end{equation*}
  for all $t \ge 0$. By the elementary and  the H{\" o}lder  inequalities,  
\begin{eqnarray}   \label{z_p}
  (  | z(t) |^2)^{p/2} 
   = ( |x(t) |^2 + |y(t)|^2 )^{p/2}   \nonumber   \\
  \le   k_p  ( |x(t) |^p + |y(t)|^p )
\end{eqnarray}
for all $ t \ge 0$,  where $k_p=1$ when $0<p \le 2$ and $k_p = 2^{(p-2)/2}$ when $ p > 2$. 
From (\ref{ExpStab_x-y}) and (\ref{z_p}), it follows that
\begin{eqnarray*}    
   \EE |z (t)|^p  \le k_p \EE |x(t)|^p + k_p  \EE |y(t)|^p     \hspace{1.8cm}    \\
  \le  \big( \frac{  c_2} {c_1} + \frac{\al_1 \al \tc_2} {\al_2 \tc_1}  \big) k_p K e^{- \mu    t}     \hspace{1.8cm}    \\
\le  \big( \frac{  c_2} {c_1} +\frac{\al_1 \al \tc_2} {\al_2 \tc_1}  \big) K_0 |z_0|^p e^{- \mu    t}  \quad \forall \, t \ge 0
\end{eqnarray*}
 where $K$ is given by (\ref{def-K}) and 
$K_0 = \frac{ \al_1 + \al_2} {   \al_1 \al    \hatq }  (c_2 + \tc_2) k_p  $.

This means that SiDE  (\ref{Compact-z}) (viz. CPS (\ref{SiDE-xy})), or say, $z(t)$ is $p$th moment exponentially stable (with Lyapunov exponent no larger than $- \mu$ and $\mu >0$ given by (\ref{def-mu})).  
\end{proof}
%
%
\begin{remark}  \label{remark-ImpulseInterval}
If $\al_1, \al_2, \tal_1, \tal_2 $ are  all positive and   determined,  condition (\ref{ImpulseInterval}) in Theorem \ref{Theorem_ExpStab} can be specified as
\begin{equation}  \label{ImpulseInterval_max}
0 < \uDt \le \oDt <    \htau (\hatq_\ast   \vee  \hatq_0) ,
\end{equation}
where   $\hatq_\ast$ and $\hatq_0$ are given by  (\ref{def-hatq-ast}) and (\ref{def-hatq-0}) below, respectively. 
Obviously,    $\htau (\hatq) > 0$ for every $\hatq \in (0,1)$ and $\htau (\hatq)$  is   a continuously differentiable function    on $(0,1)$ with derivative
\begin{equation}   \label{dhtau_dhatq}
 \frac{ \d \htau ( \hatq ) } { \d \hatq} = - \big( \frac{ \al_2 \tal_1   } {  \al_1  \sqrt{\tal_2}}  + \sqrt{\tal_2} \hatq \big)^{-2}  \tau^{\prime} (\hatq), 
\end{equation}
where 
$ 
\tau^{\prime} (\hatq) =  \frac{ \al_2 \tal_1   } {  \al_1   \tal_2 } \big( 1 + \ln   \hatq   \big) + \hatq
$.  
Note that   $\tau^{\prime} (\hatq) $ is  increasing  on $(0,\infty)$ and
the maximum   of $\htau (\hatq)$   is achieved at $\hatq=\hatq_\ast$ by
\begin{equation}    \label{def-hatq-ast}
  \tau^{\prime} (\hatq_\ast) = \frac{ \al_2 \tal_1   } {  \al_1   \tal_2 } \big( 1 + \ln   \hatq_\ast   \big) + \hatq_\ast =0
\end{equation}
and  $\hatq_\ast \in (e^{-(\al_1   \tal_2 + \al_2 \tal_1 ) /  (\al_2 \tal_1 ) }, 1)$ since 
$ \tau^{\prime} (1) = \frac{ \al_2 \tal_1   } {  \al_1   \tal_2 }  +1 >0 >  \tau^{\prime} (e^{-(\al_1   \tal_2 + \al_2 \tal_1 ) /  (\al_2 \tal_1 ) }) $.
One can compute  $\hatq_\ast$   by solving   (\ref{def-hatq-ast}) with the initial guess 
\begin{equation}   \label{def-hatq-0}
\hatq_0 =  ( {\al_1}^{-1} \al_2    \tilde{\bt}_1 + \tilde{\bt}_2 + \tilde{\bt}_3 ) \vee e^{-(\al_1   \tal_2 + \al_2 \tal_1 ) /  ( \al_2 \tal_1 ) }.
\end{equation} 
It is observed  from condition (\ref{ImpulseInterval}) of Theorem \ref{Theorem_ExpStab} that, for expoonential stability of system  (\ref{SDE_x}-\ref{Impulse_y}),     the choice of $\hatq$ is confined to  $\hatq \in ( \hatq_0, 1)$. By (\ref{dhtau_dhatq}) and (\ref{def-hatq-ast}) as well as (\ref{def-hatq-0}), 
\begin{equation*}
\sup_{\hatq \in (\hatq_0, 1)} \htau( \hatq) = \left\{
\begin{array}{cc}
 \htau (\hatq_\ast), & 0< \hatq_0 \le \hatq_\ast <1  \\
\htau (\hatq_0), & 0< \hatq_\ast < \hatq_0 <1
\end{array}   \right.
\end{equation*}
which implies that (\ref{ImpulseInterval_max}) exactly means 
\begin{equation}   \label{ImpulseInterval_sup}
    0 < \uDt \le \oDt < \htau (\hatq_\ast   \vee  \hatq_0) =\sup_{\hatq \in (\hatq_0, 1)} \htau( \hatq).
\end{equation}
Recall that $\htau (\hatq) $ is continuously differentiable  on $(0,1)$.  If (\ref{ImpulseInterval_sup}) holds, there is   $\hatq \in (\hatq_0, 1)$ sufficiently close to $\hatq_\ast   \vee  \hatq_0$   for (\ref{ImpulseInterval}).
\end{remark}

Furthermore,  under  the linear growth condition (Assumption \ref{lineargrowth}),  the  $p$th  moment exponential stability of SiDE (\ref{Compact-z}) implies its  almost sure exponential stability. The proof is similar to that of \cite[Theorem 4.2, p128]{mao2007book} and is omitted.

\begin{theorem}   \label{Theorem_AlmostSure}
Under Assumption \ref{lineargrowth},
  the  $p$th ($p>0$) moment exponential stability of  SiDE (\ref{Compact-z})  implies that it is also   almost surely exponentially stable.
\end{theorem}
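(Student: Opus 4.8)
The plan is to follow the classical route by which $p$th moment exponential stability is promoted to almost sure exponential stability for It\^o equations of linear growth (cf.\ \cite[Theorem 4.2, p.\,128]{mao2007book}), but organized along the impulse grid $\{t_k\}_{k\in\NN}$ rather than along an arbitrary partition of $\RR_+$. By Definition \ref{pthM_ExpStab} there are constants $K,c>0$ with $\EE|z(t)|^p\le K|z_0|^pe^{-ct}$ for all $t\ge 0$; in particular $\EE|z(t_k)|^p\le K|z_0|^pe^{-ct_k}$ for every $k$. First I would establish the auxiliary bound
\begin{equation*}
  \EE\Big[\sup_{t_k\le t<t_{k+1}}|z(t)|^p\Big]\ \le\ C_1\,\EE|z(t_k)|^p,\qquad k\ge 0,
\end{equation*}
with a constant $C_1>0$ that is \emph{independent of $k$}, and then combine it with the moment decay via a Borel--Cantelli argument to extract the pathwise logarithmic estimate.

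For the auxiliary bound, note that on $[t_k,t_{k+1})$ the solution undergoes no impulse, so there it is a continuous solution of the plain It\^o equation (\ref{SDE_z}) started from the $\ff_{t_k}$-measurable value $z(t_k)$, on an interval of length $t_{k+1}-t_k\le\oDt<\infty$. I would write $z$ in integral form, raise to the $p$th power, take $\sup_{t_k\le t\le u}(\cdot)$ and then $\EE[\cdot]$, bound the diffusion term by the Burkholder--Davis--Gundy inequality (valid for every $p>0$), and use the linear growth condition (\ref{lineargrowth-z}) to reach $\EE[\sup_{t_k\le t\le u}|z(t)|^p]\le a\,\EE|z(t_k)|^p+b\int_{t_k}^u\EE[\sup_{t_k\le s\le r}|z(s)|^p]\,\d r$ for $t_k\le u<t_{k+1}$; Gronwall's inequality then gives $C_1=a\,e^{b\,\oDt}$, depending only on $p$, $L_z$, $\oDt$. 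Three routine provisos: for $0<p<1$ one also invokes $(a+b)^p\le a^p+b^p$; for $1\le p<2$ one applies It\^o to the regularizations $(\eta+|z|^2)^{p/2}$ and lets $\eta\downarrow 0$ to cope with the non-smoothness of $|z|^p$ at the origin; and finiteness of the quantities entering Gronwall's inequality is secured by a stopping-time localization, using $\EE|z(t)|^p<\infty$ from the hypothesis. The impulses never enter this step --- that is the payoff of grid-aligned intervals.

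Next I would combine the two bounds into $\EE[\sup_{t_k\le t<t_{k+1}}|z(t)|^p]\le C_1K|z_0|^pe^{-ct_k}$, fix $\ep\in(0,c)$, and apply Markov's inequality:
\begin{equation*}
  \PP\Big(\sup_{t_k\le t<t_{k+1}}|z(t)|^p>e^{-(c-\ep)t_k}\Big)\ \le\ C_1K|z_0|^pe^{-\ep t_k}\ \le\ C_1K|z_0|^pe^{-\ep\,\uDt\,k},
\end{equation*}
since $t_k\ge k\,\uDt$; the right-hand side is summable in $k$, so the Borel--Cantelli lemma yields an almost surely finite $k_0(\omega)$ with $\sup_{t_k\le t<t_{k+1}}|z(t)|^p\le e^{-(c-\ep)t_k}$ for all $k\ge k_0(\omega)$. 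Then for $t\in[t_k,t_{k+1})$ with $k\ge k_0(\omega)$ we get $\frac1t\ln|z(t)|\le\frac{1}{p\,t}\ln\!\big(\sup_{t_k\le s<t_{k+1}}|z(s)|^p\big)\le-\frac{(c-\ep)\,t_k}{p\,t}\le-\frac{c-\ep}{p}\cdot\frac{t_k}{t_k+\oDt}$, where the last step uses $t<t_{k+1}\le t_k+\oDt$; letting $t\to\infty$ gives $\limsup_{t\to\infty}\frac1t\ln|z(t)|\le-\frac{c-\ep}{p}<0$ almost surely, for every $z_0\in\RR^{n+q}$, which is precisely the almost sure exponential stability of (\ref{Compact-z}) in the sense of Definition \ref{AS_ExpStab}.

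The main obstacle will be the auxiliary bound, specifically keeping $C_1$ genuinely uniform in $k$: this needs the Burkholder--Davis--Gundy estimate for the diffusion term over intervals of length at most $\oDt$ (with the $p<2$ regularization) and a clean Gronwall closure so that only $p,L_z,\oDt$ appear in the constant; decisively, the impulse grid splits $[0,\infty)$ into intervals of bounded length on each of which the dynamics reduces to an ordinary It\^o diffusion, so the impulses themselves do not enter this estimate at all. The remaining pieces --- the Markov step, Borel--Cantelli, and the conversion to the logarithmic $\limsup$ via $t_{k+1}-t_k\le\oDt$ and $t_k\ge k\,\uDt$ --- are routine, and the overall scheme runs parallel to \cite[Theorem 4.2, p.\,128]{mao2007book}.
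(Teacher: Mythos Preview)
Your proposal is correct and is precisely the route the paper indicates: the paper omits the proof, stating only that it ``is similar to that of \cite[Theorem 4.2, p128]{mao2007book},'' and your outline is exactly that argument, with the natural adaptation of partitioning along the impulse grid $\{t_k\}$ so that on each $[t_k,t_{k+1})$ the dynamics reduces to the plain It\^o equation (\ref{SDE_z}) and the uniform bounds $\uDt\le t_{k+1}-t_k\le\oDt$ make the BDG--Gronwall constant $C_1$ independent of $k$ and the Borel--Cantelli series summable. There is nothing to add; the impulse-grid alignment is the one detail specific to (\ref{Compact-z}), and you have handled it cleanly.
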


\section{Stability  of sampled-data control systems}

Let us consider a sampled-data control system  
\begin{equation} \label{sampledSDE}
\d x(t) = [ \barf(x(t) ) + \bu(x(t_\ast) ) ] \d t + \barg( x(t) ) \d B(t) \quad t  \ge 0
\end{equation}
with initial value $x(0) =x_0 \in \RR^n$ and sampling sequence $\{ t_k \}_{k \in \NN}$,   where  $\barf  : \RR^n  \to \RR^n$ and $\barg : \RR^n   \to \RR^{n \times m}$   are measurable  functions with $\barf (0) =0$ and $\barg (0) =0$,  which both  satisfy the local Lipschitz condition and the linear growth condition, that is, 
 there is $\bar{L}_{\bn} >0$ for every integer $\bn \ge 1$ such that
$   | \barf (x ) - \barf (\bar{x} ) |^2   \vee  | \barg (x ) - \barg (\bar{x} ) |^2  \le \bar{L}_{\bn} | x - \bar{x} |^2   $
for all $(x, \bar{x}) \in \RR^n \times \RR^n$ with $|x| \vee |\bar{x}| \le \bn$ and there is $\bar{L} >0$ such that
$   | \barf (x )  |^2   \vee  | \barg (x )  |^2  \le \bar{L} | x  |^2  $
for all $ x \in \RR^n$; $\bu \in C^2 (\RR^n ; \RR^n)$ with  $\bu (0)=0$ is the control input.  Let  $y(t) = u(x(t) ) - u(x(t_\ast) )$ for all $t \ge 0$, then  $\d y(t) = \d u (x(t)) $ on $  ( t_{k-1}, t_k)$ and $y(t_k) - y(t_k^-) = u (x (t_{k-1}) )- u (x (t_k) )$ for all $k \in \NN$. 
By   the It{\^o}  formula,  one can derive  a cyber-physical model of the form  (\ref{SiDE-xy})  for sampled-data control system (\ref{sampledSDE}).   

In this paper, 
 we consider sampled-data  system (\ref{sampledSDE})  that has a linear feedback control
   $\bu (x) = \bB x$ with  matrix $\bB \in \RR^{n \times n}$ 
\begin{equation}    \label{sampledSDE-bB}
\d x(t) = [ \barf(x(t) ) + \bB x (t_\ast ) ] \d t + \barg( x(t) ) \d B(t)  \quad  t  \ge 0
\end{equation}
 so that not only can it  be  easily implemented  \cite{qian2002,tsinias1991} but also its cyber-physical model in the form of  CPS (\ref{SiDE-xy}) satisfies Assumptions \ref{localLipschitz}-\ref{lineargrowth}. 
%
%
Let $y(t) = x(t)  - x(t_\ast) $ for all $ t \ge 0$.
This implies that $\d y(t) = \d  x(t) $ on $  (t_{k-1}, t_k)$ and   
$y(t_k) =0$  for all $k \in \NN$. Using the It{\^o}  formula,  we obtain  a cyber-physical model  of sampled-data control system (\ref{sampledSDE-bB})  
\begin{subequations}  \label{SiDE-bB}
\begin{align}
& \d x(t) = [ \barf (x(t) ) + \bB ( x(t)   - y(t) ) ] \d t + \barg( x(t) ) \d B(t),  \nonumber   \\
& \hspace{5cm} \;\; {}  t \in [0, \infty)   \label{SDE-bB_x} \\
& \d y(t)  =   [ \barf (x(t) ) + \bB  ( x(t)   - y(t) ) ] \d t  +    \barg (x(t)) \d B(t),  \nonumber   \\ 
& \hspace{4.5cm} {} t \in [ 0, \infty) \setminus \{t_k\}_{k \in \NN}  \label{SDE-bB_y}  \\
&  
y(t_k) - y(t_k^-)     =  x (t_{k-1}) -  x (t_k^-),    \quad  k \in \NN   \label{Impulse-bB_y} 
\end{align}
\end{subequations}
with   $x(0)=x_0 \in \RR^n$ and $y(0) =0$.  
Clearly, the CPS  (\ref{SiDE-bB}) of  sampled-data control system (\ref{sampledSDE-bB}) is a specific case of  CPS  (\ref{SiDE-xy}) which satisfies Assumptions \ref{localLipschitz}-\ref{lineargrowth},  where 
$ f(x, y, t) =\tilde{f} (x, y , t)  = \barf (x) + \bB (x - y) $,  
  $  g(x, y, t) =  \tilde{g} (x, y , t) =\barg (x) $, 
 $ \tilde{h}_f (x_{t_k^-}, y_{t_k^-}, k) =x (t_{k-1}) -  x (t_k^-)   $,  
 and $ \bar{h}_g (x_{t_k^-}, y_{t_k^-},  k)  = 0$ for all $ t \in \RR_+$ and $ k \in \NN$. 
 Theorem \ref{Theorem_ExpStab}  and  Theorem \ref{Theorem_AlmostSure}    immediately  yield the following result (see also Remark \ref{remark-ImpulseInterval}).

\begin{theorem}    \label{Theorem_bB}
Suppose that    conditions 
(\ref{LyapunovFunctions})-(\ref{iii-constants})  
hold for  CPS  (\ref{SiDE-bB}). If the sampling sequence $\{ t_k \}_{k \in \NN}$ satisfies (\ref{ImpulseInterval}), then CPS (\ref{SiDE-bB}) is $p$th moment exponentially stable and is also  almost surely exponentially stable.
\end{theorem}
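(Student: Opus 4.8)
The plan is to verify that the cyber-physical model (\ref{SiDE-bB}) is nothing but a special instance of the general SiDE (\ref{SiDE-xy}) satisfying Assumptions \ref{localLipschitz}-\ref{lineargrowth}, so that Theorem \ref{Theorem_ExpStab} and Theorem \ref{Theorem_AlmostSure} apply directly. First I would identify the data of (\ref{SiDE-bB}) with the data of (\ref{SiDE-xy}): one takes $q = n$, $f(x,y,t) = \tilde f(x,y,t) = \barf(x) + \bB(x-y)$, $g(x,y,t) = \tilde g(x,y,t) = \barg(x)$, $\tilde h_f(x_{t_k^-},y_{t_k^-},k) = x(t_{k-1}) - x(t_k^-)$ and $\bar h_g(x_{t_k^-},y_{t_k^-},k) = 0$, exactly as already recorded in the paragraph preceding the statement. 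One then checks $f(0,0,t)=0$, $g(0,0,t)=0$, $\tilde f(0,0,t)=0$, $\tilde g(0,0,t)=0$, $\tilde h_f(0,0,k)=0$, $\bar h_g(0,0,k)=0$, which follow from $\barf(0)=0$ and $\barg(0)=0$.

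Next I would verify Assumptions \ref{localLipschitz} and \ref{lineargrowth} for this choice. For the continuous-part coefficients, the estimates $|f(x,y,t)-f(\bx,\by,t)| \le |\barf(x)-\barf(\bx)| + |\bB|(|x-\bx| + |y-\by|)$ combined with the assumed local Lipschitz property of $\barf$ (and similarly for $g = \barg$, which does not even involve $y$) give (\ref{localLipschitz-x}) with a constant of the form $L_{\bn}$ depending on $\bar L_{\bn}$ and $|\bB|$; the linear growth bound (\ref{lineargrowth-x}) is obtained the same way from the linear growth of $\barf$, $\barg$. For the impulsive term, $\tilde h_f(x_{t_k^-},y_{t_k^-},k) = x(t_{k-1}) - x(t_k^-)$ is a bounded linear functional of $x_{t_k^-} \in C([t_{k-1},t_k);\RR^n)$, so $|\tilde h_f(x_{t_k^-},\cdot) - \tilde h_f(\tilde x_{t_k^-},\cdot)| \le 2\|x_{t_k^-} - \tilde x_{t_k^-}\|$ and $|\tilde h_f(x_{t_k^-},\cdot)| \le 2\|x_{t_k^-}\|$, giving (\ref{localLipschitz-y}) and (\ref{lineargrowth-y}) with $\tilde L_{\bn} = \tilde L = 4$ (and $\bar h_g \equiv 0$ trivially satisfies its bounds). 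Hence Assumptions \ref{localLipschitz}-\ref{lineargrowth} hold.

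With the structural reduction in place, the proof is immediate: by hypothesis, conditions (\ref{LyapunovFunctions})-(\ref{iii-constants}) hold for CPS (\ref{SiDE-bB}) with some pair of candidate Lyapunov functions $V$, $\tilde V$, and the sampling sequence $\{t_k\}_{k\in\NN}$ satisfies (\ref{ImpulseInterval}); therefore Theorem \ref{Theorem_ExpStab} gives that CPS (\ref{SiDE-bB}) is $p$th moment exponentially stable. Since Assumption \ref{lineargrowth} has just been verified, Theorem \ref{Theorem_AlmostSure} then yields almost sure exponential stability as well, completing the argument.

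I do not expect any real obstacle here — the statement is a corollary of the two main theorems, and the only work is the bookkeeping verification that the sampled-data model fits the abstract framework and its regularity assumptions. The one point that deserves a line of care is the impulsive functional $\tilde h_f$: it must be read as acting on the left-limit path segment $x_{t_k^-}$ (so that $x(t_{k-1})$ and $x(t_k^-)$ are both recovered from the argument), and one should note that, because $y(t_k) = 0$ for all $k$ as observed before the statement, the jump condition (\ref{Impulse-bB_y}) is consistent with $y(0)=0$; this is what makes $y(t) = x(t) - x(t_\ast)$ a legitimate cyber state. Everything else is routine.
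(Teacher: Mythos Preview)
Your proposal is correct and follows exactly the paper's approach: the paper states that CPS (\ref{SiDE-bB}) is a specific case of (\ref{SiDE-xy}) satisfying Assumptions \ref{localLipschitz}-\ref{lineargrowth} and then says that Theorems \ref{Theorem_ExpStab} and \ref{Theorem_AlmostSure} ``immediately yield'' the result, without writing out a proof. Your explicit verification of the Lipschitz and linear-growth bounds is simply a fleshing-out of what the paper dismisses as ``clearly''.
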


\begin{remark}   \label{remark-hybrid} 
The dynamics of a sampled-data system is   written as an impulsive system in the references   \cite{nagh2008,rios2016} too.  Note that some   approaches \cite{fridman2004,fridman2010,seuret2012} describe the sampled state $x(t_\ast)$ with input delay mechanisms  while the  hybrid system \cite[Eq.(13)]{nagh2008}  just depicts its subsystem  $x(t_\ast)$  as a pure jump process.  Clearly,  our cyber subsystem (\ref{SDE-bB_y},\ref{Impulse-bB_y}) is distinct from the pure jump process of  $x(t_\ast)$  in the literature. 
\end{remark}

\subsection{Controller emulation (Process-oriented models)}

By approach of controller emulation  that is from the viewpoint of process-oriented models,    a continuous-time state-feedback controller  is designed based on the continuous-time plant model 
for stability of the closed-loop system  
\begin{equation}      \label{controlledSDE}
\d x(t) = \barf_u (x )  \d t + \barg ( x(t) ) \d B(t)    \quad t  \ge 0
\end{equation}
with $\barf_u (x ) = \barf (x) + \bu (x) = \barf (x) + \bB x$ (being the drift of the closed-loop system) and then  the state-feedback  controller is discretized and implemented using a sampler and  ZOH device. This leads to the sampled-data control system (\ref{sampledSDE-bB}) and its cyber-physical dynamics is described by (\ref{SiDE-bB}). 
The main question in the design method   is, see  \cite{astrom1997,fridman2010,nagh2008,nesic2001},

{\it  for what sampling sequence $ \{ t_k \}_{k \in \NN} $ does the sampled-data control system  (\ref{sampledSDE-bB}) preserve the stability property of the  continuous-time  system   (\ref{controlledSDE})?  }

Let us apply our CPS theory   and address the main  question.
Specifically,  by Theorem \ref{Theorem_bB},  we find the conditions on   $ \{ t_k \}_{k \in \NN} $  for   exponential stability of the sampled-data system  (\ref{sampledSDE-bB}) when  the feedback control $\bu (x) = \bB x$ is designed such that
 \begin{equation}      \label{bLV-bB} 
     \ll V(x )    \le  - 2 \bal V(x) \quad    \forall  \, x \in \RR^n
\end{equation}
  and the closed-loop system (\ref{controlledSDE}) is exponentially stable  \cite{khas2012,mao2007book},
where $\bal >0 $ is a  constant, $V \in C^2 (\RR^n ; \RR_+)$ is a Lyapunov function  with (\ref{LyapunovFunctions_x}) and its infinitesimal generator $\ll V: \RR^n  \to \RR$  associated with  system (\ref{controlledSDE}) is, as (\ref{LV}) above,  
\begin{equation}    \label{bLV}
 \ll V (x) = V_x (x) \barf_u (x )  + \frac{1}{2}   {\rm trace} \left[ \barg^T (x ) V_{xx} (x) \barg (x) \right] .
\end{equation}

Let us first employ the same Lyapuov function  $V(x) = \tV (x) $ for both the physical and the cyber subsystems  since it is very helpful for  exposing not only  the  interactions between the subsystems \cite{huang_partI}  but also the intrinsic relationship between the two main approaches, see  Remarks \ref{ImpulseInterval-ce-rewrite}-\ref{equivalence-ce-dta}.

\begin{theorem}   \label{Theorem-ce-1}
Suppose that the Lyapunov function $V(x)$  with condition (\ref {bLV-bB})  for physical  system  (\ref{controlledSDE}) is a quadratic function  
\begin{equation}   \label{LyapunovFunction-quadratic}
    V (x) = x^T P x 
\end{equation}
 with matrix $P >0 $.
Let the sampling sequence $\{ t_k \}_{k \in \NN}$ satisfy 
\begin{equation}  \label{ImpulseInterval-ce}
  0 < \underline{\Delta t} \le  \overline{\Delta t} <  \htau (q_\ast)  
\end{equation}
where function $\htau : (0, 1) \to \RR_+$ is defined by
\begin{multline}    \label{def-htau-ce}
\htau (q)  = -   \bal \sqrt{q_\ast}   (\bal \sqrt{q} )^2  \ln q \cdot    \Big\{ \sqrt{ \bal_b} \big[ \big( 2  \bal \sqrt{q_\ast}  + \bal +   \sqrt{\bal_f } \big) \\
\hspace{-0.2cm} {} \cdot   ( \bal \sqrt{q} )^2   
  +  ( \bal  + \sqrt{\bal_f } ) ( \bal \sqrt{q_\ast} )^2    + 2  \sqrt{  \bal_b \bal_f }  \bal \sqrt{q_\ast}  \big]  \Big\}^{-1}
\end{multline}
and $q_\ast  \in (0, e^{-1} )$ is the unique root of 
\begin{multline}    \label{def-btau-prime-ce}
  \btau^{\prime} (q) := 2 ( \bal \sqrt{q} )^2 + (\bal + \sqrt{ \bal_f} ) \bal \sqrt{q}  \\
 {} +  \big[ ( \bal + \sqrt{ \bal_f } ) \bal \sqrt{q} + 2 \sqrt{ \bal_b \bal_f } \, \big] ( \ln q +1 )   =0 
\end{multline}
with $\bal_b >0$ and $\bal_f>0$ being   such that, for all $x \in \RR^n$,
\begin{equation}        \label{control-drift}
 V(\bB x) \le \bal_b V(x)   \;\;  {  and} \;\;   V \big( \barf_u (x) + \bal x \big) \le \bal_f V(x). 
\end{equation}
  Then   CPS  (\ref{SiDE-bB})   is mean-square exponentially stable and  is also almost surely exponentially stable, which implies that   its subsystem (\ref{SDE-bB_x}), viz.,   (\ref{sampledSDE-bB})      is mean-square exponentially stable and  is also almost surely exponentially stable.
\end{theorem}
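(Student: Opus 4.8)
The plan is to deduce Theorem~\ref{Theorem-ce-1} from Theorem~\ref{Theorem_bB} by verifying its hypotheses (\ref{LyapunovFunctions})--(\ref{iii-constants}) and the sampling condition (\ref{ImpulseInterval}) for CPS~(\ref{SiDE-bB}), using the common candidate pair $V(x)=x^{T}Px$, $\tV(y)=y^{T}Py$ and $p=2$. Since (\ref{SiDE-bB}) is already known to be a special case of (\ref{SiDE-xy}) obeying Assumptions~\ref{localLipschitz}--\ref{lineargrowth}, condition~(i) of Theorem~\ref{Theorem_ExpStab} holds at once with $c_{1}=\tilde c_{1}=\lambda_{m}(P)$ and $c_{2}=\tilde c_{2}=\lambda_{M}(P)$, and condition~(iii) is immediate: the jump rule (\ref{Impulse-bB_y}) gives $y(t_{k})=0$, so $\tV(y(t_{k}))=0$ and (\ref{EtV_tk-1}) holds with $\tilde\bt_{1}=\tilde\bt_{2}=0$ and any $\tilde\bt_{3}\in(0,q_{\ast})$, whence (\ref{iii-constants}) is satisfied and the admissible range of $\hatq$ in (\ref{ImpulseInterval}) is essentially all of $(0,1)$. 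So the work is condition~(ii) together with reconciling the two sampling bounds.

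For condition~(ii) I would compute the two generators directly (the quadratic form of $V$ makes this elementary). Setting $\barf_{u}(x)=\barf(x)+\bB x$, the $x$-drift in (\ref{SDE-bB_x}) is $\barf_{u}(x)-\bB y$, so $\ll V(x,y)$ equals the closed-loop generator of $V$ at $x$ (which by (\ref{bLV-bB}) is $\le-2\bal V(x)$) minus the cross term $2x^{T}P\bB y$; bounding $|2x^{T}P\bB y|\le 2\sqrt{\bal_{b}}\sqrt{V(x)\tV(y)}$ by the first inequality of (\ref{control-drift}) and a weighted Young inequality gives (\ref{LV_1}) with $\al_{1}=2\bal-\delta_{0}$, $\al_{2}=\bal_{b}/\delta_{0}$, the best being $\al_{1}=\bal$, $\al_{2}=\bal_{b}/\bal$ at $\delta_{0}=\bal$. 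Combining (\ref{bLV-bB}) with the second inequality of (\ref{control-drift}) also yields the clean estimate ${\rm trace}[\barg^{T}(x)P\barg(x)]\le 2\sqrt{\bal_{f}}\,V(x)$; since the drift and diffusion of $y$ coincide with those of $x$, this gives $\tll\tV(x,y)\le 2y^{T}P\barf_{u}(x)-2y^{T}P\bB y+2\sqrt{\bal_{f}}\,V(x)$, and splitting $\barf_{u}(x)=(\barf_{u}(x)+\bal x)-\bal x$ and estimating $2y^{T}P(\barf_{u}(x)+\bal x)$, $2\bal y^{T}Px$ and $2y^{T}P\bB y$ by weighted Young inequalities together with both bounds of (\ref{control-drift}) produces (\ref{tLtV_1}) with $\tilde\al_{1},\tilde\al_{2}\ge 0$ explicit in $\bal,\bal_{b},\bal_{f}$ and free weights.

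It then remains to show that $\oDt<\htau(q_{\ast})$ forces (\ref{ImpulseInterval}) for a suitable $\hatq$. Following Remark~\ref{remark-ImpulseInterval}, I would maximize $\htau(\hatq)=(-\ln\hatq)\big/\big((\al_{1}\hatq)^{-1}\al_{2}\tilde\al_{1}+\tilde\al_{2}\big)$ jointly over $\hatq\in(0,1)$ and the free weights; since $\al_{1},\al_{2}$ enter only through $\al_{2}/\al_{1}$, the choice $\al_{1}=\bal$, $\al_{2}=\bal_{b}/\bal$ is optimal, and the weights inside $\tilde\al_{1},\tilde\al_{2}$ are eliminated by elementary AM--GM steps, after which the substitution $t=\bal\sqrt{\hatq}$ reduces $\htau$ to $\dfrac{-t^{2}\ln(t/\bal)}{\sqrt{\bal_{b}}\,[\,t^{2}+(\bal+\sqrt{\bal_{f}})t+\sqrt{\bal_{b}\bal_{f}}\,]}$ on $(0,\bal)$. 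Equating its derivative to zero and returning to $q$ through $t=\bal\sqrt q$ gives exactly equation (\ref{def-btau-prime-ce}); since $\htau\to0$ as $\hatq\to0^{+}$ and as $\hatq\to1^{-}$ while $\htau>0$ on $(0,1)$, the (unique) root $q_{\ast}\in(0,e^{-1})$ of (\ref{def-btau-prime-ce}) is the maximizer, and evaluating the optimized $\htau$ at $q_{\ast}$ while using the stationarity relation $\btau'(q_{\ast})=0$ collapses the value to (\ref{def-htau-ce}). Thus $\oDt<\htau(q_{\ast})$ secures (\ref{ImpulseInterval}) for every $\hatq$ close enough to $q_{\ast}$, and Theorem~\ref{Theorem_bB} delivers mean-square and almost sure exponential stability of CPS~(\ref{SiDE-bB}); finally $x(t)=Cz(t)$ satisfies $|x(t)|\le|z(t)|$, so the same holds for the sampled-data system (\ref{sampledSDE-bB}).

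I expect the principal obstacle to be entirely computational and in two parts. First, extracting the sharpest admissible $\al_{1},\al_{2},\tilde\al_{1},\tilde\al_{2}$ from the two generator estimates: the cross terms involving $\bB y$ and $\barf_{u}(x)+\bal x$ must be split so that (\ref{control-drift}) applies without slack, or else the guaranteed $\oDt$ shrinks below $\htau(q_{\ast})$. Second, pushing the joint optimization over $\hatq$ and the Young weights all the way to the transcendental characterization (\ref{def-btau-prime-ce}) and the closed form (\ref{def-htau-ce}), including the subsidiary facts (examining $\btau''$ near the root) that $q_{\ast}$ lies in $(0,e^{-1})$, is unique, and is a genuine maximizer rather than merely a critical point.
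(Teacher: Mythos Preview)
Your proposal is correct and follows the paper's overall strategy: reduce to Theorem~\ref{Theorem_bB}, take $V=\tV=x^TPx$, verify conditions (i)--(iii) (with (iii) trivial since $y(t_k)=0$), and then optimise the sampling bound (\ref{ImpulseInterval}) to land on (\ref{def-btau-prime-ce}) and (\ref{def-htau-ce}). The values you obtain for $\al_1=\bal$, $\al_2=\bal_b/\bal$ and the treatment of condition~(iii) match the paper exactly.

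Where you differ is in the mechanics of the $\tll\tV$ estimate and the optimisation. The paper does not isolate a trace bound; instead it inserts $2x^TP[\barf_u(x)+\bal x]$ and subtracts it back, so that (\ref{bLV-bB}) kills the combination $2x^TP\barf_u(x)+{\rm trace}[\barg^TP\barg]+2\bal V(x)$ outright, and then applies a weighted Young inequality to $2(y-x)^TP[\barf_u(x)+\bal x]$, expanding $V(y-x)$. This produces $\tal_1,\tal_2$ with two free weights $b_1,b_2$, and the paper then sets up a Lagrangian/KKT optimisation in $(q,b_1,b_2)$, solving first for $b_1,b_2$ in terms of $q$ (the same expressions you would get from AM--GM) and finally for $q_\ast$ via (\ref{def-btau-prime-ce}). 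Your route---extracting ${\rm trace}[\barg^TP\barg]\le 2\sqrt{\bal_f}\,V(x)$ from (\ref{bLV-bB}) and (\ref{control-drift}), bounding the remaining cross terms directly by weighted Young, and optimising the weights by AM--GM to obtain the one--variable envelope $\dfrac{-t^{2}\ln(t/\bal)}{\sqrt{\bal_b}\,[t^2+(\bal+\sqrt{\bal_f})t+\sqrt{\bal_b\bal_f}]}$---is more streamlined and lands precisely on the envelope that the paper records only afterwards in Remark~\ref{ImpulseInterval-ce-partial}. Both routes yield the same maximal value $\htau(q_\ast)$; yours makes clearer why the envelope is the natural object, while the paper's formal KKT treatment makes explicit that $(q_\ast,b_1^\ast,b_2^\ast)$ is the unique joint optimum and that (\ref{def-htau-ce}) is $\btau(q,b_1^\ast,b_2^\ast)$ for arbitrary $q$. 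One small point of phrasing: (\ref{def-htau-ce}) is a function of $q$ (with $b_1,b_2$ frozen at their $q_\ast$-values), not just the single value $\htau(q_\ast)$; your envelope agrees with it at $q=q_\ast$, which is all the theorem needs.
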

\begin{proof}  It will follow from Theorem \ref{Theorem_bB} that  CPS  (\ref{SiDE-bB})   is mean-square exponentially stable and   also almost surely exponentially stable  if conditions  (\ref{LyapunovFunctions})-(\ref{ImpulseInterval})    with $p=2$ hold for   (\ref{SiDE-bB}). 

Let  $\tV (x)=V(x) $   defined as (\ref{LyapunovFunction-quadratic}). So
$\lambda_m (P) |x|^2 \le V(x) \le \lambda_M (P) |x|^2$ for all $x \in \RR^n$ 
 and hence condition (\ref{LyapunovFunctions}) holds with positives 
$p=2$, $c_1 =\tc_1=\lambda_m (P)$ and $c_2 =\tc_2 = \lambda_M (P) $.  

Since both $\barf (x)$ and $\bu (x) + \bal x = (\bB + \bal I_n )x$ satisfy the linear growth conditions $| \barf (x) |^2 \le \bar{L} |x|^2$ and $| ( \bB + \bal I_n )x|^2 \le | \bB + \bal I_n |^2 |x|^2$, so does $\barf_u (x) + \bal x $, that is,
$| \barf_u (x) + \bal x |^2 = | \barf (x) + (\bB + \bal I_n) x |^2 \le 2 ( \bar{L} + | \bB + \bal I_n |^2) |x|^2$ for all $x \in \RR^n$. Therefore, for all $x \in \RR^n$,   
\begin{eqnarray*}
&& V ( \barf_u (x) + \bal x ) \le \lmd_M (P) | \barf_u (x) + \bal x|^2  \\
&&  \hspace{2.3cm}   \le 2 (  \bar{L} + |\bB + \bal I_n |^2)  \frac{ \lmd_M (P) }{ \lmd_m (P) } V (x),  \\
&& V ( \bB x )   \le    \lmd_M (P) | \bB|^2 | x|^2 \le   \frac{ | \bB|^2 \lmd_M (P) }{ \lmd_m (P) } V (x).
\end{eqnarray*}
So there exist   positive constants   $\bal_b \in (0,  | \bB |^2 \lmd_M (P) / \lmd_m (P) ]$ and $\bal_f \in (0, 2 ( \bar{L} + | \bB + \bal I_n |^2)   \lmd_M (P) / \lmd_m (P) ]$  for (\ref{control-drift}). 

By  (\ref {bLV-bB}), (\ref{bLV})    and \cite[Lemma 1]{huang2009a}, for all $x, y \in \RR^n$,
\begin{eqnarray*}    
 \lefteqn{   \ll V (x, y ) =   2 x^T P [ \barf (x) + \bB x - \bB y ]  + {\rm trace} \big[ \barg^T (x) P \barg (x)   \big]  }   \\
 &&  =   2 x^T P \barf_u  (x)    + {\rm trace} \big[ \barg^T (x) P \barg (x)   \big]  - 2x^T P \bB  y  \hspace{1cm} \\
 &&  \le - 2 \bal V(x) - 2x^T P \bB  y   \\
 && \le -2 \bal   V(x)  + \bal  V(x) + \frac{1} {\bal}  V (\bB y)   \\
 && \le -   \bal    V(x) +  \frac{   \bal_b} {\bal} V (  y) .
\end{eqnarray*}
 Hence (\ref{LV_1}) holds with $\al_1 =   \bal   $ and $\al_2 =     \bal_b / \bal $.  Similarly, 
\begin{eqnarray*}
\lefteqn{   \tll V (x, y ) =   2 y^T P [ \barf (x) + \bB x - \bB y ]  + {\rm trace} \big[ \barg^T (x) P \barg (x)   \big]  }   \\
&&  =   2y^T P \barf_u (x)   + {\rm trace} \big[ \barg^T (x) P \barg (x)   \big]   - 2 y^T P \bB y \\
&& =   2y^T P [ \barf_u (x) + \bal x] + {\rm trace} \big[ \barg^T (x) P \barg (x)   \big]  \\
&& \quad  {} -2 \bal y^T P x - 2 y^T P \bB y    \\
&&   \le 2x^T P \barf_u (x)   + {\rm trace} \big[ \barg^T (x) P \barg (x)   \big]  + 2 \bal V(x)   \\ 
&& \quad  {} + 2(y-x)^T P [ \barf_u (x) + \bal x] -2 \bal y^T P x- 2 y^T P \bB y   \quad   \\
&&   \le     b_1^{-1} V (\barf_u (x) + \bal x )  + b_1 (y-x)^T P (y-x) \\
&& \quad    {}  -2 \bal y^T P x  + \sqrt{\bal_b} V(y) + \frac{1}{\sqrt{\bal_b}} V (\bB y)   \\
&& \le   b_1  V (x) + b_1^{-1}  V (\barf_u (x) + \bal x ) - 2 ( b_1 + \bal) y^T P x \\
&&  \quad   {} + (   \sqrt{\bal_b}+ b_1   ) V(y) + \frac{1}{\sqrt{\bal_b}} V (\bB y)
\end{eqnarray*}
\begin{multline}      \label{LV_2-ce}
\le ( b_1 +  \bal_f b_1^{-1}   ) V (x)    - 2 ( b_1 + \bal) y^T P x
      + (  2  \sqrt{\bal_b} + b_1  ) V(y)   \\
\le [ b_1 +  \bal_f b_1^{-1}  + ( b_1 + \bal) b_2^{-1} ] V (x)    \hspace{2.8cm}    \\
   {} + [  2  \sqrt{\bal_b} + b_1  + ( b_1 + \bal) b_2] V(y)     \hspace{1.7cm}
\end{multline}
for all $x, y \in \RR^n$, where $b_1  $ and $b_2 $    are positive constants to be determined. So condition (\ref{tLtV_1})  holds with $\tal_1 = b_1 +  \bal_f b_1^{-1}   + ( b_1 + \bal) b_2^{-1} $ and $\tal_2 = 2  \sqrt{\bal_b} + b_1  + ( b_1 + \bal) b_2 $.

Observe that  (\ref{Impulse-bB_y}) and $y(t) = x(t) - x (t_\ast)$ for all $t \ge 0$ give
$ y(t_k) =y(t_k^-) + x(t_{k-1} ) - x(t_k^-)  =0 $ for all  $  k \in \NN$.
This immediately produces $V( y(t_k)) =0$ and  (\ref{EtV_tk-1})  with $\tilde{\bt}_1 =\tilde{\bt}_2 = \tilde{\bt}_3 =0$, which implies that  nonngegatives  $\tilde{\bt}_1$,   $ \tilde{\bt}_2 $ and $ \tilde{\bt}_3$ can be chosen for  (\ref{iii-constants}) with  arbitrary small $ {\al_1}^{-1} \al_2    \tilde{\bt}_1 + \tilde{\bt}_2 + \tilde{\bt}_3  >0$. Therefore, conditions (\ref{EtV_tk-1})-(\ref{iii-constants})  hold. 

Since  $ {\al_1}^{-1} \al_2    \tilde{\bt}_1 + \tilde{\bt}_2 + \tilde{\bt}_3  >0$ can be arbitrary small, substitution of $\al_1 =   \bal   $, $\al_2 =   \bal_b / \bal $, $\tal_1 = b_1 + \bal_f b_1^{-1}    + ( b_1 + \bal) b_2^{-1} $ and $\tal_2 = 2  \sqrt{\bal_b} + b_1  + ( b_1 + \bal) b_2 $ into  (\ref{ImpulseInterval}) yields function $ \htau (q) = \btau (q, b_1, b_2) $ for $q \in (0, 1)$ with positive parameters $b_1, b_2$ to be determined, where function $\btau: (0,1) \times \RR_+ \times \RR_+ \to \RR_+ $ is defined by 
\begin{multline}  \label{def-btau-ce}
   \btau (q, b_1, b_2)  = - \bal^2 q  \ln q  \Big\{      \big[ 2  \sqrt{\bal_b} + b_1 +  ( b_1 + \bal ) b_2 \big]  \bal^2 q  \\
     {}     +   \bal_b  \big[ b_1 + \bal_f b_1^{-1}   + ( b_1 + \bal ) b_2^{-1}   \big]  \Big\}^{-1}.
\end{multline} 

The supremum $\sup_{ q \in (0,1) } \htau (q)$ in condition  (\ref{ImpulseInterval}) (see also Remark \ref{remark-ImpulseInterval}) can be obtained by solving  optimization problem
\begin{eqnarray}
&& \min \;     \btau^{-1} (q, b_1, b_2)      \label{min_btau-1}  \\
&&  {\rm s.t.} \;\; h_j (q, b_1, b_2) > 0, \quad j=1, 2, 3, 4  \nonumber 
\end{eqnarray}
where function $\btau^{-1} : (0,1) \times \RR_+ \times \RR_+ \to \RR_+ $ is given by 
\begin{equation}    \label{def-btau-1-ce}
 \btau^{-1} (q, b_1, b_2)  = \frac{1}{ \btau (q, b_1, b_2) }    
\end{equation}
with $ \btau (q, b_1, b_2)$ by (\ref{def-btau-ce})  and $h_j (q, b_1, b_2)$ is the $j$th element of vector $h (q, b_1, b_2) = \begin{bmatrix} q & 1-q & b_1 & b_2  \end{bmatrix}^T$ for $j=1, 2, 3, 4$. The Lagrangian $\lagr : \RR^3 \times \RR^4 \to \RR$ associated with the problem (\ref{min_btau-1}) is defined as, see, e.g., \cite{boyd2004},
\begin{equation}    \label{Lagrangian}
  \lagr (q, b_1, b_2, \lmd) = \btau^{-1} (q, b_1, b_2) - \lmd^T h (q, b_1, b_2)
\end{equation}
where  $ \lmd = \begin{bmatrix} \lmd_1 & \lmd_2 & \lmd_3 & \lmd_4 \end{bmatrix}^T $ is the Lagrangian multiplier vector. 
The Karush-Kuhn-Tucker (KKT) conditions give 
\begin{multline*}
 \frac{ \partial  \lagr (q, b_1, b_2, \lmd)} { \partial q} =   
 \frac{ \partial  \lagr (q, b_1, b_2, \lmd)} { \partial b_1} =    
  \frac{ \partial  \lagr (q, b_1, b_2, \lmd)} { \partial b_2} =0,  \\
h_j (q, b_1, b_2) > 0, \;  \lmd_j \ge 0, \;  \lmd_j h_j (q, b_1, b_2) =0, \;  j=1, 2, 3, 4
\end{multline*} 
which   imply $\lmd_j = 0$ for $ j=1, 2, 3, 4$. So  the Lagrangian (\ref{Lagrangian}) leads to  $ \lagr (q, b_1, b_2, \lmd) =  \btau^{-1} (q, b_1, b_2) $ and the KKT    optimality conditions  for the problem (\ref{min_btau-1})  
\begin{equation*}
 \frac{ \partial  \btau^{-1} (q, b_1, b_2) } { \partial q} =   
 \frac{ \partial  \btau^{-1} (q, b_1, b_2) } { \partial b_1} =  
  \frac{ \partial \btau^{-1} (q, b_1, b_2) } { \partial b_2} =0.   
\end{equation*} 
By  (\ref{def-btau-1-ce}) and  (\ref{def-btau-ce}), the KKT optimality conditions 
produce
\begin{eqnarray}
  \lefteqn{  \frac{ \partial \btau^{-1} (q, b_1, b_2) } { \partial b_2} = 0  \; \;  \Rightarrow \; \;  
 \bal^2 ( b_1 + \bal ) q -  \frac{ \bal_b ( b_1 + \bal ) } { b_2^ 2 } = 0   }  \nonumber  \\
&& {}  \Rightarrow \; \; 
\bal^2 q - \frac{ \bal_b } { b_2^ 2 } =0  \; \; \Rightarrow \; \;  b_2 =   \frac{ \sqrt{ \bal_b} }{  \bal \sqrt{q} },  \hspace{2.5cm}  \label{b2-optimal}
\end{eqnarray}
\begin{eqnarray}
\lefteqn{  \frac{ \partial \btau^{-1} (q, b_1, b_2) } { \partial b_1} = 0   } \nonumber  \\    
&&   \Rightarrow \; \; 
 \bal^2 (  1 + b_2 ) q + \bal_b \big( 1-  \frac{  \bal_f } { b_1^ 2 }  + \frac{1}{ b_2} \big)= 0    \nonumber \\
&& \Rightarrow \;\;
 \frac{ \bal_b \bal_f } { b_1^ 2 } 
 = \bal^2 q + \bal \sqrt{ \bal_b q } + \bal_b + \bal \sqrt{ \bal_b q }   \nonumber \\
&&  \Rightarrow \; \;  \frac{ \bal_b \bal_f } { b_1^ 2 }  = \big( \bal \sqrt{q} + \sqrt{\bal_b} \big)^2   \nonumber \\
&&   \Rightarrow \; \; b_1 =\frac{\sqrt{ \bal_b \bal_f } } { \bal \sqrt{q} + \sqrt{\bal_b} } ,   \label{b1-optimal}    \\
 \lefteqn{ \frac{ \partial \btau^{-1} (q, b_1, b_2) } { \partial q} = 0  }   \nonumber  \\ 
&&  \Rightarrow \;\;   \bal^2  \big[ 2  \sqrt{\bal_b} + b_1 +  ( b_1 + \bal ) b_2 \big]  q \ln q  \nonumber   \\
&&     \; \; {} -     \bal^2  \big[ 2  \sqrt{\bal_b} + b_1 +  ( b_1 + \bal ) b_2 \big] q  ( \ln q +1 )    \nonumber   \\
&&   \;\; \,{}     -   \bal_b  \big[ b_1 + \bal_f b_1^{-1}   + ( b_1 + \bal ) b_2^{-1}   \big]   (  \ln q + 1) =0   \nonumber  \\
&& {} \Rightarrow \;\;  \bal^2 q   \big[ 2  \sqrt{\bal_b} + b_1 +  ( b_1 + \bal ) b_2 \big]  +   \bal_b   ( \ln q + 1 ) \nonumber   \\
&&   \qquad  \;\; {} \cdot   \big[ b_1 + \bal_f b_1^{-1}   + ( b_1 + \bal ) b_2^{-1}   \big]  =0   \nonumber   \\
&& {} \Rightarrow \;\;  \bal^2 q \big[ 2  \sqrt{\bal_b} + b_1 +   b_1 b_2 + \bal   b_2 \big]  +   \bal_b   ( \ln q + 1 )   \nonumber  \\
&&   \qquad  \;\; {} \cdot    \big[ b_1 + \frac{ \bal_f } {b_1 }   + \frac{ b_1 } {b_2} + \frac{ \bal } { b_2 }   \big]   =0.    
  \label{q-optimal}
\end{eqnarray} 
Substitution of (\ref{b2-optimal}) and (\ref{b1-optimal}) into (\ref{q-optimal}) and  some rearrangements produce a transcendental equation
\begin{eqnarray*}     
\lefteqn{  2 ( \bal \sqrt{q} )^3 + ( \bal + 2 \sqrt{\bal_b} + \sqrt{\bal_f } ) ( \bal \sqrt{q} )^2  } \\
&& \quad {} + \sqrt{\bal_b} ( \bal + \sqrt{\bal_f } )  \, \bal \sqrt{q}  
 + \big[ ( \bal + \sqrt{\bal_f } )  ( \bal \sqrt{q} )^2   \\
&& \quad {} + \sqrt{\bal_b} ( \bal + 3\sqrt{\bal_f } )  \, \bal \sqrt{q} + 2 \bal_b \sqrt{\bal_f } \, \big]  ( \ln q + 1  )   \\
&&  = (\bal \sqrt{q} + \sqrt{ \bal_b} ) \Big\{  2 ( \bal \sqrt{q} )^2 + (\bal + \sqrt{ \bal_f} ) \bal \sqrt{q}  \\
&& \quad  {} +  \big[ ( \bal + \sqrt{ \bal_f } ) \bal \sqrt{q} + 2 \sqrt{ \bal_b \bal_f} \, \big] ( \ln q +1 ) \Big\} =0,
\end{eqnarray*}
which is equivalent to equation (\ref{def-btau-prime-ce}) due to   $\bal \sqrt{q} + \sqrt{ \bal_b} >0$. 
It is observed from (\ref{def-btau-prime-ce})  that $\btau^{\prime} (\cdot)$ is   continuous and increasing  on $(0, \infty)$ as well as $\btau^{\prime} (e^{-1}) >0 $ and $ \btau^{\prime} (q) \to - \infty$ as $q \to 0$.  So  $\btau^{\prime} (\cdot)$ has a unique root $q_\ast \in (0, e^{-1})$ and $q_\ast  $ can be obtained   by solving (\ref{def-btau-prime-ce}) with initial guess $e^{-1}$. By   (\ref{b2-optimal}) and (\ref{b1-optimal}),
\begin{equation}  \label{b1b2-determined}
 b_1^\ast = \frac{\sqrt{ \bal_b \bal_f } } { \bal \sqrt{q_\ast} + \sqrt{\bal_b} } \quad {\rm and } \quad  b_2^\ast =   \frac{ \sqrt{ \bal_b} }{  \bal \sqrt{q_\ast} }.
\end{equation}
 The triple $(q_\ast, b_1^\ast, b_2^\ast)$ is the unique solution to the optimization problem (\ref{min_btau-1}) and gives the minimum $ \btau^{-1} (q_\ast, b_1^\ast, b_2^\ast)$. 
Setting   $b_1 =b_1^\ast$ and $b_2 =b_2^\ast$  in (\ref{LV_2-ce}) as well as (\ref{def-btau-ce}) produces 
$$
   \htau (q) = \btau (q, b_1^\ast, b_2^\ast) =  \btau (q, \frac{\sqrt{ \bal_b \bal_f } } { \bal \sqrt{q_\ast} + \sqrt{\bal_b} },  \frac{ \sqrt{ \bal_b} }{  \bal \sqrt{q_\ast} }) \quad \forall \, q \in (0,1)
$$
  which rearranges to   (\ref{def-htau-ce}).
From  (\ref{def-htau-ce}),  (\ref{def-btau-ce}) as well as  (\ref{def-btau-1-ce}), 
 $$
 \htau (q_\ast) = \btau  (q_\ast, b_1^\ast, b_2^\ast) = \frac{1}{  \btau^{-1} (q_\ast, b_1^\ast, b_2^\ast)}
$$
 is the maximum of   functions  (\ref{def-htau-ce}) as well as (\ref{def-btau-ce}).  So  (\ref{ImpulseInterval-ce}) implies that   condition (\ref{ImpulseInterval}) holds. By Theorem \ref{Theorem_bB},     CPS  (\ref{SiDE-bB}) and, hence, system (\ref{sampledSDE-bB})   are mean-square exponentially stable and  are also almost surely exponentially stable.
\end{proof}

\begin{remark}   \label{ImpulseInterval-ce-rewrite}
In Theorem \ref{Theorem-ce-1}, we show the mechanism of sampled-data system  (\ref{sampledSDE-bB}) by approach of controller emulation (process-oriented models) and  an innate relationship  (\ref{ImpulseInterval-ce})  between the control design  (\ref {bLV-bB})  and the sampling intervals of  implementation. One can let   $\barr = \bal \sqrt{ q}$ and rewrite condition (\ref{ImpulseInterval-ce}) as
\begin{equation}  \label{ImpulseInterval-ce-r}
  0 < \underline{\Delta t} \le  \overline{\Delta t} <  \htau (\barr_\ast)  
\end{equation}
to see what a key role the  control design  (\ref {bLV-bB}) plays in the sampled-data system,
where     $\htau : (0, \bal)  \to \RR_+$ is given as
\begin{eqnarray*}   
 \htau (\barr)   = -  2 \barr_\ast   \barr ^2  ( \ln \barr - \ln \bal ) \cdot    \Big\{ \sqrt{ \bal_b} \big[ \big( 2  \barr_\ast   + \bal +   \sqrt{\bal_f } \big)  \barr^2  \\
   {}  +  ( \bal  + \sqrt{\bal_f } ) \barr_\ast^2    + 2  \sqrt{  \bal_b \bal_f }  \barr_\ast  \big]  \Big\}^{-1}
\end{eqnarray*}
and $\barr_\ast = \bal \sqrt{ q_\ast } \in (0, \bal / \sqrt{e} )$ is the unique root of 
\begin{eqnarray*}   
  \btau^{\prime} (\barr) := 2 \barr^2 + (\bal + \sqrt{ \bal_f} ) \barr  +  2 \big[ ( \bal + \sqrt{ \bal_f } ) \barr + 2 \sqrt{ \bal_b \bal_f } \, \big] \\
 {} \cdot  \big[ \ln \barr - \ln ( \bal / \sqrt{e} ) \big]   =0 .
\end{eqnarray*}
\end{remark}

\begin{remark}   \label{ImpulseInterval-ce-partial}  
Substituting (\ref{b2-optimal}) and (\ref{b1-optimal}) into (\ref{def-btau-1-ce}), one can have
\begin{equation*}   
 \btau^{-1}  (q, \bal, \bal_b, \bal_f ) = - \frac{2 \sqrt{ \bal_b } }{ q \ln q }  \Big[ q + \big(1 + \frac{ \sqrt{ \bal_f} } { \bal } \big) \sqrt{q}  +   \frac{ \sqrt{ \bal_b \bal_f }} { \bal^2}    \Big]
\end{equation*}
for all $ 0 < q < 1$ and $\bal, \bal_b, \bal_f >0$, and observe 
that, given any $ q \in (0, 1) $,  function $\btau^{-1}$ is increasing with respect to either $\bal_b$ or $\bal_f$ while it is descreasing with respect to $\bal$.
\end{remark}

To disclose the equivalence and inherent relationship between the two main approaches,   we employ the same Lyapuov function $V(x) = \tV (x) = x^T P x$ for both the physical and the cyber subsystems  in  Theorem \ref{Theorem-ce-1}  as well as Theorem \ref{Theorem-dta}. Obviously, this could lead to conservative  results. Let us develop a  result for application using a couple of Lyapunov functions, which is  suggested in  Theorem \ref{Theorem_ExpStab} and Theorem \ref{Theorem_bB}.

\begin{theorem}   \label{Theorem-ce-2}
Suppose that the Lyapunov function $V(x)$  with condition (\ref {bLV-bB})  for physcial system  (\ref{controlledSDE}) is of the quadratic form (\ref{LyapunovFunction-quadratic}).
Let the sampling sequence $\{ t_k \}_{k \in \NN}$ satisfy 
\begin{equation}  \label{ImpulseInterval-ce-2}
  0 < \underline{\Delta t} \le  \overline{\Delta t} <  \htau (q_\ast)  
\end{equation}
where function $\htau : (0, 1) \to \RR_+$ is defined as 
\begin{equation}     \label{def-htau-ce-2}
\htau  (q ) = \frac{ - \bal^2 q \ln q } { \bal_b \gm_1 + \gm_2 \bal^2 q}
\end{equation}
and $q_\ast  \in (0, e^{-1} )$ is the unique root of 
\begin{equation}    \label{def-btau-prime-ce-2}
  \btau^{\prime} (q) :=  \bal^2 \gm_2 q + \bal_b \gm_1  (   \ln   q  +1  )  =0
\end{equation}
with     $\bal_b$, $\gm_1$ and $ \gm_2$ being positive numbers such that 
\begin{eqnarray}    
& V (\bB x) \le \bal_b  \tV(x) \quad \forall \,  x \in \RR^n ,  \label{cp-ce-1}  \\  
&  \tll  \tV (x, y) \le  \gm_1  V(x)  + \gm_2 \tV (y)  \quad \forall \,  x, y \in \RR^n     \label{cp-ce-2}    
\end{eqnarray}
for some qudratic function $\tV (x) = x^T \tP x$ defined by   $\tP >0$.  Then CPS (\ref{SiDE-bB})   is mean-square exponentially stable and  is also almost surely exponentially stable, which implies that its subsystem (\ref{SDE-bB_x}), viz.,   (\ref{sampledSDE-bB})    is mean-square exponentially stable and  is also almost surely exponentially stable.
\end{theorem}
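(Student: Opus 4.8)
The plan is to reduce the statement to Theorem~\ref{Theorem_bB} with $p=2$: I would check that conditions (\ref{LyapunovFunctions})--(\ref{ImpulseInterval}) hold for CPS~(\ref{SiDE-bB}) with the given $V$, $\tV$ and sampling sequence, and then quote Theorem~\ref{Theorem_bB} (hence also Theorem~\ref{Theorem_AlmostSure}) to obtain both mean-square and almost sure exponential stability of CPS~(\ref{SiDE-bB}), and therefore of its physical subsystem (\ref{sampledSDE-bB}). Recall, as noted after (\ref{SiDE-bB}), that CPS~(\ref{SiDE-bB}) is the instance of (\ref{SiDE-xy}) with $f(x,y,t)=\tilde f(x,y,t)=\barf(x)+\bB(x-y)$, $g(x,y,t)=\tilde g(x,y,t)=\barg(x)$, $\tilde h_f(x_{t_k^-},y_{t_k^-},k)=x(t_{k-1})-x(t_k^-)$ and $\bar h_g\equiv0$, so Assumptions~\ref{localLipschitz}--\ref{lineargrowth} are automatic. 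Condition (i) is immediate: with $V(x)=x^TPx$, $\tV(x)=x^T\tP x$ and $P,\tP>0$ one has $\lmd_m(P)|x|^2\le V(x)\le\lmd_M(P)|x|^2$ and the analogue for $\tV$, so (\ref{LyapunovFunctions}) holds with $p=2$, $c_1=\lmd_m(P)$, $c_2=\lmd_M(P)$, $\tc_1=\lmd_m(\tP)$, $\tc_2=\lmd_M(\tP)$.

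For condition (ii) I would compute the physical generator (\ref{LV}) of $V$: since $V$ is $t$-independent, $\ll V(x,y)=2x^TP[\barf(x)+\bB x]+{\rm trace}[\barg^T(x)P\barg(x)]-2x^TP\bB y$, whose first two terms are exactly the closed-loop generator (\ref{bLV}) of $V$ and are hence bounded by $-2\bal V(x)$ via (\ref{bLV-bB}). I would split the cross term by the elementary inequality $-2x^TP\bB y\le\bal V(x)+\bal^{-1}(\bB y)^TP(\bB y)=\bal V(x)+\bal^{-1}V(\bB y)$ and then invoke (\ref{cp-ce-1}), which gives $\ll V(x,y)\le-\bal V(x)+(\bal_b/\bal)\tV(y)$, i.e.\ (\ref{LV_1}) with $\al_1=\bal$, $\al_2=\bal_b/\bal$. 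The bound (\ref{tLtV_1}) for the cyber subsystem is nothing but the hypothesis (\ref{cp-ce-2}) (the cyber generator of CPS~(\ref{SiDE-bB}) being $t$-independent), so $\tal_1=\gm_1$, $\tal_2=\gm_2$.

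For condition (iii), the jump law (\ref{Impulse-bB_y}) together with $y(t)=x(t)-x(t_\ast)$ forces $y(t_k)=y(t_k^-)+x(t_{k-1})-x(t_k^-)=0$ for every $k\in\NN$, so $\tV(y(t_k))=0$ and (\ref{EtV_tk-1}) holds with $\tbt_1=\tbt_2=\tbt_3=0$; hence (\ref{iii-constants}) holds with $\al_1^{-1}\al_2\tbt_1+\tbt_2+\tbt_3$ arbitrarily small, exactly as in Theorem~\ref{Theorem-ce-1}. It then remains to reconcile (\ref{ImpulseInterval}) with (\ref{ImpulseInterval-ce-2}). Substituting $\al_1=\bal$, $\al_2=\bal_b/\bal$, $\tal_1=\gm_1$, $\tal_2=\gm_2$ into $\htau(\hatq)$ of (\ref{ImpulseInterval}) yields $\htau(\hatq)=\frac{-\bal^2\hatq\ln\hatq}{\bal_b\gm_1+\gm_2\bal^2\hatq}$, which is precisely the function (\ref{def-htau-ce-2}). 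Since the left endpoint $\al_1^{-1}\al_2\tbt_1+\tbt_2+\tbt_3$ can be made arbitrarily small, $\hatq$ may be chosen anywhere in $(0,1)$, so, as in Remark~\ref{remark-ImpulseInterval}, it suffices to have $\oDt<\sup_{q\in(0,1)}\htau(q)$; setting $\htau'(q)=0$ and clearing denominators reduces to $\bal^2\gm_2 q+\bal_b\gm_1(\ln q+1)=0$, i.e.\ (\ref{def-btau-prime-ce-2}), whose left side is strictly increasing on $(0,\infty)$, positive at $q=e^{-1}$ and tending to $-\infty$ as $q\to0^+$, hence has a unique root $q_\ast\in(0,e^{-1})$ that attains the maximum of $\htau$. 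Thus (\ref{ImpulseInterval-ce-2}) implies (\ref{ImpulseInterval}) for some $\hatq$ near $q_\ast$, and Theorem~\ref{Theorem_bB} completes the argument.

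The generator bookkeeping and the cross-term estimate are routine; the only point requiring care is the last reduction --- verifying that the single-variable optimization of $\htau$ over $(0,1)$ produces exactly (\ref{def-btau-prime-ce-2}) with a unique maximizer in $(0,e^{-1})$, and that the vanishing of $\tbt_1,\tbt_2,\tbt_3$ genuinely removes the restriction $\hatq\in(\al_1^{-1}\al_2\tbt_1+\tbt_2+\tbt_3,1)$ so that $\oDt$ may be pushed up to $\htau(q_\ast)$. This step parallels Remark~\ref{remark-ImpulseInterval} and the closing lines of the proof of Theorem~\ref{Theorem-ce-1}.
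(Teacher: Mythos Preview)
Your proposal is correct and follows essentially the same approach as the paper's proof: both reduce to Theorem~\ref{Theorem_bB} with $p=2$, verify (\ref{LyapunovFunctions}) from the quadratic forms, obtain $\al_1=\bal$, $\al_2=\bal_b/\bal$ by the same cross-term split in $\ll V$, read off $\tal_1=\gm_1$, $\tal_2=\gm_2$ directly from (\ref{cp-ce-2}), use $y(t_k)=0$ to make $\tbt_1=\tbt_2=\tbt_3=0$, and then substitute into (\ref{ImpulseInterval})/(\ref{def-hatq-ast}) to recover (\ref{def-htau-ce-2})--(\ref{def-btau-prime-ce-2}). The only cosmetic difference is that the paper additionally sketches why constants $\bal_b,\gm_1,\gm_2$ satisfying (\ref{cp-ce-1})--(\ref{cp-ce-2}) exist via the linear growth bounds, whereas you (appropriately) take them as hypotheses of the theorem.
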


\begin{proof}  According to Theorem \ref{Theorem_bB}, the assertion holds  if conditions  (\ref{LyapunovFunctions})-(\ref{ImpulseInterval})    with $p=2$ are satisfied for system (\ref{SiDE-bB}). 

Let  $\tV (y)=y^T \tP y $   of the quadratic form as (\ref{LyapunovFunction-quadratic}) for the cyber subsystem (\ref{SDE-bB_y}). So
$\lambda_m (P) |x|^2 \le V(x) \le \lambda_M (P) |x|^2$ and $\lambda_m (\tP) |y|^2 \le \tV(y) \le \lambda_M ( \tP) |y|^2$  for all $x, y \in \RR^n$;  
 i.e., condition (\ref{LyapunovFunctions}) holds with positives 
$p=2$, $c_1  =\lambda_m (P) \le c_2 = \lambda_M (P) $ and $\tc_1=\lambda_m (\tP) \le \tc_2 = \lambda_M (\tP) $.  
There is   $\bal_b  \in (0,  \lmd_M (P)  | \bB |^2 / \lmd_m (\tP) ]$ such that (\ref{cp-ce-1}) holds due to
\begin{equation*}
V ( \bB x )   \le    \lmd_M (P) | \bB|^2 | x|^2 \le   \frac{\lmd_M (P)  | \bB|^2  }{ \lmd_m (\tP) } \tV (x).
\end{equation*}
As above, by (\ref{bLV-bB})  and \cite[Lemma 1]{huang2009a},  for all $x, y \in \RR^n$,
\begin{multline*}    
   \ll V (x, y ) =   2 x^T P [ \barf (x) + \bB x - \bB y ]  + {\rm trace} \big[ \barg^T (x) P \barg (x)   \big]     \\
  \le - 2 \bal V(x) - 2x^T P \bB  y   
\le -   \bal    V(x) +  \frac{   \bal_b} {\bal}  \tV (  y) .
\end{multline*}
  Hence (\ref{LV_1}) holds with $\al_1 =   \bal   $ and $\al_2 =     \bal_b  / \bal $. Recall that both $\barf_u (x)$ and $\barg (x)$ satisfy the linear growth conditions, that is,  $ | \barf_u (x) |^2 \le 2 ( \bL + | \bB|^2) |x |^2$ and $| \barg (x) |^2 \le \bL |x|^2$.  Similarly, 
\begin{align*}       
 &  \tll \tV (x, y ) =   2 y^T \tP [ \barf (x) + \bB x - \bB y ]  + {\rm trace} \big[ \barg^T (x) P \barg (x)   \big]     \\
 & =   2y^T \tP \barf_u (x)   + {\rm trace} \big[ \barg^T (x) \tP \barg (x)   \big]   - 2 y^T \tP \bB y    \\
 & \le   \tV ( \barf_u (x) ) +   \tV ( y ) + \lmd_M (\tP) | \barg (x) |^2 +   \tV (y ) +   \tV ( \bB y)  \\
 &  \le  \lmd_M ( \tP )  ( | \barf_u (x) |^2 + | \barg (x) |^2 ) +  2  \tV (y ) + \lmd_M ( \tP )  | \bB|^2 y^2   \\
 & \le   \lmd_M ( \tP ) ( 3 \bL + 2 | \bB|^2) |x|^2  +  2  \tV (y ) + \lmd_M ( \tP )   | \bB|^2  y^2  \\
 & \le \frac{ ( 3 \bL + 2 | \bB|^2)  \lmd_M ( \tP )} { \lmd_m (P)} V(x) + \frac{ 2 \lmd_m ( \tP ) + | \bB|^2 \lmd_M ( \tP )  }{ \lmd_m ( \tP )} V(y)
\end{align*}
for all $x, y \in \RR^n$. This implies that there exist positive numbers  $\gm_1 \in (0, ( 3 \bL + 2 | \bB|^2)  \lmd_M ( \tP ) / \lmd_m (P)]$ and 
$ \gm_2 \in (0, 2 + | \bB|^2 \lmd_M ( \tP ) / \lmd_m ( \tP ) ]$ such that (\ref{cp-ce-2}) is satisfied, which is  condition (\ref{tLtV_1}) with $\tal_1 = \gm_1$ and $\tal_2 = \gm_2$.

Due to $ y(t_k)  =0 $ for all  $  k \in \NN$, nonngegatives  $\tilde{\bt}_1$,   $ \tilde{\bt}_2 $ and $ \tilde{\bt}_3$ can be chosen for  (\ref{iii-constants}) with  arbitrary small $ {\al_1}^{-1} \al_2    \tilde{\bt}_1 + \tilde{\bt}_2 + \tilde{\bt}_3  >0$. Conditions (\ref{EtV_tk-1})-(\ref{iii-constants})  hold. 

Substition of $\al_1 = \bal$, $\al_2 = \bal_b / \bal$, $\tal_1 = \gm_1$ and $\tal_2 = \gm_2$ into (\ref{ImpulseInterval}) and (\ref{def-hatq-ast}) produce (\ref{ImpulseInterval-ce-2}) and (\ref{def-btau-prime-ce-2}), respectively.  
Hence (\ref{ImpulseInterval-ce}) implies that condition (\ref{ImpulseInterval}) holds. 
By Theorem \ref{Theorem_bB},     systems  (\ref{SiDE-bB}) and, hence,   (\ref{sampledSDE-bB})   are mean-square exponentially stable and are also almost surely exponentially stable.
\end{proof}

\subsection{Discrete-time approximation (Computer-oriented models)}

As periodic sampling  ($ \{ t_k \}_{k \in \NN} $ with sampling period $ \Delta t = \underline{\Delta t} =  \overline{\Delta t} $)  is normally used \cite{astrom1997,nesic2001,nesic2006,zheng2002},  a sampling interval $t_k - t_{k-1}$  could vary  in the design method based on computer-oriented models which are  discrete-time approximation of the underlying continuous-time plants \cite{nesic2004, oishil2010}. 
By approach of discrete-time approximation,   one employs some approximate discrete-time model, say, the Euler-Maruyama approximation  of the continuous-time plant  (due to the usual unavailability of the exact discrete-time model),   and  designs a discrete-time  state-feedback controller $ \bu (X )  = \bB X $ for stability of the closed-loop system, which is the  Euler-Maruyama approximation \cite{higham2000,mao2007book,nesic2006} of the closed-loop system (\ref{controlledSDE}), 
\begin{equation}   
 X_k = X_{k-1} +    \barf_u (X_{k-1} )   h     + \barg( X_{k-1} )    \Delta B_k    \label{controlledSDE-EM}
\end{equation}
with  stepsize $ h >0$ and initial value $X_0 =x_0 \in \RR^n$, where  $\Delta B_k = B( k h) - B ( (k-1) h)$ for all $k \in \NN$. Specifically, a state-feedback controller $ \bu (X )  = \bB X $ is designed such that
\begin{equation}    \label{dV-bB}
 \EE [ V ( X_k ) | X_{k-1} ] \le (1 -\bc ) V ( X_{k-1})   \quad   \forall  \, X_{k-1} \in \RR^n
\end{equation}
and, therefore, the closed-loop system (\ref{controlledSDE-EM}) is exponentially stable \cite{boyd1994,huang_partI,khas2012}, where $\bc \in (0, 1)$ is a  constant and  $V: \RR^n \to \RR_+$ is a Lyapunov function  with (\ref{LyapunovFunctions_x}), say, the quadratic Lyapunov function  (\ref{LyapunovFunction-quadratic}).  
 The obtained controller $ \bu (x  )  = \bB x $  is then implemented in the continuous-time plant  using a sampler and ZOH device, that is,  $\bu (t ) = \bu (x (t_\ast ) ) = \bB x (t_\ast )   $ for all $t \ge 0$. This  leads to the sampled-data control system (\ref{sampledSDE-bB}) and its cyber-physical model  (\ref{SiDE-bB})  as well. The central question in the design method (\ref{dV-bB})
is, see \cite{astrom1997,nesic2001,nesic2004,nesic2006},   

 {\it  for what sampling sequence $ \{ t_k \}_{k \in \NN} $ does the sampled-data control system  (\ref{sampledSDE-bB}) share the stability property of the approximate discrete-time  model  (\ref{controlledSDE-EM})? }
 
We address this   question with Theorem \ref{Theorem-ce-1} and show the equivalence of the design methods   (\ref{bLV-bB}) and (\ref{dV-bB}).
\begin{theorem}   \label{Theorem-dta}
Suppose that the Lyapunov function $V(x)$  with condition (\ref {dV-bB})  for  cyber  system  (\ref{controlledSDE-EM}) is of the quadratic form  (\ref{LyapunovFunction-quadratic}).
Let the sampling sequence $\{ t_k \}_{k \in \NN}$ satisfy 
\begin{equation}  \label{ImpulseInterval-dta}
  0 < \underline{\Delta t} \le  \overline{\Delta t} <  \htau (r_\ast)  
\end{equation}
where function  $\htau : (0, \bal) \to \RR_+$  
 is given as
\begin{multline*}   
\htau (r)  = -  2 \, r_\ast   r^2  (  \ln r -  \ln \bal )
 \cdot    \Big\{ \sqrt{ \bal_b} \big[ \big( 2  r_\ast   + \bal +   \sqrt{\bal_f } \big)      r^2    \\
{}   +  ( \bal  + \sqrt{\bal_f } )  r_\ast^2    + 2  \sqrt{  \bal_b \bal_f }   r_\ast  \big]  \Big\}^{-1}
\end{multline*}
with $\bal = ( \bc h^{-1} + \bal_u h)  /2$ and $r_\ast =\bal \sqrt{q_\ast}  \in (0, \bal / \sqrt{e} )$
is the unique root of 
\begin{multline*}   
  \btau^{\prime} ( r) := 2 r^2 + (\bal + \sqrt{ \bal_f} ) r 
  +  2 \big[ ( \bal + \sqrt{ \bal_f } ) r + 2 \sqrt{ \bal_b \bal_f } \, \big]  \\
{} \cdot  \big[   \ln r -  \ln ( \bal / \sqrt{e} \, )  \big] =0.
\end{multline*}
with $\bal_u  $ being a positive constant such that
\begin{equation}    \label{drift-dta}
    V ( \barf_u (x) ) \le \bal_u V (x)  \quad \forall \, x \in \RR^n
\end{equation}
 as well as $\bal_b$ and $ \bal_f$  given by  (\ref{control-drift}).  Then CPS  (\ref{SiDE-bB})   is mean-square exponentially stable and  is also almost surely exponentially stable, which implies  that its subsystem (\ref{SDE-bB_x}), viz.,   (\ref{sampledSDE-bB})    is mean-square exponentially stable and  is also almost surely exponentially stable.
\end{theorem}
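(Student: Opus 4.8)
The plan is to reduce Theorem~\ref{Theorem-dta} to Theorem~\ref{Theorem-ce-1} by translating the discrete-time design criterion (\ref{dV-bB}) into the continuous-time criterion (\ref{bLV-bB}) for the closed-loop plant (\ref{controlledSDE}), after which the impulsive machinery behind Theorem~\ref{Theorem_bB} applies essentially verbatim. The bridge between the two design methods is the Euler--Maruyama identity: since $V(x)=x^T P x$ is quadratic and $\Delta B_k$ is independent of $X_{k-1}$ with $\EE\Delta B_k = 0$ and $\EE[\Delta B_k \Delta B_k^T] = h I_m$, a direct expansion of (\ref{controlledSDE-EM}) gives
\begin{equation*}
  \EE[V(X_k)\mid X_{k-1}] = V(X_{k-1}) + h\,\ll V(X_{k-1}) + h^2 V(\barf_u(X_{k-1})),
\end{equation*}
where $\ll V$ is exactly the generator (\ref{bLV}) of the continuous-time closed-loop system. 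Hence the discrete design condition (\ref{dV-bB}) is equivalent to $\ll V(x) \le -\bc h^{-1} V(x) - h\,V(\barf_u(x))$ for all $x$, and combining this with the drift bound (\ref{drift-dta}) yields (\ref{bLV-bB}) with $\bal = (\bc h^{-1} + \bal_u h)/2$. This step is precisely the announced equivalence between the emulation (process-oriented) and discretization (computer-oriented) design methods, so I would isolate it as its own lemma-like paragraph and be very careful about the direction of each inequality.

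Once (\ref{bLV-bB}) holds with this $\bal$, the remaining verification mirrors the proof of Theorem~\ref{Theorem-ce-1}. I would take $\tV(x)=V(x)=x^T P x$, use $\lambda_m(P)|x|^2 \le V(x) \le \lambda_M(P)|x|^2$ to get (\ref{LyapunovFunctions}) with $p=2$, and observe that $\barf_u(x)=\barf(x)+\bB x$ inherits the linear growth bound so that positive constants $\bal_b$ (with $V(\bB x)\le\bal_b V(x)$) and $\bal_f$ (with $V(\barf_u(x)+\bal x)\le\bal_f V(x)$) exist as in (\ref{control-drift}); these are the $\bal_b,\bal_f$ of the statement. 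The cross-term estimates yielding $\al_1=\bal$, $\al_2=\bal_b/\bal$ for $\ll V(x,y)$, and the corresponding $\tal_1,\tal_2$ (in auxiliary parameters $b_1,b_2$) for $\tll\tV(x,y)$, are copied from Theorem~\ref{Theorem-ce-1}. Because $y(t)=x(t)-x(t_\ast)$ forces $y(t_k)=0$, condition (\ref{EtV_tk-1}) holds with $\tbt_1=\tbt_2=\tbt_3=0$, so (\ref{iii-constants}) is met with $\al_1^{-1}\al_2\tbt_1+\tbt_2+\tbt_3$ arbitrarily small.

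It then remains to identify the sampling bound. Substituting $\al_1=\bal$, $\al_2=\bal_b/\bal$ and the above $\tal_1,\tal_2$ into (\ref{ImpulseInterval}) and carrying out the same Lagrangian/KKT optimization over $b_1,b_2$ as in Theorem~\ref{Theorem-ce-1} reproduces $\htau(q_\ast)$ of (\ref{def-htau-ce})--(\ref{def-btau-prime-ce}) with the present value $\bal=(\bc h^{-1}+\bal_u h)/2$; the change of variable $r=\bal\sqrt{q}$ (so $r_\ast=\bal\sqrt{q_\ast}\in(0,\bal/\sqrt{e})$) of Remark~\ref{ImpulseInterval-ce-rewrite} then recasts it in the form (\ref{ImpulseInterval-dta}). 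Invoking Theorem~\ref{Theorem-ce-1} (equivalently Theorems~\ref{Theorem_bB} and~\ref{Theorem_AlmostSure}) yields mean-square and almost sure exponential stability of the CPS (\ref{SiDE-bB}), hence of its subsystem (\ref{sampledSDE-bB}).

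I expect the main obstacle to be the first step: keeping every inequality direction straight when passing between the discrete criterion (\ref{dV-bB}), the Euler--Maruyama identity, the drift bound (\ref{drift-dta}), and the continuous criterion (\ref{bLV-bB}), and in particular pinning down the sharp constant $\bal=(\bc h^{-1}+\bal_u h)/2$ so that the sampling bound (\ref{ImpulseInterval-dta}) is genuinely tighter than a crude emulation estimate. Everything after that is bookkeeping---re-running the quadratic-Lyapunov estimates and the KKT computation of Theorem~\ref{Theorem-ce-1} with the new $\bal$---and can be compressed to "as in the proof of Theorem~\ref{Theorem-ce-1}".
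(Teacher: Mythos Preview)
Your proposal is correct and follows essentially the same route as the paper: expand $\EE[V(X_k)\mid X_{k-1}]$ for quadratic $V$ via the Euler--Maruyama identity, combine with the drift bound (\ref{drift-dta}) to recover the continuous-time criterion (\ref{bLV-bB}) with $2\bal=\bc h^{-1}+\bal_u h$, then invoke Theorem~\ref{Theorem-ce-1} and use the substitution $r=\bal\sqrt{q}$ of Remark~\ref{ImpulseInterval-ce-rewrite} to obtain (\ref{ImpulseInterval-dta}). The paper additionally sketches the converse (from (\ref{bLV-bB}) back to (\ref{dV-bB}) for suitable $h,\bc$) to emphasize the equivalence of the two design methods, but this is commentary rather than part of the stability proof; your flagged concern about the inequality direction at the bridge step is exactly the place where the paper's own derivation is terse.
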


\begin{proof}
By  the design method (\ref {dV-bB})   as well as (\ref{LyapunovFunction-quadratic}) and (\ref{drift-dta}), 
\begin{eqnarray}
\lefteqn{  \EE [ V ( X_k ) | X_{k-1} ] = \EE \big[   X_k^T P X_k  | X_{k-1} \big]   }    \nonumber   \\
&& =\EE \Big[ \big( X_{k-1} +    \barf_u (X_{k-1} )   h     + \barg( X_{k-1} )    \Delta B_k \big)^T P  \nonumber  \\
&&  \;\;  {} \cdot \big( X_{k-1} +    \barf_u (X_{k-1} )   h     + \barg( X_{k-1} )    \Delta B_k \big) \big| X_{k-1}  \Big]  \nonumber  \\
&& = V ( X_{k-1} ) +  h \Big[ X_{k-1}^T P \barf_u ( X_{k-1} ) + \barf_u^T  ( X_{k-1} ) P X_{k-1}    \nonumber  \\
&&   \;\;  {} + {\rm trace} \big[ \barg^T (X_{k-1}) P \barg (X_{k-1} )    \big]   +  h  V( \barf_u ( X_{k-1} ) ) \Big]  \nonumber  \\
&& \le  V ( X_{k-1} ) +  h \Big[ 2 X_{k-1}^T P \barf_u ( X_{k-1} )   \nonumber  \\
&&    \;\;   {}  + {\rm trace} \big[ \barg^T (X_{k-1}) P \barg (X_{k-1})    \big]  + \bal_u  h  V ( X_{k-1} )\Big]     \nonumber   \\
&&   \le (1 -\bc ) V ( X_{k-1} )   \quad \forall \, X_{k-1} \in \RR^n         \label{dV-bB-inproof}
\end{eqnarray}
and, therefore, for all $X_{k-1} \in \RR^n$,
\begin{eqnarray}
 \lefteqn{  2 X_{k-1}^T P \barf_u ( X_{k-1} )   }  \nonumber  \\
  && \;\;  {}  + {\rm trace} \big[ \barg^T (X_{k-1}) P \barg (X_{k-1}) \big]  + \bal_u  h  V ( X_{k-1} )   \nonumber  \\
 &&  \le -   \frac{\bc}{ h } V ( X_{k-1} )     \label{dV-bLV-dta}
\end{eqnarray}
 where    $ \bal_u \in (0, 2 ( \bar{L} + |B|^2) \lmd_M (P) / \lmd_m (P) ]$   for (\ref{drift-dta})  due to
\begin{multline*}
V ( \barf_u (x)   ) \le \lmd_M (P) | \barf_u (x)  |^2       
\le 2 (  \bar{L} + |\bB   |^2)  \frac{ \lmd_M (P) }{ \lmd_m (P) } V (x) .
\end{multline*}

Let $V (x) = x^T P x$ also be the candidate Lyapunov function for continuous-time system (\ref{controlledSDE}). From (\ref{bLV}) and (\ref{dV-bLV-dta}), 
\begin{eqnarray}     \label{bLV--bB-dta}
   \ll V(x) =  2 x^T P \barf_u ( x)  + {\rm trace} \big[ \barg^T (x) P \barg (x) \big]   \nonumber  \\
  \le  - \big( \frac{ \bc }{h}  + \bal_u h  \big)   V (x)  \qquad \;\;   \forall  \, x \in \RR^n .
\end{eqnarray}
 This is exactly the design method (\ref{bLV-bB}) with  Lyapunov exponent, or say, decay rate
\begin{equation}  \label{bLV-dV-bB}
  2 \bal=   \frac{ \bc }{h}  + \bal_u h    . 
\end{equation}

On the other hand, if a controller is design for continuous-time system  (\ref{controlledSDE})  with  (\ref{bLV-bB}), by (\ref{dV-bLV-dta}), (\ref{bLV--bB-dta}) as well as  (\ref{dV-bB-inproof}),  one can choose  any stepsize $ h \in (0, (2 \bal / \bal_u ) \wedge ( 2 \bal )^{-1})$ and  then $\bc =  (2\bal - \bal_u h) h \in (0, 1)$ so that  condition  (\ref {dV-bB})  of  the other design method is satisfied. This with (\ref{bLV-dV-bB}) shows the equivalence of the design methods of  (\ref{bLV-bB}) and (\ref{dV-bB}). 

By (\ref{bLV-dV-bB}),  
let $r =\bal \sqrt{q} = ( \bc h^{-1} + \bal_u h) \sqrt{q} /2 $.  
Condition (\ref{ImpulseInterval-ce})  of Theorem \ref{Theorem-ce-1} can be written as 
(\ref{ImpulseInterval-dta}).  It   follows from Theorem \ref{Theorem-ce-1} that systems   (\ref{SiDE-bB}) and   (\ref{sampledSDE-bB}) are  mean-square exponentially stable and are also almost surely exponentially stable.
\end{proof}

\begin{remark} \label{stepsize-samplingperiod}
In the literature, periodic sampling is normally used and it is usually assumed that  the sampling period $  \Delta t $ is also the stepsize $h$ of the discrete-time model (i.e., $h=\Delta t$)
\cite{astrom1997,nesic2001,nesic2004,nesic2006, zheng2002}. They could be the same, namely, $h=\Delta t$ if the exact discrete-time model can be utilized, for instance, in   linear deterministic systems  \cite{astrom1997,seuret2012,zheng2002}.  
But, especially when some discrete-time approximation is employed (due to  unavailability of the exact   model),  the stepsize $h$ of the cyber model and the sampling period $ \Delta t $ are essentially different parameters  of the controller. The former is one of the design parameters   and the latter   a parameter of the implementation using a sampler and ZOH device. For stability of the resulting sampled-data control system (\ref{sampledSDE-bB}), we clearly show by  (\ref{ImpulseInterval-dta}) how the design parameters impose the maximum alllowable sampling interval on the implementation. 
\end{remark}

\begin{remark}  \label{equivalence-ce-dta}
We have shown the equivalence of the design methods   (\ref{bLV-bB}) and   (\ref{dV-bB}) for sampled-data control system (\ref{sampledSDE-bB}). Specifically, we not only provide the link \cite{seuret2012} but also reveal the intrinsic relationship  (\ref{bLV-dV-bB}) between the two main approaches. It is also observed that,  in addtion to  $P, \bal_b, \bal_f$ involved in   both  (\ref{bLV-bB}) and   (\ref{dV-bB}), a few  parameters $h, \bc, \bal_u$  are involved in the design method  (\ref{dV-bB}) as only one $\bal$ in the other.
\end{remark}


\section{Stability and stabilization of linear sampled-data systems}
As application of our established theory, 
we study stability and stabilization of linear sampled-data stochastic systems in this section. 
Let us consider    linear sampled-data control system
\begin{equation}    \label{linearSDE_sampled}
   \d x(t) = [ A x(t)  +  \bB x (t_\ast)  ] \d t+ \sum_{j=1}^m G_j x(t) \d B_j (t) \;\; t \ge 0
\end{equation}
  with initial value $x(0)=x_0 \in \RR^{n}$, where $A  \in \RR^{n \times n}$ and $ G_j \in \RR^{n \times n}$, $j=1, \cdots, m$, are constant matrices. The linear system (\ref{linearSDE}) is a specific case of  (\ref{sampledSDE-bB}) with $ \barf (x) = A x$ and $ \barg (x) = \begin{bmatrix} G_1 & \cdots & G_m  \end{bmatrix} x$, By Lemma \ref{existence_n_uniqueness}, it has a unique solution $x(t)$ on $ [0, \infty)$. It is well known that the continuous-time plant
\begin{equation}    \label{linearSDE}
   \d x(t) = F x(t)  \d t+ \sum_{j=1}^m G_j x(t) \d B_j (t) \quad t \ge 0
\end{equation}
 with  $F = A + \bB $ is mean-square exponentially stable
 if and only if there is a positive definite matrix $P \in \RR^{n \times n}$ such that
\begin{equation}    \label{Lyapunov-Ito-LMI}
     F^T P + P  F + \sum_{j=1}^m G_j^T P G_j \le - 2 \bal P   
\end{equation}
for some constant $\bal >0$. This is the  Lyapunov-It{\^o} inequality \cite{boyd1994},   the linear matrix inequality (LMI)  equivalent to the classical Lyapunov-It{\^o} equation  \cite{liu1992}.
By  \cite[Theorem 5.15, p175]{khas2012} or \cite[Theorem 4.2, p128]{mao2007book}, the mean-square exponential stability of SDE (\ref{linearSDE}) implies that it is also almost surely exponentially stable. Unlike  linear deterministic systems,  design methods  base on the exact discrete-time models \cite{astrom1997,oishil2010,seuret2012,zheng2002} are  not applicable to the stochastic system (\ref{linearSDE_sampled}).  Some  discrete-time approximation of the continuous-time plant has to be employed instead.
As a specific case of (\ref{controlledSDE-EM}),  the  Euler-Maruyama approximation  of  linear  system (\ref{linearSDE}) is
\begin{equation}   
 X_k = X_{k-1} +   F X_{k-1}    h     +  \sum_{j=1}^m G_j  X_{k-1}    \Delta B_{j, k}     \label{linearSDE-EM}
\end{equation}
with  stepsize $ h >0$ and initial value $X_0 =x_0 \in \RR^n$,  where $\Delta B_{j, k} = B_j ( k h ) -B_j ((k-1) h) $ for all $k \in \NN$.
It is also well-known that the discrete-time system (\ref{linearSDE-EM}) is mean-square
exponentially stable if and only if there exists a positive
definite matrix $P \in \RR^{n \times n}$ such that, see, e.g., \cite{boyd1994},
\begin{equation}    \label{Lyapunov-Ito-LMI-EM}
    ( I_n + h F )^T P   ( I_n + h F )  +  h  \sum_{j=1}^m G_j^T P G_j \le  (1 -\bc) P
\end{equation}
for some   $\bc \in (0, 1)$. Note that  (\ref{Lyapunov-Ito-LMI}) and (\ref{Lyapunov-Ito-LMI-EM})  are the specific cases of   the design methods   (\ref{bLV-bB}) and   (\ref{dV-bB}), respectively.  The equivalence of (\ref{Lyapunov-Ito-LMI}) and (\ref{Lyapunov-Ito-LMI-EM}) has shown by  the relationship (\ref{bLV-dV-bB}) for any stepsize $ h \in (0, (2 \bal / \bal_u ) \wedge ( 2 \bal )^{-1})$,  where $\bal_u >0$ is such that $ F^T P F \le \bal_u P$ in the linear system. The equivalence  of (\ref{Lyapunov-Ito-LMI}) and (\ref{Lyapunov-Ito-LMI-EM}) has also been addressed in \cite{huang_partI}.

Since we have shown the equivalence of the two main approaches  (\ref{bLV-bB}) and (\ref{dV-bB}),  let us  focus on  sampled-data control systems, say, by approach of controller emulation  (process-oriented models). A special version of Theorem \ref{Theorem-ce-2} for linear sampled-data stochstic   system (\ref{linearSDE_sampled}) is specified as follows. 

\begin{theorem}   \label{Theorem-ce-linear}
Suppose that there is a  positive definite matrix $P \in \RR^{n \times n}$ such that  LMI (\ref{Lyapunov-Ito-LMI}) holds for some    constant  $\bal >0$.
Let the sampling sequence $\{ t_k \}_{k \in \NN}$ satisfy (\ref{ImpulseInterval-ce-2}),  
where function  $\htau : (0, 1) \to \RR_+$ is defined by (\ref{def-htau-ce-2}) 
and $q_\ast  \in (0, e^{-1} )$ is the unique root of equation (\ref{def-btau-prime-ce-2}) 
with     $\bal_b$, $\gm_1$ and $ \gm_2$ being positive numbers such that 
\begin{eqnarray}    
& \bB^T P  \bB  \le \bal_b  \tP ,  \label{cp-ce-1-linear}  \\  
&  \begin{bmatrix}   \sum_{j=1}^m G_j^T \tP G_j  &  F^T \tP  \\  \tP F & - \bB^T \tP - \tP \bB   \end{bmatrix}  \le \begin{bmatrix}  \gm_1 P & 0 \\ 0 &  \gm_2 \tP \end{bmatrix}    \label{cp-ce-2-linear}    
\end{eqnarray}
for some positive definite matrix $\tP \in \RR^{n \times n}$.  Then   sampled-data control system  (\ref{linearSDE_sampled})   is mean-square exponentially stable and  is also almost surely exponentially stable.
\end{theorem}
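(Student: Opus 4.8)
The plan is to read Theorem~\ref{Theorem-ce-linear} as the linear, LMI form of Theorem~\ref{Theorem-ce-2}, and to reduce to the latter. First I would note that (\ref{linearSDE_sampled}) is exactly system (\ref{sampledSDE-bB}) with $\barf(x)=Ax$ and $\barg(x)=[\,G_1\ \cdots\ G_m\,]x$, so $\barf_u(x)=Fx$ with $F=A+\bB$, and the local Lipschitz and linear growth conditions are automatic; thus Lemma~\ref{existence_n_uniqueness} yields a unique solution of its cyber-physical model (\ref{SiDE-bB}). Then I would take the quadratic Lyapunov pair $V(x)=x^TPx$ and $\tV(x)=x^T\tP x$ and check, term by term, that the scalar hypotheses (\ref{bLV-bB}), (\ref{cp-ce-1}), (\ref{cp-ce-2}) of Theorem~\ref{Theorem-ce-2} turn into the LMIs (\ref{Lyapunov-Ito-LMI}), (\ref{cp-ce-1-linear}), (\ref{cp-ce-2-linear}) assumed here.

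These translations are short computations. For the closed-loop continuous plant (\ref{linearSDE}), $\ll V(x)=x^T\big(F^TP+PF+\sum_{j=1}^mG_j^TPG_j\big)x$, so LMI (\ref{Lyapunov-Ito-LMI}) is literally $\ll V(x)\le-2\bal V(x)$, i.e.\ (\ref{bLV-bB}); similarly $V(\bB x)=x^T\bB^TP\bB x$, so (\ref{cp-ce-1-linear}) is (\ref{cp-ce-1}) with the same $\bal_b$. For the cyber subsystem (\ref{SDE-bB_y}), whose drift equals $Fx-\bB y$ and whose diffusion involves only $x$, one computes
\begin{multline*}
\tll\tV(x,y)=2y^T\tP Fx-2y^T\tP\bB y+\sum_{j=1}^m x^TG_j^T\tP G_j x\\
{}=\begin{bmatrix}x\\y\end{bmatrix}^{T}\begin{bmatrix}\sum_{j=1}^mG_j^T\tP G_j & F^T\tP\\ \tP F & -\bB^T\tP-\tP\bB\end{bmatrix}\begin{bmatrix}x\\y\end{bmatrix},
\end{multline*}
so that LMI (\ref{cp-ce-2-linear}) is exactly the requirement $\tll\tV(x,y)\le\gm_1V(x)+\gm_2\tV(y)$ for all $x,y$, i.e.\ (\ref{cp-ce-2}). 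Finally, (\ref{Impulse-bB_y}) gives $y(t_k)=0$, hence $\tV(y(t_k))=0$, so the jump hypotheses (\ref{EtV_tk-1})--(\ref{iii-constants}) are met with $\tilde{\bt}_1,\tilde{\bt}_2,\tilde{\bt}_3$ nonnegative and ${\al_1}^{-1}\al_2\tilde{\bt}_1+\tilde{\bt}_2+\tilde{\bt}_3>0$ arbitrarily small, exactly as in the proof of Theorem~\ref{Theorem-ce-2}.

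With the identifications $\al_1=\bal$, $\al_2=\bal_b/\bal$, $\tal_1=\gm_1$, $\tal_2=\gm_2$ --- all strictly positive, so Remark~\ref{remark-ImpulseInterval} applies and the admissible $\htau$ is given by (\ref{def-htau-ce-2}) with optimal root $q_\ast$ solving (\ref{def-btau-prime-ce-2}) --- all of conditions (\ref{LyapunovFunctions})--(\ref{ImpulseInterval}) with $p=2$ hold for (\ref{SiDE-bB}). Theorem~\ref{Theorem-ce-2}, together with Theorem~\ref{Theorem_AlmostSure}, then gives the mean-square and almost sure exponential stability of CPS (\ref{SiDE-bB}), hence of its $x$-subsystem (\ref{linearSDE_sampled}). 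The only mildly delicate point is the block-matrix reassembly of $\tll\tV$ above --- remembering that the cyber drift is $Fx-\bB y$ while the cyber diffusion depends only on $x$, and symmetrizing the cross term $2y^T\tP Fx$ into the off-diagonal blocks $F^T\tP$ and $\tP F$; everything else is direct substitution into the already-established Theorem~\ref{Theorem-ce-2}.
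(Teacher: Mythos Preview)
Your proposal is correct and follows essentially the same approach as the paper: the paper's proof is a terse three-sentence remark stating that with $V(x)=x^TPx$ and $\tV(y)=y^T\tP y$ the LMIs (\ref{cp-ce-1-linear})--(\ref{cp-ce-2-linear}) imply conditions (\ref{cp-ce-1})--(\ref{cp-ce-2}), and that Theorem~\ref{Theorem-ce-linear} is then the direct application of Theorem~\ref{Theorem-ce-2}. You have filled in exactly the computations the paper leaves implicit, including the block-matrix reassembly of $\tll\tV(x,y)$ that makes (\ref{cp-ce-2-linear}) equivalent to (\ref{cp-ce-2}).
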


Use $V(x) = x^T Px$ and $\tV(y) = y^T \tP y$ as the candidate Lyapunov functions for the physical and the cyber subsystems, respectively.  The LMIs (\ref{cp-ce-1-linear})-(\ref{cp-ce-2-linear}) imply the conditions (\ref{cp-ce-1})-(\ref{cp-ce-2}), respectively. Clearly, Theorem \ref{Theorem-ce-linear} is the direct application of Theorem \ref{Theorem-ce-2} to linear sampled-data stochstic   system (\ref{linearSDE_sampled}).

Letting  $\bB = \hB \hK$ with some given matrix $\hB  \in \RR^{n \times \hm}$ in system (\ref{linearSDE_sampled}) leads to the state-feedback stabilization problem of the    sampled-data  system,   which  requires to find a feedback gain matrix $\hK \in \RR^{\hm \times n}$ as well as some conditions on the sampling intervals for   stability of the 
closed-loop system 
\begin{equation}    \label{linearSDE_sampled-BK}
   \d x(t) = [ A x(t)  +  \hB \hK x (t_\ast)  ] \d t+ \sum_{j=1}^m G_j x(t) \d B_j (t)
\end{equation}
for all $t \ge 0$. It is reasonable in some sense to set $\tP =\tc P $ for some $\tc >0$   due to the interrelation of the the physical and the cyber subsystems in    CPS  (\ref{SiDE-bB}), see also \cite{huang_partI,liu1988}.  Applying  Theorem \ref{Theorem-ce-linear}, we obtain a useful result on feedback stabilization of   sampled-data  system (\ref{linearSDE_sampled-BK}),  which is  formulated as a set of LMIs with prescribed $\tc >0$,  see   \cite{fridman2004,huang2009a,nagh2008} as well.

\begin{theorem}   \label{Theorem-ce-linear-BK}
Suppose that there is a pair of matrices  $Q \in \RR^{n \times n}$ and  $Y \in \RR^{ \hm \times n}$ such that $ Q > 0$ and
\begin{equation}     \label{Lyapunov-Ito-LMI-BK}
\begin{bmatrix}   Q_{11}  + 2 \bal Q&    \ast  & \cdots  & \ast  \\                                  
                             G_1 Q  & - Q  & \cdots  & 0  \\
                   \vdots  & \vdots  & \ddots  & \vdots   \\
                            G_m Q  & 0  & \cdots  & - Q  \end{bmatrix}  \le 0
\end{equation}
 for some positive $\bal $, where  $Q_{11} = Q A^T + Y^T \hB^T + AQ + \hB Y$ and entries denoted by $\ast$ can be readily inferred from symmetry of a matrix.
Let the sampling sequence $\{ t_k \}_{k \in \NN}$ satisfy (\ref{ImpulseInterval-ce-2}),  
where function $\htau : (0, 1) \to \RR_+$ is defined by (\ref{def-htau-ce-2}) 
and $q_\ast  \in (0, e^{-1} )$ is the unique root of equation (\ref{def-btau-prime-ce-2}) 
with     $\bal_b$, $\gm_1$ and $  \gm_2 $ being positive numbers such that 
\begin{eqnarray}      
&   \begin{bmatrix}   - \bal_b   \tc\, Q &   \ast  
 \\  \hB Y & - Q  \end{bmatrix}    \le 0,  \label{cp-ce-1-linear-BK} \\
&   \begin{bmatrix}         - \gm_1 Q            &            \ast                          &    \ast  & \cdots  & \ast   \\      
                      \tc \, ( A Q + \hB Y )   &    \tQ_{22} - \gm_2 \tc Q  &   0  & \cdots  &   0   \\
                      \sqrt{ \tc} \, G_1 Q     &               0                          &   - Q    & \cdots  & 0  \\
                                \vdots              &          \vdots                        & \vdots & \ddots  & \vdots   \\
                      \sqrt{ \tc} \, G_m Q    &              0                           &      0    & \cdots  &  -  Q      \end{bmatrix}     \le 0  \quad
\label{cp-ce-2-linear-BK}
\end{eqnarray}
with $ \tQ_{22} = - \tc \,( Y^T \hB^T + \hB Y )$ for some prescribed number $\tc >0$.  Then the sampled-data control system  (\ref{linearSDE_sampled-BK}) with feedback gain matrix $\hK =Y Q^{-1}$  is mean-square  exponentially stable and   is also almost surely exponentially stable.
\end{theorem}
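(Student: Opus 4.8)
The plan is to reduce Theorem \ref{Theorem-ce-linear-BK} to Theorem \ref{Theorem-ce-linear} by a change of variables together with a chain of Schur-complement and congruence manipulations on the LMIs. First I would set $P = Q^{-1}$ (well defined and positive definite since $Q > 0$), put $\hK = Y Q^{-1}$ so that $\bB = \hB\hK = \hB Y Q^{-1}$ and $F = A + \bB = A + \hB Y Q^{-1}$, and take $\tP = \tc\, P$ with the prescribed $\tc > 0$. The candidate Lyapunov functions are $V(x) = x^T P x$ and $\tV(y) = y^T\tP y$ for the physical and cyber subsystems of CPS (\ref{SiDE-bB}), exactly as in Theorem \ref{Theorem-ce-linear}.

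The first computation is to show that (\ref{Lyapunov-Ito-LMI-BK}) is equivalent to (\ref{Lyapunov-Ito-LMI}). Taking the Schur complement of (\ref{Lyapunov-Ito-LMI-BK}) with respect to the $m$ diagonal blocks $-Q < 0$ yields $Q_{11} + 2\bal Q + \sum_{j=1}^m Q G_j^T Q^{-1} G_j Q \le 0$; using $Y^T\hB^T = Q\bB^T$ and $\hB Y = \bB Q$ gives $Q_{11} = Q F^T + F Q$, and then congruence with $P = Q^{-1}$ turns the inequality into $F^T P + P F + \sum_{j=1}^m G_j^T P G_j \le -2\bal P$, which is (\ref{Lyapunov-Ito-LMI}); since $\bal > 0$ this also certifies mean-square exponential stability of the continuous-time closed-loop plant (\ref{linearSDE}) and supplies condition (\ref{bLV-bB}) for $V$. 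An entirely analogous Schur-complement-plus-congruence argument shows that (\ref{cp-ce-1-linear-BK}) is equivalent to (\ref{cp-ce-1-linear}) after substituting $\tP = \tc P$, and that (\ref{cp-ce-2-linear-BK}) is equivalent to (\ref{cp-ce-2-linear}): in the latter I would first eliminate the blocks $-Q$ that are coupled only to the first block-row through the $\sqrt{\tc}\, G_j Q$ terms, then congruence-transform the resulting $2\times 2$ block inequality by $\mathrm{diag}(P, P)$ and absorb the scalar $\tc$ into $\tP$, so that the $(1,1)$, $(1,2)$, $(2,1)$, $(2,2)$ blocks match $\sum_j G_j^T\tP G_j - \gm_1 P$, $F^T\tP$, $\tP F$, $-\bB^T\tP - \tP\bB - \gm_2\tP$ respectively.

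Once all three LMIs have been converted, the hypotheses of Theorem \ref{Theorem-ce-linear} hold with the positive constants $\bal,\bal_b,\gm_1,\gm_2$ and the positive definite matrices $P,\tP$ constructed above, while (\ref{ImpulseInterval-ce-2}) with the stated $\htau$ and $q_\ast$ is exactly the sampling condition carried over from Theorem \ref{Theorem-ce-linear}; invoking that theorem yields mean-square exponential stability, and hence, via Theorem \ref{Theorem_AlmostSure} (the linear-growth argument already recorded), almost sure exponential stability of (\ref{linearSDE_sampled-BK}) with $\hK = Y Q^{-1}$. I do not expect a genuine obstacle: the proof is bookkeeping of LMI manipulations, and the only points needing care are checking that each Schur complement is taken against a negative-definite block (guaranteed by $Q > 0$) and that the substitution $\tP = \tc P$ is applied consistently across (\ref{cp-ce-1-linear-BK})--(\ref{cp-ce-2-linear-BK}) so that the constants $\bal_b,\gm_1,\gm_2$ entering $\htau$ and $q_\ast$ are the same ones produced by the congruence transformations.
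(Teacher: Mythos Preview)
Your proposal is correct and follows essentially the same route as the paper: set $P=Q^{-1}$, $\hK=YQ^{-1}$, $\tP=\tc P$, reduce each of (\ref{Lyapunov-Ito-LMI-BK}), (\ref{cp-ce-1-linear-BK}), (\ref{cp-ce-2-linear-BK}) to (\ref{Lyapunov-Ito-LMI}), (\ref{cp-ce-1-linear}), (\ref{cp-ce-2-linear}) via Schur complement and congruence by $P$ (resp.\ $\mathrm{diag}\{P,P\}$), and invoke Theorem \ref{Theorem-ce-linear}. The only superfluous step is your separate appeal to Theorem \ref{Theorem_AlmostSure}: Theorem \ref{Theorem-ce-linear} already delivers almost sure exponential stability.
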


\begin{proof}
Let   $ P = Q^{-1}$ and $\tP = \tc  P$. Hence $P >0$ and $\tP >0$.  By the Schur complement lemma,  LMI (\ref{Lyapunov-Ito-LMI-BK}) produces
\begin{multline*}
  Q_{11}  + \sum_{j=1}^m QG_j^T P G_j Q   + 2 \bal Q \le 0      \quad  \Leftrightarrow  \\ 
   Q (A + \hB \hK )^T +   (A + \hB \hK ) Q +  \sum_{j=1}^m QG_j^T P G_j Q  \le - 2 \bal  Q . 
\end{multline*}
Premultiplying  by  $P$   and postmultiplying   by  $P$ the  LMI above gives the   LMI (\ref{Lyapunov-Ito-LMI})  with $ F = A + \hB \hK$.
By  the Schur complement lemma,  the LMIs (\ref{cp-ce-1-linear-BK}) and (\ref{cp-ce-2-linear-BK}) imply
\begin{eqnarray*}
 &       Q \hK^T \hB^T P \hB \hK Q - \bal_b \tc \, Q  \le 0 ,   \\
 &   \begin{bmatrix}  Q \sum_{j=1}^m G_j^T \tP G_j Q - \gm_1 Q & \ast    \\
       \tc \, ( A Q + \hB Y )    &        \tQ_{22} -  \gm_2 \tc \, Q  \end{bmatrix}  \le 0 .
\end{eqnarray*}
Premultiplying  by  $P$  and postmultiplying  by  $P$ the first one gives  (\ref{cp-ce-1-linear}) while premultiplying by  $ {\rm diag} \{ P, P \}$  and postmultiplying   by  $ {\rm diag} \{ P, P \}$ the second one yields   (\ref{cp-ce-2-linear}) with $\tP  = \tc \, P$. From Theorem \ref{Theorem-ce-linear}, the sampled-data control system  (\ref{linearSDE_sampled-BK}) with   $\hK =Y Q^{-1}$  is mean-square exponentially stable and  is also almost surely exponentially stable.
\end{proof}

\begin{remark}   \label{Remark-control-design}
  As an implementation of Theorem   \ref{Theorem-ce-linear-BK},  we propose an algorithm in the form of  generalized eigenvalue problems and LMIs \cite{boyd1994,boyd2004,gahinet1995}, which  finds a feasible solution to the set of LMIs (\ref{Lyapunov-Ito-LMI-BK})-(\ref{cp-ce-2-linear-BK}).  Assume  $m=1$ and $G_1 = G$ for simplicity.  
\begin{itemize}
\item[1)] Compute the maximum Lyapunov exponent   $1 / \lmd$ by solving  the  generalized eigenvalue minimization problem 
\begin{equation*}   
   \min \, \lmd \;\; s.t. \; \bQ  >0, \; \begin{bmatrix}  \bQ & 0    \\
                          0  & 0     \end{bmatrix}   <  \lmd  
\begin{bmatrix}  - \bQ_{11}  &    \ast  \\                                  
                            -G  \bQ  & \bQ    \end{bmatrix} 
\end{equation*}
 with  $\bQ_{11} = \bQ A^T + \bY^T \hB^T + A \bQ + \hB \bY$. 

\item[2)] Choose Lyapunov exponent    $2 \bal <1 / \lmd$ and obtain matrices $Q> 0$ and $Y$ by solving the LMI (\ref{Lyapunov-Ito-LMI-BK}). 

\item[3)] Find $ \bal_b$ by solving the LMI  (\ref{cp-ce-1-linear-BK})
with   $Q>0$ and $Y$ obtained in the previous step as well as  prescribed $\tc >0$.

\item[4)] Find $\gm_1 $ and $ \gm_2 $  by solving   the LMI (\ref{cp-ce-2-linear-BK})
with  $Q>0$ and $Y$ obtained in  step 2) as well as  prescribed $\tc >0$.
\end{itemize}
The obtained matrices $Q>0$, $Y$ and $\begin{bmatrix} \bal & \bal_b & \gm_1 & \gm_2 & \tc \end{bmatrix}$  not only produce  a feasible solution   to  the set of LMIs (\ref{Lyapunov-Ito-LMI-BK})-(\ref{cp-ce-2-linear-BK}) and   the state-feedback stabilization problem of sampled-data system (\ref{linearSDE_sampled-BK}) but also provide  starting points to find some other feasible solutions with larger allowlable sampling intervals (\ref{ImpulseInterval-ce-2}) using  toolboxes such as \cite{gahinet1995,gotb2011}. For a linear deterministic system (viz. system (\ref{linearSDE_sampled-BK}) with $G =0$),    $\tc $ can be, instead of   a prescribed number,  one of  the  decision variables $ \bal_c = \bal_b \tc >0$, $ \gm_c = \gm_2 \tc >0$ and $\tc >0$ in the LMIs  (\ref{cp-ce-1-linear-BK})-(\ref{cp-ce-2-linear-BK}), solving which gives positives   $\bal_b =   \bal_c  / \tc$,  $ \gm_2  =  \gm_c /  \tc $ and $ \tc$.
Notice that
 our  control design method can be applied  with Theorem \ref{Theorem-ce-2}   to nonlinear systems as well, see Example 2 below. 
\end{remark}

\section{Illustrative examples}

In this section, we illustrate the application of our proposed results with numerical examples in the literautre.

{\bf Example 1.}  Stabilization of stochastic systems by sampled-data control has been studied in quite a few works. Here we consider  two specific   cases of linear sampled-data stochastic system (\ref{linearSDE_sampled})
with  $m=1$. 
In  one case,
\begin{equation}   \label{ex1-sub1}
 A  = \begin{bmatrix} 1 & -1 \\  1 & -5 \end{bmatrix},  \;
 G = \begin{bmatrix} 1 &  1 \\  1 & -1 \end{bmatrix},  \;
 \bB= \begin{bmatrix} -10 & 0 \\ 0 & 0\end{bmatrix} ,  
\end{equation}
 and in the other,
\begin{equation}   \label{ex1-sub2}
 A  = \begin{bmatrix} -5 & -1 \\  1 & 1 \end{bmatrix},  \;
 G = \begin{bmatrix} -1 &  -1 \\  -1 & 1 \end{bmatrix},   \;
  \bB= \begin{bmatrix}  0 & 0 \\  0 & -10 \end{bmatrix} .
\end{equation}

\begin{center}
\begin{figure}[!t]

\vspace{-0.2cm}

\hspace{-0.9cm}
\includegraphics[width=10.5cm]{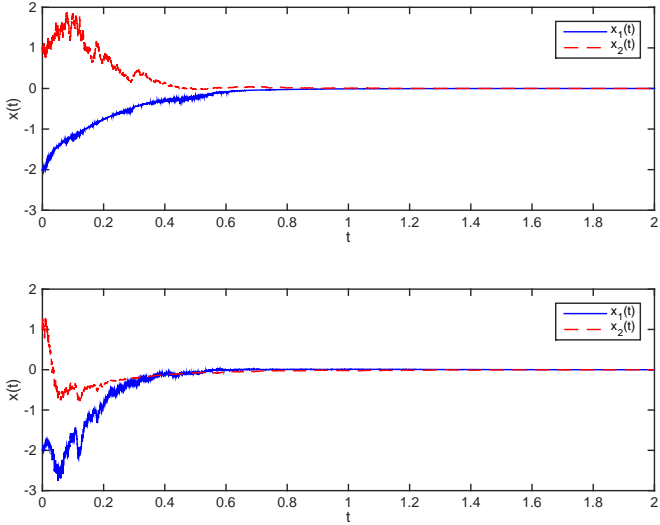}  

\vspace{-0.5cm}

 \caption{ 
A trajectory sample of   system  (\ref{ex1-sub1-control}) with  $\hK = [ -5.5085  \; \, -0.1520 ]$  (above) and that of system  (\ref{ex1-sub2-control})   with  $\hK =[ 0.1738 \; \,  -5.5639]$ (below). \label{Ex1_traj} 
}
\end{figure}
 \end{center} 

\vspace{-0.8cm}

Sampled-data stochastic systems (\ref{ex1-sub1}) and (\ref{ex1-sub2}) with sampling period $\tau >0$ have been studied in \cite{mao2013,mao2014,you2015}.  It is observed in \cite[Example 6.1]{you2015} that, by   \cite[Corollary 5.4]{you2015} with $N=1$, $Q = I_2$, $K_1= 5.236$, $K_2 = \sqrt{2}$, $K_3 =10$, $c_1 = c_2 = \lmd_1=1$,   $\lmd_2 =4$ and $\lmd_3=8$,  both the sampled-data systems (\ref{ex1-sub1}) and  (\ref{ex1-sub2}) are mean-square exponentially stable and   also almost surely exponentially stable if the sampling period 
$ \tau < \tau^\ast = 0.0074$, a better bound than  those  in \cite{mao2013,mao2014}. 

Let us apply Theorem \ref{Theorem-ce-linear} to  sampled-data stochastic systems  (\ref{ex1-sub1}) and (\ref{ex1-sub2}), respectively. 
For system  (\ref{ex1-sub1}), LMIs (\ref{Lyapunov-Ito-LMI}), (\ref{cp-ce-1-linear}) and (\ref{cp-ce-2-linear})  are satisfied with $\bal =4.3957$,  $\bal_b= 241.9335$, $\gm_1= 1.2491$, $\gm_2 = 60.5024$,  $P = \begin{bmatrix}   2.2173   &  0.8212 \\  0.8212 & 6.1228  \end{bmatrix}$ and 
$ \tP = \begin{bmatrix}   0.9193  &  -0.0046 \\  -0.0046 &   0.0178  \end{bmatrix} $. According to Theorem \ref{Theorem-ce-linear}, sampled-data system  (\ref{ex1-sub1}) is mean-square exponentially stable and is also almost surely exponentially stable if
\begin{equation*}
 0 < \uDt \le \oDt < \htau (q_\ast) = 0.0116. 
\end{equation*}
Similarly, for system  (\ref{ex1-sub2}), the LMIs 
 are satisfied with $\bal =4.4352$,  $\bal_b =6.5438$, $\gm_1= 57.5429$, $\gm_2=  61.6297$,  $P = \begin{bmatrix}   73.4547 &  -2.3459 \\  -2.3459 & 14.5076  \end{bmatrix}$ and 
$ \tP = \begin{bmatrix}58.3763  &9.0426 \\ 9.0426 &  240.5279  \end{bmatrix} $. It immediately follows from Theorem \ref{Theorem-ce-linear} that sampled-data system  (\ref{ex1-sub2}) is mean-square exponentially stable and is also almost surely exponentially stable  if 
\begin{equation*}
 0 < \uDt \le \oDt < \htau (q_\ast) = 0.0102. 
\end{equation*}
Our method has improved the existing results.

\begin{center}
\begin{figure}[!t]

\vspace{-0.75cm}

\includegraphics[width=8.5cm]{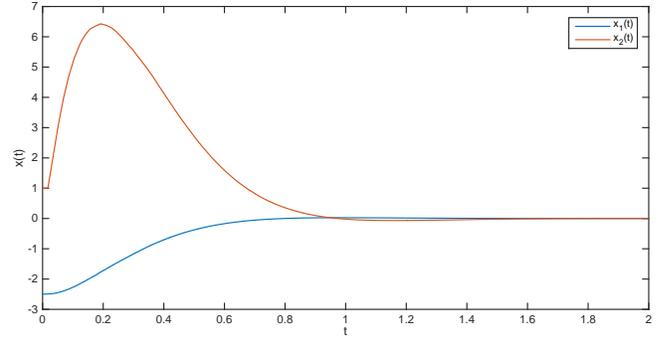}  

\vspace{-1cm}

 \caption{ The trajectory   of   system  (\ref{ex2}) with  $\hK = [ -27.5776 \;\,   -8.2817]$. \label{Ex2_traj} 
}
\end{figure}
 \end{center} 

\vspace{-0.8cm}

 Furthermore, as application of Theorem   \ref{Theorem-ce-linear-BK} and the control design method in Remark \ref{Remark-control-design}, we study the state-feedback stabilization problems of sampled-data system (\ref{linearSDE_sampled-BK}) with 
\begin{eqnarray}   
& A  = \begin{bmatrix} 1 & -1 \\  1 & -5 \end{bmatrix},  \;
 G = \begin{bmatrix} 1 &  1 \\  1 & -1 \end{bmatrix},  \;
 \hB= \begin{bmatrix} 1 \\  0\end{bmatrix} 
    \label{ex1-sub1-control}    \\
& {} {\rm and} \;\; A  = \begin{bmatrix} -5 & -1 \\  1 & 1 \end{bmatrix},  \;
 G = \begin{bmatrix} -1 &  -1 \\  -1 & 1 \end{bmatrix},   \;
  \hB= \begin{bmatrix}    0 \\ 1 \end{bmatrix} ,  \label{ex1-sub2-control}
\end{eqnarray}
respectively, see \cite{mao2013,mao2014,you2015} as well as \cite{huang2009a}.

For  system (\ref{ex1-sub1-control}),   the set of LMIs (\ref{Lyapunov-Ito-LMI-BK})-(\ref{cp-ce-2-linear-BK}) is satisfied with
$ \bal =3.6536,\bal_b =4.2422, \gm_1 =26.2456, \gm_2 =26.7130, \tc =7.2691$, $Q = \begin{bmatrix} 0.2593 &  0.0249   \\
    0.0249  &  0.2449  \end{bmatrix}$ and $Y = [ -1.4322   \;  -0.1744]$, which yields  feedback gain $\hK = Y Q^{-1}=[ -5.5085  \;\,  -0.1520 ]$ with $| \hK | = 5.5106 < 10$ smaller than the one in \cite{mao2013,mao2014,you2015}. But, by Theorem   \ref{Theorem-ce-linear-BK}, sampled-data control system (\ref{ex1-sub1-control}) with feedback gain matrix $\hK = [ -5.5085  \; \, -0.1520 ]$  is mean-square and almost surely exponentially stable if   the sampling intervals satisfy
\begin{equation}  \label{ex1-sampling}
 0 < \uDt \le \oDt < \htau (q_\ast) = 0.0235,
\end{equation} 
which is much larger than the bound $ \tau^\ast = 0.0074$ in \cite{you2015}.

For system (\ref{ex1-sub2-control}),  the
 LMIs (\ref{Lyapunov-Ito-LMI-BK})-(\ref{cp-ce-2-linear-BK}) hold with
$ \bal =3.7157$, $\bal_b =5.7100$, $\gm_1 =18.8231$, $\gm_2 =29.6417$, $ \tc =5.5547$, $Q = \begin{bmatrix} 194.7706 & -20.0691 \\  -20.0691 &  207.2345 \end{bmatrix}$ and $Y = [ 0.1455 \; \, -1.1565] \times 10^3$, which, by  Theorem   \ref{Theorem-ce-linear-BK},  implies  both the mean-square exponential stability and the almost sure exponential stability of the sampled-data control system  (\ref{ex1-sub2-control}) with feedback gain matrix $\hK = Y Q^{-1}=[ 0.1738 \; \, -5.5639]$. This  produces not only  smaller  gain  $| \hK | = 5.5667 < 10$ but also much larger allowable sampling intervals (\ref{ex1-sampling}) as well. 

Our   design method has improved the existing results significantly. Trajectory samples of the closed-loop  systems  (\ref{ex1-sub1-control}) and  (\ref{ex1-sub2-control}) with sampling period $\Delta t =0.0234< \htau (q_\ast) = 0.0235$ are shown in Figure \ref{Ex1_traj}, where   $x(0) =x_0= [ -2 \; \,1]^T$ cf. \cite{,mao2014,you2015}.

{\bf Example 2.} Let us illustrate application of our design method  to nonlinear systems   with a planar system 
 \cite{tsinias1991,qian2002}
\begin{equation*}
 \begin{bmatrix}  \dot{x}_1 \\ \dot{x}_2 \end{bmatrix} = 
\begin{bmatrix}  x_2 +\frac{1}{4} ( x_1 + x_1 \sin (u x_2)) \\ u + x_1 \sin (u x_2)  \end{bmatrix},
\end{equation*}
where $x = [x_1 \; \, x_2]^T \in \RR^2$ and $u \in \RR$ are the system state and input, respectively. It has been shown in \cite{tsinias1991} that the system can be globally stabilized by a linear state-feedback law $u = \hK x$ with some  gain  matrix  $\hK \in \RR^{1 \times 2}$.  The implemention of such a controller using a sampler and
ZOH device  leads to  a  specific case of sampled-data control system  (\ref{sampledSDE-bB}) in which  
\begin{multline}
 \barf (x) =   \bA x + \phi(x) ,  \;\;   \bB=\hB \hK,  \;\;  \barg (x) \equiv 0 ,   \;\;  \forall  x \in \RR^2    \\
 \bA = \begin{bmatrix}  \frac{1}{4}  & 1   \\   0 &  0   \end{bmatrix}, \;  \phi(x) =  \begin{bmatrix}   \frac{1}{4}  x_1 \sin (\hK x  x_2)  \\ x_1 \sin ( \hK  x x_2)  \end{bmatrix},  \;
\hB =  \begin{bmatrix}    0 \\ 1      \end{bmatrix}.   \label{ex2}
\end{multline}
System (\ref{ex2})  satisfies the local Lipschitz condition and the linear growth condition since, given   matrix $Q>0$,
 \begin{align*}
  &    \phi^T  (x) Q \phi (x)   =   \begin{bmatrix}   \frac{1}{4} & 1 \end{bmatrix} Q \begin{bmatrix}   \frac{1}{4} & 1 \end{bmatrix}^T x_1^2  \sin^2 (\hK x  x_2)      \\
  & \le \begin{bmatrix}   \frac{1}{4} & 1 \end{bmatrix} Q \begin{bmatrix}   \frac{1}{4} & 1 \end{bmatrix}^T x_1^2    =   x^T E_1^T Q E_1 x  \le \lmd_M (Q) | E_1|^2 |x|^2
\end{align*}
for all $x \in \RR^2$, where $E_1 =   \begin{bmatrix} \frac{1}{4}   & 0 \\  1  & 0 \end{bmatrix} $.  Given $V(x) $ and $ \tV (x) $ as Theorem \ref{Theorem-ce-2}, the conditions  (\ref {bLV-bB}), (\ref{cp-ce-1}), (\ref{cp-ce-2}) are specified as  a set of LMIs as follows
\begin{align*}
 &  \dot{V} (x) =   x^T (  \tA^T P +  P \tA    ) x + 2 x^T P \phi (x) \\
 & \le x^T   (    \tA^T P +  P \tA  + b P   ) x  +  b^{-1} \phi^T (x) P \phi (x) \\
 &  \le x^T  ( \tA^T P +  P \tA  + b P  + b^{-1} E_1^T P E_1 ) x \\
 &  \le -2 \bal V(x)  \\
 & \Rightarrow \quad   \tA^T P + P \tA   + b P  + b^{-1} E_1^T P E_1 \le - 2\bal P, \\
 & \hspace{4.55cm} {}  \bB^T P \bB   
     \le \bal_b \tP, \\
 & \;\; \begin{bmatrix} c^{-1} E_1^T \tP E_1  &  \tA^T \tP \\
   \tP \tA  &  - \bB^T \tP - \tP \bB + c \tP \end{bmatrix}  \le \begin{bmatrix}  \gm_1 P & 0 \\ 0 &  \gm_2 \tP \end{bmatrix} ,
\end{align*} 
where $\tA = \bA + \bB$ and both $b, c$ are  positive numbers.

Applying  our  control design method presented in   Remark \ref{Remark-control-design} with the set of LMIs aoove,  we   obtain  state-feedback gain matrix $\hK = [ -27.5776 \;\,   -8.2817] $, which, therefore, gives 
$ \bB  = \hB \hK 
= \begin{bmatrix}  0 & 0 \\ -27.5776 & -8.2817 \end{bmatrix}$ and $\tA  = \bA + \bB 
= \begin{bmatrix} 0.25 &   1 \\
  -27.5776  &  -8.2817 \end{bmatrix}$.
 The set of LMIs   is satisfied with $\bal =3.4369, \bal_b=   0.1507, \gm_1 =  137.2912 , \gm_2 =  142.0755, b=    0.4632,    c=37.5579$, $P = \begin{bmatrix} 3.0050 &   0.4509 \\  0.4509  &  0.0983 \end{bmatrix}$ and $\tP = \begin{bmatrix} 667.5859 & 161.7904 \\ 161.7904  & 45.8086 \end{bmatrix} $. By Theorem  \ref{Theorem-ce-2},   sampled-data control system (\ref{ex2}) with feedback gain $\hK = [ -27.5776 \;\,   -8.2817] $  is mean-square exponentially stable and is also almost surely exponentially stable if the sampling intervals satisfy 
$$ 0 < \uDt \le \oDt < \htau (q_\ast) = 0.0175.
$$
   The trajectory of  the controlled system  (\ref{ex2}) is shown in Figure \ref{Ex2_traj}, where   sampling period  $\Delta t =0.0174< \htau (q_\ast) = 0.0175$ and  initial value $x(0) =x_0= [ -2.5 \;\, 1]^T$.

\section{Conclusion and future work}

In this paper, we have presented  the cyber-physical model of a computer-mediated control system, which not only provides a holistic view but also reveals the inherent relationship between the physical system and the cyber system. Such cyber-physical dynamics can be expressed by our canonic form (\ref{SiDE-xy}) of CPSs, which is an extension of    \cite[Eq.(2.1)]{huang_partI} for synthesis of CPSs. We have established a Lyapunov stability theory for the synthetic CPSs  and applied it to stability analysis and feedback stabilization of computer-mediated control systems, which are typically  known as  sampled-data control systems.
This paper has contructed a foundational theory of computer-mediated control systems.

Our CPS theory can be further developed by many techniques of  Lyapunov functions/functionals \cite{fridman2010,nagh2008,huang202x}   such as constructing a Lyapunov function/functional for the whole CPS that could  improve our results   by exploiting the structure of  the composition of the subsystems \cite{liu1988,liu1992}. 
As application of our theory to sampled-data control systems, 
 we have addressed the keys questions in  two main approaches and revealed their equivalence and intrinsic relationship.
We have not only developed stability criteria but also proposed control design methods for state-feedback stabilization of sampled-data  systems.  In practice, feedback control  is usually based on an observer that is designed to reconstruct  the state using measurements of the input and the output of the system \cite{grizzle2007,nikoukhah1998,qian2002,rios2016}.  Our canonic form  (\ref{SiDE-xy}) of synthetic CPSs is able to include the dynamics of    observers as well as  impluse effects such as those in a robot model \cite{grizzle2007}. This is important for nonlinear control systems in which the so-called separation principle  may not hold  \cite{khalil2002,qian2002}.

 In this paper, we  have laid a theoretic foundation for computer-mediated control systems and initiated a system science for
CPSs.   This   arouses many interesting and challenging problems. For example, one can naturally generalize the time-triggered mechanism in CPS (\ref{SiDE-xy}) to an event-triggered mechanism \cite{huang202x} and the SiDE to a stochastic impulsive differential-algebraic equation (SiDAE) \cite{huang2011}  so that the CPS  can  encompass event-triggered sampling/control \cite{heemels2012,huang202x,tanwani2016} and equality constraints \cite{huang2011,nikoukhah1998} on both the physical and the cyber sides.  As an example, one of such  generalizations of synthetic CPS (\ref{SiDE-xy}) can be as follows
\begin{subequations}  \label{SiDAE-xy}
\begin{align}
&E_x \mathrm{d}  x(t) = f(x(t), y(t), t)\mathrm{d}t + g(x(t), y(t), t )\mathrm{d} B(t)    \label{SDAE_x} \\
& \hspace{4.8cm} {} t \in [ 0, \infty) \setminus \{t_k\}_{k \in \NN}  \nonumber   \\
&E_y \d y(t)  = \tilde{f} (x(t), y(t), t) \d t + \tilde{g}( x(t), y(t), t) \d B (t)   \label{SDAE_y}  \\
& \hspace{4.8cm} {} t \in [ 0, \infty) \setminus \{t_k\}_{k \in \NN}  \nonumber   \\
& \Delta (x_{t_k^-}, y_{t_k^-}, k) := x(t_k) -x(t_k^-)  \nonumber  \\
 & \; \;  {}   =\left\{  \begin{array}{cc}  h  (x_{t_k^-}, y_{t_k^-},   \bar{ \xi } (k ),  k), &   \kappa_x (x_{t_k^-}, y_{t_k^-}, k) > 0  \\
      0, & \kappa_x (x_{t_k^-}, y_{t_k^-}, k)  \le 0  \end{array}  \right.   \label{event-Impulse_x}   \\
&\tilde{\Delta} (x_{t_k^-}, y_{t_k^-},  k ) := y(t_k) - y(t_k^-)  \nonumber  \\
 & \; \; {}       = \left\{  \begin{array}{cc}  \tilde{h}  (x_{t_k^-}, y_{t_k^-},    \bar{ \xi } (k ),  k), &   \kappa_y (x_{t_k^-}, y_{t_k^-}, k) > 0  \\
      0, & \kappa_y (x_{t_k^-}, y_{t_k^-}, k)  \le 0  \end{array}  \right.   \label{event-Impulse_y} 
\end{align}
\end{subequations}
for all $k \in \NN$, where $E_x \in \RR^{n \times n}$ and $E_y \in \RR^{ q \times q}$ are constant matrices with $0< rank (E_x) \le n$ and $0< rank (E_y) \le q$, respectively;  $h : C  ([t_{k-1}, t_k); \RR^n) \times C  ([t_{k-1}, t_k); \RR^q)  \times \RR^n \times \NN \to \RR^n $,  $\tilde{h}  : C  ([t_{k-1}, t_k);  \RR^n) \times C  ([t_{k-1}, t_k); \RR^q)  \times \RR^n \times \NN  \to \RR^q $, $ \kappa_x: C  ([t_{k-1}, t_k);  \RR^n) \times C  ([t_{k-1}, t_k); \RR^q)   \times \NN \to \RR$ and $ \kappa_y: C  ([t_{k-1}, t_k);  \RR^n) \times C  ([t_{k-1}, t_k); \RR^q)   \times \NN \to \RR$ are measurable functions. 
Clearly,  the generalization (\ref{SiDAE-xy}) of CPSs has a much wider range of applications since differential-algebraic
equations   describe a great many natural
phenomena and event-triggered mechanisms of sampling/control are  increasingly popular in wired and wireless networked control systems  \cite{heemels2012,huang2011,huang202x,tanwani2016}.
 Our CPS  theory can be extended to various dynamical systems such as stochastic hybrid  systems \cite{teel2014} including stochastic systems with time delay, impulses as well as switching \cite{huang2009,huang2009a,huang202x} and  distributed parameter systems \cite{curtain1995,krstic2008}, in which stochastic stabilization \cite{higham2000,huang2015,mao2007book} is one of the many interesting topics. 
Moreover, the proposed CPS theory may be adapted to special control systems such as control systems
with actuator saturation \cite{fridman2004}, sliding mode control systems \cite{huang2010PhD}, sampled-data systems with controlled sampling as well as control systems with stabilizing delay \cite{seuret2012}.  It  is also of theoretic and practical importance to study  a CPS that  involves multi-scale processes in either or both of the  physical and the cyber sides \cite{huang2015armax,huang2016}, which could be a challenge. Just name a few  among   future work  to develop the systems science for CPSs. 

\section*{Acknowledgement}
The author is supported  by  National Natural Science Foundation of China (No.61877012).
The author    would like to thank Prof. D. Casta{\~n}{\'o}n   and the anonymous A.E. for their   careful reading and helpful comments on a previous version of \cite{huang_partI}.
\section*{Appendix}


  {\it Proof of Lemma \ref{existence_n_uniqueness}: }
Since system (\ref{Compact-z}) satisfies the locbal Lipschitz   condition (\ref{localLipschitz-z}) and linear growth condition (\ref{lineargrowth-z}), according to \cite[Theorem 3.4, p56]{mao2007book}, there exists a unique solution $z(t) = z(t; z_0)$  to SiDE (\ref{Compact-z}) on $t \in [t_0, t_1)$ and the solution belongs to $\mm^2 ([t_0, t_1); \RR^{n+q})$. Notice that $\xi (1)$ is $\ff_{t_1}$-measurable and independent of $\{ z(t): t  \in [t_0, t_1) \}$ while $H_F ( z_{t_1^-}, 1)$  and $\bH_G ( z_{t_1^-}, 1)$ are all $\ff_{t_1^-}$-measurable. By virtue of the continuity of functions $H_F ( \cdot, k)$  and $\bH_G (\cdot, k)$   with respect to their first arguments for all $k \in \NN$, there exists a unique solution $z(t_1)$ to  (\ref{Compact-z}) at $t = t_1$. Moreover, (\ref{Impulse_z}) and (\ref{localLipschitz-z}) imply that the second moment of $z(t_1)$ is finite. And, again, according to \cite[Theorem 3.4, p56]{mao2007book}, one has that there is a unique right-continuous solution $z(t)$  to  (\ref{Compact-z}) on $t \in [t_0, t_2)$ and the solution belongs to $\mm^2 ([t_0, t]; \RR^{n+q})$ for $t \in [t_0, t_2)$. Recall that $\{ t_k \}_{k \in \NN}$ with $t_1 > t_0 :=0$ is a strictly increasing sequence such that  $0<\underline{\Delta t} :=\inf_{k \in \NN} \{ t_k - t_{k-1} \} \le \overline{\Delta t} :=\sup_{k \in \NN} \{ t_k -t_{k-1} \} < \infty$  and hence $t_k \to \infty$ as $k \to \infty$. By induction, one has that there exists a unique (right-continuous) solution $z(t)$ to SiDE (\ref{Compact-z}) and the solution belongs to  $\mm^2 ([0, T]; \RR^{n+q})$ for all $T \ge t \ge 0$. Moreover,   according to \cite[Theorem 4.3, p61]{mao2007book}, $x(t)$ is continuous on each $t_k$ and hence on $t \in [0, T]$ for all $T \ge 0$ since (\ref{localLipschitz-x}) implies that  subsystem (\ref{SDE_x}) satisfies the linear growth conditon with respect to $x$ on each $t_k$ and $k \in \NN$. \hfill $\Box$

\vfill

\end{document}